\def\input@path{{styles/}{../styles/}} 
\newcommand{\SubmitVer}[1]{}
\newcommand{\NotSubmitVer}[1]{#1}
\newcommand{\SubmitVer}[1]{#1}
\newcommand{\NotSubmitVer}[1]{}
         \providecommand\BibTeX{{%
               \normalfont B\kern-0.5em{\scshape i\kern-0.25em
                  b}\kern-0.8em\TeX}}}%
      \providecommand\BibTeX{{%
            \normalfont B\kern-0.5em{\scshape i\kern-0.25em
               b}\kern-0.8em\TeX}}}%
\providecommand{\BibLatexMode}[1]{}
\DeclareFontFamily{U}{mathb}{\hyphenchar\font45}
\DeclareFontShape{U}{mathb}{m}{n}{
      <5> <6> <7> <8> <9> <10> gen * mathb
      <10.95> mathb10 <12> <14.4> <17.28> <20.74> <24.88> mathb12
      }{}
\DeclareSymbolFont{mathb}{U}{mathb}{m}{n}
\DeclareMathSymbol{\convolution}   {2}{mathb}{"0A}
\DeclareMathSymbol{\Asterisk}      {2}{mathb}{"06}
   \titleformat{\paragraph}[runin]
   {\normalfont\bfseries}
   {\theparagraph}
   {1em}
   {\addperiod}
   \newcommand{\addperiod}[1]{#1.}
\providecommand{\IfPrinterVer}[2]{#2}%
   \numberwithin{figure}{section}%
   \numberwithin{table}{section}%
   \numberwithin{equation}{section}%
\newcommand{\SarielComp}[1]{}
\newcommand{\NotSarielComp}[1]{#1}%
\newcommand{\SarielComp}[1]{#1}%
\newcommand{\NotSarielComp}[1]{}%
\newcommand{\Of}{\Mh{\mathcal{O}}}%
\newcommand{\BSet}{\Mh{{B}}}%
\newcommand{\EBSet}{\Mh{{B^+}}}%
   \theoremstyle{acmplain}%
   \newtheorem{claim}{Claim}[section]%
   \theoremstyle{plain}%
   \newtheorem{theorem}{Theorem}[section]
   \newtheorem{lemma}[theorem]{Lemma}
   \newtheorem{corollary}[theorem]{Corollary}
   \newtheorem{claim}[theorem]{Claim}%
   \newtheorem{proposition}[theorem]{Proposition}
   \theoremstyle{plain}%
   \newtheorem*{remark:unnumbered}[FakeCounter]{Remark}%
   \newtheorem*{remarks}[theorem]{Remarks}%
   \newtheorem{remark}[theorem]{Remark}%
   \newtheorem{definition}[theorem]{Definition}
   \newtheorem*{defn:unnumbered}[FakeCounter]{Definition}
   \newcommand{\myqedsymbol}{\rule{2mm}{2mm}}
   \theoremstyle{nonumberplain}%
   \newtheorem{proof}{Proof:}%
\newcommand{\eps}{\varepsilon}%
\newcommand{\epsA}{\Mh{\xi}}
\newcommand{\epsR}{\Mh{\vartheta}}%
\newcommand{\epsRA}{\Mh{\psi}}%
\newcommand{\ceil}[1]{\left\lceil {#1} \right\rceil}
\newcommand{\HLinkShort}[2]{\hyperref[#2]{#1\ref*{#2}}}
\newcommand{\HLink}[2]{\hyperref[#2]{#1~\ref*{#2}}}
\newcommand{\HLinkPage}[2]{\hyperref[#2]{#1~\ref*{#2}%
      $_\text{p\pageref{#2}}$}}
\newcommand{\HLinkPageOnly}[1]{\hyperref[#1]{Page~\refpage*{#1}%
      $_\text{p\pageref{#1}}$}}
\newcommand{\HLinkSuffix}[3]{\hyperref[#2]{#1\ref*{#2}{#3}}}
\newcommand{\HLinkPageSuffix}[3]{\hyperref[#2]{#1\ref*{#2}%
      #3$_\text{p\pageref{#2}}$}}
\newcommand{\seclab}[1]{\label{sec:#1}}
\newcommand{\secref}[1]{\HLink{Section}{sec:#1}}
\newcommand{\corlab}[1]{\label{cor:#1}}
\newcommand{\corref}[1]{\HLink{Corollary}{cor:#1}}%
\providecommand{\deflab}[1]{\label{def:#1}}
\newcommand{\defref}[1]{\HLink{Definition}{def:#1}}
\newcommand{\apndlab}[1]{\label{apnd:#1}}
\newcommand{\apndref}[1]{\HLink{Appendix}{apnd:#1}}
\newcommand{\clmlab}[1]{\label{claim:#1}}
\newcommand{\clmref}[1]{\HLink{Claim}{claim:#1}}
\newcommand{\itemlab}[1]{\label{item:#1}}
\newcommand{\itemref}[1]{\HLinkSuffix{}{item:#1}{}}
\newcommand{\nitemref}[1]{\ref{item:#1}}
\newcommand{\itempref}[1]{\HLinkSuffix{(}{item:#1}{)}}
\newcommand{\remlab}[1]{\label{rem:#1}}
\newcommand{\remref}[1]{\HLink{Remark}{rem:#1}}%
\newcommand{\defrefY}[2]{\hyperref[def:#2]{#1}}
\newcommand{\tbllab}[1]{\label{table:#1}}
\newcommand{\tblref}[1]{\HLink{Table}{table:#1}}
\newcommand{\lemlab}[1]{\label{lemma:#1}}
\newcommand{\lemref}[1]{\HLink{Lemma}{lemma:#1}}%
\newcommand{\proplab}[1]{\label{prop:#1}}
\newcommand{\propref}[1]{\HLink{Proposition}{prop:#1}}%
\newcommand{\thmlab}[1]{{\label{theo:#1}}}
\newcommand{\thmref}[1]{\HLink{Theorem}{theo:#1}}
\newcommand{\nthmref}[1]{Theorem~\ref{theo:#1}}
\providecommand{\eqlab}[1]{}%
\renewcommand{\eqlab}[1]{\label{equation:#1}}
\newcommand{\Eqref}[1]{\HLinkSuffix{Eq.~(}{equation:#1}{)}}
\providecommand{\Mh}[1]{{#1}}%
   \renewcommand{\IfPrinterVer}[2]{#1}%
\renewcommand{\Mh}[1]{{\textcolor{red}{#1}}}%
\definecolor{blue25emph}{rgb}{0, 0, 11}
\newcommand{\emphic}[2]{%
   \textcolor{blue25emph}{%
      \textbf{\emph{#1}}}%
   \index{#2}}
\definecolor{almostblack}{rgb}{0, 0, 0.3}
\newcommand{\emphw}[1]{{\textcolor{almostblack}{\emph{#1}}}}%
\newcommand{\emphi}[1]{\emphic{#1}{#1}}%
\newcommand{\emphOnly}[1]{\emph{\textbf{#1}}}%
\definecolor{blue25emph}{rgb}{0, 0, 11}
\renewcommand{\emphic}[2]{%
   \textcolor{blue25emph}{%
      \textbf{\emph{#1}}}%
   \index{#2}}
\renewcommand{\emphOnly}[1]{\emph{\textcolor{blue25emph}{{#1}}}}
\newcommand{\IntRange}[1]{\mleft[ #1 \mright]}
\newcommand{\IRX}[1]{\IntRange{#1}}%
\DeclareFontFamily{U}{BOONDOX-calo}{\skewchar\font=45 }
\DeclareFontShape{U}{BOONDOX-calo}{m}{n}{
  <-> s*[1.05] BOONDOX-r-calo}{}
\DeclareFontShape{U}{BOONDOX-calo}{b}{n}{
  <-> s*[1.05] BOONDOX-b-calo}{}
\DeclareMathAlphabet{\mathcalb}{U}{BOONDOX-calo}{m}{n}
\SetMathAlphabet{\mathcalb}{bold}{U}{BOONDOX-calo}{b}{n}
\DeclareMathAlphabet{\mathbcalb}{U}{BOONDOX-calo}{b}{n}
\renewcommand{\Re}{\mathbb{R}}%
\newcommand{\NN}{\mathbb{N}}%
\newcommand{\PS}{\Mh{P}}%
\newcommand{\PA}{\Mh{X}}%
\newcommand{\PB}{\Mh{Y}}%
\newcommand{\PC}{\Mh{Z}}%
\newcommand{\Family}{\Mh{\mathcal{F}}}%
\newcommand{\degY}[2]{\mathrm{deg}\pth{#1, #2}}
\newcommand{\distC}{\Mh{\mathsf{d}}}
\newcommand{\distSetY}[2]{\Mh{\mathsf{d}}\pth{#1,#2}}
\newcommand{\diamX}[1]{\mathrm{diam}\pth{#1}}%
\newcommand{\diamY}[2]{\Mh{\mathrm{diam}}_{#1}\pth{#2}}%
\newcommand{\SSet}{\Mh{S}}%
\newcommand{\Cover}{\Mh{\mathcal{C}}}%
\newcommand{\QS}{\Mh{Q}}%
\newcommand{\Star}{\Mh{\convolution}}
\newcommand{\StarX}[1]{\Star_{#1}}%
\newcommand{\StarR}{\Mh{\Asterisk}}
\newcommand{\Tree}{\Mh{T}}%
\providecommand{\G}{}%
\renewcommand{\G}{\Mh{G}}%
\newcommand{\GS}{\Mh{G_S}}%
\newcommand{\VV}{\Mh{V}}%
\newcommand{\VX}[1]{\VV\pth{#1}}%
\newcommand{\Edges}{\Mh{E}}%
\newcommand{\EdgesX}[1]{\Mh{E}\pth{#1}}%
\newcommand{\EdgesY}[2]{\Mh{E}\pth{#1,#2}}%
\newcommand{\GDef}{\G=(\VV,\Edges)}
\newcommand{\Gnd}{\Mh{\mathcal{G}_{n,d}}}%
\newcommand{\wGnd}{\Mh{\widetilde{\mathcal{G}_{n,d}}}}%
\newcommand{\tT}{\Mh{\widetilde{T}}}%
\newcommand{\CfX}[1]{\Mh{\mathcalb{C}^{}}_{\!#1}}
\newcommand{\cfX}[1]{\Mh{\mathcalb{c}^{}}_{\!#1}}
\newcommand{\cepsR}{\cfX{\epsR}}
\newcommand{\Set}[2]{\left\{ #1 \;\middle\vert\; #2 \right\}}
\newcommand{\cardin}[1]{\left| {#1} \right|}%
\newcommand{\pth}[2][\!]{\mleft({#2}\mright)}%
\newcommand{\DistChar}{\Mh{\mathrm{d}}}%
\newcommand{\dMY}[2]{\DistChar\pth{#1,#2}}%
\newcommand{\dMZ}[3]{\DistChar_{#1}\pth{#2,#3}}%
\newcommand{\dGC}{\DistChar_\G}%
\newcommand{\dGY}[2]{\DistChar_\G\pth{#1,#2}}%
\newcommand{\dGZ}[3]{\DistChar_{#1}\pth{#2,#3}}%
\newcommand{\pp}{\Mh{p}}%
\newcommand{\pq}{\Mh{q}}%
\newcommand{\pz}{\Mh{z}}%
\newcommand{\sep}{\Mh{\mathsf{s}}}%
\newcommand{\GA}{\Mh{H}}%
\newcommand{\GP}{\Mh{\EuScript{G}}}
\newcommand{\etal}{\textit{et~al.}\xspace}
\renewcommand{\th}{th\xspace}
\newcommand{\pbrcx}[1]{\left[ {#1} \right]}%
\newcommand{\ProbLTR}{\Mh{\mathbb{P}}}%
\newcommand{\ProbC}{\ProbLTR}
\newcommand{\Prob}[1]{\mathop{\ProbLTR} \mleft[ #1 \mright]}%
\newcommand{\ProbCond}[2]{\mathop{\ProbLTR}\!\left[%
       #1 \;\middle\vert\; #2 \right]}
\newcommand{\ExChar}{\mathbb{E}}%
\newcommand{\ExSym}{\mathop{\ExChar}}%
\newcommand{\Ex}[2][\!]{\ExSym#1\pbrcx{#2}}
\newcommand{\NbrX}[1]{\Mh{\Gamma}\pth{#1}}%
\newcommand{\NbrY}[2]{\Mh{\Gamma}_{#1}\pth{#2}}%
\newlist{compactenumA}{enumerate}{5}%
\setlist[compactenumA]{topsep=0pt,itemsep=-1ex,partopsep=1ex,parsep=1ex,%
   label=(\Alph*)}%
\setlist[compactenumA]{label=(\Alph*)}
\newlist{compactenumI}{enumerate}{5}%
\setlist[compactenumI]{topsep=0pt,itemsep=-1ex,partopsep=1ex,parsep=1ex,%
   label=(\Roman*)}%
\setlist[compactenumI]{label=(\Roman*)}
\newlist{compactenumi}{enumerate}{5}%
\setlist[compactenumi]{topsep=0pt,itemsep=-1ex,partopsep=1ex,parsep=1ex,%
   label=(\roman*)}%
\setlist[compactenumi]{label=(\roman*)}
\newlist{compactitem}{itemize}{5}%
\setlist[compactitem]{topsep=0pt,itemsep=-1ex,partopsep=1ex,parsep=1ex,\label=ensuremath{\bullet}}%
\newlist{compactprop}{enumerate}{5}%
\setlist[compactprop]{topsep=3pt,left=10pt,itemsep=-0.3ex,partopsep=1ex,parsep=1ex,%
   label=(P\arabic*)}%
\newlength{\savedparindent}
\newcommand{\Term}[1]{\textsf{#1}}
\newcommand{\SpreadC}{\Mh{\Phi}}%
\newcommand{\SpreadX}[1]{\Mh{\SpreadC}\pth{#1}}%
\newcommand{\ballC}{\Mh{\mathsf{b}}}%
\newcommand{\ballY}[2]{\Mh{\mathsf{b}}\pth{#1, #2}}%
\providecommand{\remove}[1]{}%
\newcommand{\Here}{\typeout{LOCATION: \currfilename\space L\the\inputlineno}\xspace}
\newcommand{\Daniel}{D\'aniel\xspace}%
\newcommand{\Olah}{Ol\'ah\xspace}%
\providecommand{\Mh}[1]{{#1}}%
\newlength{\ppicX}
\newlength{\ppicY}
\newcommand{\Event}{\Mh{\mathcal{E}}}%
\newcommand{\EventB}{\mathcal{B}}%
\newcommand{\atgen}{\symbol{'100}} \newcommand{\SarielThanks}[1]{%
   \thanks{%
      Department of Computer Science; %
      University of Illinois; %
      201 N. Goodwin Avenue; Urbana, IL, 61801, USA; %
      \href{mailto:remove_this_sariel@illinois.edu}%
      {sariel@illnois.edu}; %
      {\tt \url{http://sarielhp.org/}.} #1%
   }%
}
\newcommand{\ManorThanks}[1]{%
   \thanks{%
      Department of Mathematics and Computer Science, The Open
      University of Israel, %
      {\tt manorme\atgen{}openu.ac.il}. %
      #1}%
}
\newcommand{\OlahThanks}[1]{%
   \thanks{%
      Department of Mathematics and Computing Science, TU Eindhoven,
      P.O. Box 513, 5600 MB Eindhoven, The Netherlands. %
      #1}%
}
\theoremstyle{remark}
\newtheorem{remark}{Remark}
\newcommand{\SaveContent}[2]{%
   \expandafter\newcommand{#1}{#2}%
}
\newcommand{\RestatementOf}[2]{
   \noindent%
   \textbf{Restatement of #1.}
   {\em #2{}}%
}
\newcommand{\lossY}[2]{\Mh{\lambda}\pth{#1,#2}}
\providecommand{\Erdos}{Erd{\H o}s\xspace}
\newcommand{\polylog}{\mathop{\mathrm{polylog}}}
\newcommand{\CSet}{\Mh{C}}%
\newcommand{\Metric}{\Mh{\mathcal{M}}}
\newcommand{\MS}[2]{\pth{#1,\Mh{{#2}}}}%
\newcommand{\FMS}{\EuScript{X}}
\newcommand{\dst}{\Mh{\xi}}%
\newcommand{\dmgY}[2]{\mathrm{dmg}\pth{#1,#2}}%
\newcommand{\cl}{\Mh{C}}%
\newcommand{\Depth}{\Mh{\mathsf{D}}}%
\newcommand{\Net}{\Mh{N}}%
\newcommand{\netY}[2]{\Mh{\mathrm{net}}\pth{#1,#2}}%
\newcommand{\BallSet}{\Mh{\mathcal{B}}}%
\newcommand{\lca}{\mathop{\mathrm{lca}}}
\newcommand{\HST}{\Term{HST}\xspace}
\newcommand{\parent}{\overline{\mathrm{p}}}
\newcommand{\parentX}[1]{\parent\pth{#1}}
\newcommand{\leqX}[1]{\stackrel{#1}\leq}
\newcommand{\Ball}{\Mh{\mathbb{B}}}%
\newcommand{\FlSet}{\Mh{\mathcal{F}}}%
\newcommand{\ManorHide}[1]{}
\newcommand{\Schaffer}{Sch{\"a}ffer\xspace}
\newcommand{\Szemeredi}{Szemer\'edi\xspace}
\providecommand{\TPDF}[2]{\texorpdfstring{#1}{#2}}
\begin{document}

\title{Reliable Spanners for Metric Spaces%
   \NotSubmitVer{%
      \thanks{%
         A preliminary version of this paper appeared in SoCG 2021
         \cite{hmo-rsms-21}.%
      }%
   }%
}%

\SubmitVer{%
\titlenote{%
   A preliminary version of this paper appeared in SoCG 2021
   \cite{hmo-rsms-21}.%
}%
}

\NotSubmitVer{%
   \author{%
      Sariel Har-Peled%
      \SarielThanks{Work on this paper was partially supported by a
         NSF AF award CCF-1907400.  }%
      \and%
      Manor Mendel%
      \ManorThanks{Supported by BSF Grant no. 2018223.}%
      \and%
      \Daniel \Olah%
      \OlahThanks{}%
   }%
}

\SubmitVer{%
   \author{Sariel Har-Peled}%
   \affiliation{%
      \institution{University of Illinois}%
      \streetaddress{201 N. Goodwin Avenue}%
      \city{Urbana}%
      \state{Illinois}%
      \country{USA} \postcode{61801}%
   }%
   \authornote{Work on this paper was partially supported by NSF AF
      award CCF-1907400.} %
   \email{sariel@illinois.edu}%
   \orcid{0003-2638-9635}
   %
   \author{Manor Mendel}%
   \affiliation{%
      \institution{
         The Open University of Israel}%
      \streetaddress{1 University Rd.; PO Box 808}%
      \city{Raanana}%
      \country{Israel}%
      \postcode{53537}%
   }%
   \authornote{Supported by BSF Grant no. 2018223.}
   \email{manorme@openu.ac.il}%
   \orcid{0002-7521-0358}
   \author{D\'a{n{}i{}e{}l} O{}l\'ah}%
   \affiliation{%
      \institution{
         TU Eindhoven}%
      \streetaddress{P.O. Box 513, 5600 MB}%
      \city{Eindhoven}%
      \country{The Netherlands}%
   }%
   \authornote{%
      Supported by the Netherlands Organisation for Scientific
      Research (NWO) through Gravitation-grant NETWORKS-024.002.003.
   }%
   \email{olahdani1025@gmail.com}%
   \orcid{0001-7179-3552}

   \renewcommand{\shortauthors}{Har-Peled, Mendel and \Olah}
   \keywords{computational geometry, spanners, reliable}%
}

\NotSubmitVer{\date{}}

\NotSubmitVer{\maketitle}%

\begin{abstract}
    A spanner is reliable if it can withstand large, catastrophic
    failures in the network. More precisely, any failure of some nodes
    can only cause a small damage in the remaining graph in terms of
    the dilation. In other words, the spanner property is maintained
    for almost all nodes in the residual graph. Constructions of
    reliable spanners of near linear size are known in the
    low-dimensional Euclidean settings. Here, we present new
    constructions of reliable spanners for planar graphs, trees and
    (general) metric spaces.
\end{abstract}

\SubmitVer{\maketitle}


\section{Introduction}

Let $\Metric=(\PS, \DistChar)$ be a finite metric space.  Let
$\G=(\PS,\Edges)$ be a sparse graph on the points of $\Metric$ whose
edges are weighted with the distances of their endpoints.  The graph
$\G$ is a \emphw{$t$-spanner} if for any pair of vertices
$u,v \in \PS$ we have $\dGY{u}{v} \leq t \cdot \dMY{u}{v}$, where
$\dGY{u}{v}$ is the length of the shortest path between $u$ and $v$ in
$\G$, and $\dMY{u}{v}$ is the distance in the metric space between $u$
and $v$.  Spanners were first introduced by Peleg and \Schaffer
\cite{ps-gs-89} as a tool in distributed computing, but have since
found use in many other areas, such as algorithms, networking, data
structures, and metric geometry, see~\cite{p-dclsa-00, ns-gsn-07}.

\paragraph*{Fault tolerant spanners}
A desired property of spanners is the ability to withstand failures of
some of their vertices. One such notion is provided by fault tolerance
\cite{clns-nds-15, lns-eacft-98, lns-iacft-02, l-nrftg-99,
   s-ofts-14}. A graph $\G$ is a \emphw{$k$-fault tolerant}
$t$-spanner, if for any subset of vertices $\BSet$, with
$\cardin{\BSet} \leq k$, the graph $\G \setminus \BSet$ is a
$t$-spanner.  However, for $k$-fault tolerant graphs there is no
guarantee if more than $k$ vertices fail, and furthermore, the size of
a fault tolerant graph grows (linearly) with the parameter $k$. In
particular, for fixed $t\geq 1$, the optimal size of $k$-fault
tolerant $(2t-1)$-spanners on $n$ vertices is
$\Of(k^{1-1/t} n^{1+1/t})$ \cite{bdpv-ovfts-18}. Note that vertex
degrees must be at least $\Omega(k)$ to avoid the possibility of
isolating a vertex.  Thus, it is not suitable for massive failures in
the network.

\paragraph*{Reliable spanners}
An alternative notion was introduced by Bose \etal \cite{bdms-rgs-13}.
Here, for a parameter $\epsR > 0$, a \emphw{$\epsR$-reliable
   $t$-spanner} has the property that for any failure (or attack) set
$\BSet$, the residual graph $\G \setminus \BSet$ is a $t$-spanner path
for the points of $\VV \setminus \EBSet$, where
$\EBSet\supseteq \BSet$ is some set, such that
$\cardin{\EBSet} \leq (1+\epsR)\cardin{\BSet}$.  We consider two
variants:
\begin{compactenumA}
    \NotSubmitVer{\smallskip}%
    \item \textbf{Adaptive adversary} (i.e., standard or
    ``deterministic'' model): The adversary knows the spanner $\G$,
    and the set $\BSet$ is chosen as a worst case for $\G$.

    \NotSubmitVer{\smallskip}%
    \item \textbf{Oblivious adversary} (i.e., ``randomized'' model):
    Here, the spanner $\G$ is drawn from a probability distribution
    $\chi$ (over the same number of vertices).  The adversary knows
    $\chi$ in advance, but not the sampled spanner. In this oblivious
    model, we require that $\Ex{|\EBSet|}\le (1+\epsR)|\BSet|$.
\end{compactenumA}
\NotSubmitVer{\smallskip}%
See \secref{prelims} for precise definitions.

\paragraph{Previous work}
 
Bose \etal \cite{bdms-rgs-13} provided some lower bounds and
constructions in the general settings, but the bounds on the size of
the damaged set (i.e., $\EBSet$) are much larger.  In the Euclidean
settings, for any point set $\PS \subseteq \Re^d$, and for any
constants $\epsR, \eps \in (0,1)$, one can construct a
$\epsR$-reliable $(1+\eps)$-spanner with only
$\Of\bigl(n \log n \log\!\log^6 n\bigr)$ edges \cite{bho-spda-19} (an
alternative, but slightly inferior construction, was provided
independently by Bose~\etal~\cite{bcdm-norgm-18}).  The number of
edges can be further reduced in the oblivious adversary case, where
one can construct an oblivious $(1+\eps)$-spanner that is
$\epsR$-reliable in expectation and has
$\Of( n \log\!\log^2n \log\!\log\!\log n )$ edges \cite{bho-srsal-20}.

\paragraph*{Covers.}

A cover is a set of clusters (i.e., subsets of the point set) that
covers the metric space with certain desirable properties.  Awerbuch
and Peleg \cite{ap-sp-90} showed a cover, where (i) each cluster has
diameter $O(k)$, (ii) every vertex participates in $O(k n^{1/k})$
clusters, and (iii) for every vertex $v$, the ball of radius $k$
centered at $v$, is contained in a single cluster.
Busch \etal \cite{blt-scpg-14} show how to compute a cover of a planar
graph, with diameter $\leq 16k$ per cluster, such that each pair in
distance $k$ from each other belongs to some cluster, and every vertex
participates in at most $18$ clusters. For graphs that excludes a
minor of fixed size, they get a similar result, except that each
vertex might participate in $O( \log n)$ vertices, where $n$ is the
number of vertices in the input graph.
Abraham \etal \cite{agmw-sddmf-10} presented a result with better
sparsity when the graphs do not have $K_{r,r}$ as a minor.

\begin{table}[t]
    \centering {%
       \begin{tabular}{|l*{4}{|c}|}
         \hline
         \multicolumn{4}{c}{Constructions of $\epsR$-reliable
         $\Delta$-spanners}\\
         \hline
         metric
         & $\Delta$
         & guarantee
         & size
         & ref\\
         \hline
         \hline
         \multirow{3}{*}{Uniform} 
         & $2$
         & expectation
         & $\Of(n \epsR^{-1} \log\epsR^{-1}) \Bigr.$ 
         & \lemref{reliable-star}%
         \\
         & $t$
         & {d}e{t}. lower bound
         & $\Omega( n^{1+{1}/{t}} )$
         & \lemref{u:lower:bound}%
         \\
         & $2t-1$
         & deterministic 
         & $\Of\pth{\epsR^{-2}n^{1+1/t}}\Bigr.$
         & \thmref{2t-2t-1-spanner}%
         \\
         & $2t$
         & deterministic 
         & $\Of\pth{\epsR^{-1}n^{1+1/t}}\Bigr.$
         & \thmref{2t-2t-1-spanner}%
         \\
         \hline%
         \multirow{4}{*}{Finite metrics} 
         & $\Of( \log n)$
         & expectation
         & $\Of\pth{ \epsR^{-1} n \polylog }$
         & \lemref{r:s:metric:rand}%
         \\
         & $\Of( t)$
         & expectation
         & $\Of\pth{ \epsR^{-1} n^{1+1/t} \polylog }$
         & \lemref{r:s:metric:rand}%
         \\
         & $\Of( t\log n)$
         & deterministic
         & $\Of\pth{\epsR^{-1}  n^{1+1/t}  \polylog }$
         & \lemref{r:s:metric:det}%
         \\
         & $\Of( t^2)$
         & deterministic
         & $\Of\pth{\epsR^{-1}  n^{1+1/t}  \polylog }$
         & \lemref{r:s:metric:det}%
         \\
         \hline%
         \multirow{2}{*}{Ultrametrics}
         & $2+\eps$
         & expectation
         & $\Of\pth{  \epsR^{-1} \eps^{-2} n \polylog }\Bigr.$
         & \lemref{r:s:ultra:rand}%
         \\
         & $(2+\eps)t-1$
         & deterministic
         & $\Of\pth{  \epsR^{-2} \eps^{-3}  n^{1+1/t}  \polylog }\Bigr.$
         & \lemref{r:s:ultra:det}%
         \\
         \hline
         \multirow{2}{*}{Trees}
         & $3+\eps$
         & expectation
         & $\Of\pth{  \epsR^{-1} \eps^{-2} n \polylog }\Bigr.$
         & \lemref{r:s:tree:rand}%
         \\
         & $(4+\eps)t-3$
         & deterministic
         & $\Of\pth{  \epsR^{-2} \eps^{-3}  n^{1+1/t}  \polylog }\Bigr.$
         & \lemref{r:s:tree:det}%
         \\
         \hline
         \multirow{2}{*}{Planar graphs}
         & $3+\eps$
         & expectation
         & $\Of\pth{   \epsR^{-1} \eps^{-4} n \polylog }\Bigr.$
         & \lemref{r:s:planar:rand}%
         \\
         & $(4+\eps)t -3$
         & deterministic
         & $\Of(\epsR^{-2} \eps^{-6}  n^{1+1/t}   \polylog   )$
         & \lemref{r:s:planar:det}%
         \\
         \hline
       \end{tabular}%
    }%
    \caption{Our results. Polylog factors are polynomial factors in
       $\log n$ and $\log \SpreadC$, where $\SpreadC$ is the spread of
       the metric. For trees and planar graphs, these results are for
       graphs with weights on the edges. Here in expectation denotes
       that the spanner works against an oblivious adversary (here,
       the expectation is over the randomization in the construction),
       and the guarantee is on the expected size of the damaged
       set. Similarly, deterministic implies an adaptive adversary.  }
    \tbllab{results}
\end{table}

\section*{Our results.}

We provide new constructions of reliable spanners for finite uniform
metrics, ultrametrics, trees, planar graphs and finite metrics.  Our
new results are summarized in \tblref{results}.

\paragraph*{Technique} Our approach for constructing reliable spanners
is in two steps: We first construct reliable spanners for uniform
metrics and then reduce the problem of constructing reliable spanners
for general metrics to uniform metrics using \emphi{covers}.

\paragraph*{Spanners for uniform metrics}
Uniform metrics have trivial classical 2-spanners -- that is, star
graphs.  In the oblivious model one can simply use ``constellation of
stars'' with a constant number of random centers, and the resulting
spanner is linear in size.  In the adaptive settings, we present a
lower bound of $\Omega(n^{1+1/t})$ edges for a reliable $t$-spanner,
and asymptotically ``matching'' construction of a deterministic
$(2t-1)$-reliable spanner with $\Of(n^{1+1/t})$ edges. The
construction is based on reliable expanders -- these are expanders
that remain expanding under the type of attacks described above.  See
\secref{uniform} and \secref{expander:spanner} for details.

\paragraph*{Covers}
A \emphi{$t$-cover} of a finite metric space $\Metric = (\PS,\distC)$
is a family of subsets $\Cover = \Set{ \SSet}{\SSet \subseteq \PS}$,
such that for each pair $\pp,\pq \in \PS$ of points there exists a
subset in $\Cover$ that contains both points and whose diameter is at
most $t\cdot\dMY{\pp}{\pq}$.  Covers are used here to extend reliable
spanners for uniform metrics into reliable spanners for general
metrics.  This is done by using spanners for uniform in each set of
the cover and then taking a union of the edges of those graphs.  See
\secref{covers:to:spanners}.

Naturally, the size $\sum_{\SSet \in \Cover} \cardin{\SSet}$ of a
$t$-cover $\Cover$ is an important parameter in the resulting size of
the spanner, so in \secref{covers} we study the problem of
constructing good covers.  For general $n$-point spaces with spread at
most $\SpreadC$, we observe that the Ramsey partitions
of~\cite{mn-rppds-07} provide $\Of(t)$ covers of size
$\Of(n^{1+1/t} \log \SpreadC)$, which is close to optimal, because of
an $\Omega(n(n^{1/t}+\log_t\SpreadC))$ lower bound we provide.  In
more specific cases, like ultrametrics, trees and planar graphs, one
can do better.  For example, for trees and planar graphs one gets
$(2+\eps)$-covers of near linear size.  For planar graphs, known
partitions have much larger gap, which makes these results quite
interesting.

\paragraph*{New reliable spanners}
Plugging the constructions of spanners for uniform metrics with the
construction of covers yields reliable spanners for finite uniform,
ultrametric, tree, planar, and general metrics. The results are
summarized in \tblref{results}.

\paragraph*{Efficient construction}

All our constructions relies on randomized constructions of expanders
(over $m$ vertices), that succeeds with probability
$\geq 1-1/m^{O(1)}$. As such, the constructions described can be done
efficiently, if one wants constructions of spanners for $n$ vertices
that succeeds with probability $1 - 1/n^{O(1)}$. See
\remref{constructive} for details.


\section{Preliminaries}
\seclab{prelims}


\subsection{Metric spaces}
For a set $\FMS$, a function
$\DistChar:\FMS^2 \rightarrow [0,\infty)$, is a \emphi{metric} if it
is symmetric, complies with the triangle inequality, and
$\dMY{\pp}{\pq}=0$ $\iff$ $\pp= \pq$.  A \emphi{metric space} is a
    pair $\Metric = (\FMS, \DistChar)$, where $\DistChar$ is a metric.
    For a point $\pp \in \FMS$, and a radius $r$, the \emphi{ball} of
    radius $r$ is the set
    \begin{equation*}
        \ballY{\pp}{r} = \Set{\pq \in \FMS}{\dMY{\pp}{\pq} \leq r}.    
    \end{equation*}
    For a finite set $X \subseteq \FMS$, the \emphi{diameter} of $X$
    is
    \begin{equation*}
        \diamX{X} = \diamY{\Metric}{X} = \max_{\pp,\pq \in X} \dMY{\pp}{\pq},    
    \end{equation*}
    and the \emphi{spread} of $X$ is
    \begin{math}
        \SpreadX{X} = \frac{\diamX{X}}{\min_{\pp,\pq \in X,
              \pp\neq\pq} \dMY{\pp}{\pq}}.
    \end{math}
    A metric space $\Metric = (\FMS, \DistChar)$ is \emphi{finite}, if
    $\FMS$ is a finite set.  In this case, we use
    $\SpreadC = \SpreadX{\FMS}$ to denote the spread of the (finite)
    metric.

    A natural way to define a metric space is to consider an
    undirected connected graph $\G = (\PS, \Edges)$ with positive
    weights on the edges. The \emphw{shortest path metric} of $\G$,
    denoted by $\dGC$, assigns for any two points $\pp,\pq \in \PS$
    the length of the shortest path between $\pp$ and $\pq$ in the
    graph. Thus, any graph $\G$ readily induces the finite metric
    space $(\VX{\G}, \dGC)$. If the graph is unweighted, then all the
    edges have weight $1$.

    A \emphi{tree metric} is a shortest path metric defined over a
    graph that is a tree.

    \subsection{Reliable spanners}
    \begin{definition}
        For a metric space $\Metric = (\PS, \DistChar) $, a graph
        $\GA = (\PS, \Edges)$ is a \emphi{$t$-spanner}, if for any
        $\pp,\pq \in \PS$,
        \begin{math}
            \dMY{\pp}{\pq}\leq \dMZ{\GA}{\pp}{\pq} \leq t \cdot
            \dMY{\pp}{\pq}.
        \end{math}
        Here $\DistChar_\GA$ is the shortest path distance on $H$
        whose edges are weighted according to $\DistChar$.
    \end{definition}

    Given a weighted graph $\G = (\VV, \Edges)$, and a set
    $\BSet\subseteq \VV$, we denote by $\G|_\BSet$ the subgraph
    induced on $\BSet$. We also use the notation
    $\G\setminus \BSet=\G|_{\VV\setminus \BSet}$.  A randomized graph
    $\G$ is a probability distribution over \emph{the edge set}
    $\Edges$ for a given set of vertices $\VV$.

    An \emphi{attack} on a graph $\G= (\VV, \Edges)$ is a set of
    vertices $\BSet$ that fails and no longer can be used.  An attack
    (on a randomized graph) is \emphi{oblivious}, if the set $\BSet$
    is picked stochastically independent of the edge set of the graph.

\begin{definition}[Reliable spanner] 
    \deflab{reliable:spanner}%
    Let $\G=(\PS,\Edges)$ be a $t$-spanner for some $t\geq 1$
    constructed by a (possibly) randomized algorithm.  Given an attack
    $\BSet$, its \emphi{damaged set} $\EBSet$ is a set of smallest
    possible size, such that for any pair of vertices
    $\pp,\pq \in \PS \setminus \EBSet$, we have
    \begin{equation} \eqlab{reliable:upper:bound} \dGZ{\G \setminus
           \BSet}{\pp}{\pq} \leq t \cdot \dMY{\pp}{\pq} ,
    \end{equation}
    that is, distances are preserved (up to a factor of $t$) for all
    pairs of points not contained in $\EBSet$.  The quantity
    $\cardin{\EBSet \setminus \BSet}$ is the \emphi{loss} of $\G$
    under the attack $\BSet$. %
    The \emphi{loss rate} of $\G$ is
    $\lossY{\G}{\BSet} = \cardin{\EBSet \setminus \BSet} /
    \cardin{\BSet}$. For $\epsR\in(0,1)$, the graph $\G$ is
    \emphi{$\epsR$-reliable} (in the deterministic or non-oblivious
    setting), if $\lossY{\G}{\BSet} \leq \epsR$ holds for any attack
    $\BSet \subseteq \PS$.  Furthermore, the graph $\G$ is
    \emphi{$\epsR$-reliable in expectation} (or in the oblivious
    model), if $\Ex{\lossY{\G}{\BSet}} \leq \epsR$ holds for any
    oblivious attack $\BSet \subseteq \PS$.
\end{definition}

\begin{remark}
    The damaged set $\EBSet$ is not necessarily unique, since there
    might be freedom in choosing the point to include in $\EBSet$ for
    a pair that does not have a $t$-path in $\G\setminus \BSet$.
\end{remark}

\subsection{Miscellaneous}

For a graph $\G$, and two set of vertices $Y,Z \subseteq \VX{\G}$, let
\begin{equation*}
    \NbrY{Z}{Y} = \Set{x \in Z}{ xy \in \EdgesX{\G} \text{ and } y \in
       Y}     
\end{equation*}
denote the \emphi{neighbors} of $Y$ in $Z$. The neighbors of $Y$ in
$\G$ is denoted by $\NbrX{Y}=\NbrY{\VX{\G}}{Y}$.

\begin{definition}
    \deflab{depth}%
    For a collection of sets $\Family$, and an element $x$, let
    $\degY{x}{\Family} = \cardin{\Set{X \in \Family}{ x \in X}}$
    denote the \emphi{degree} of $x$ in $\Family$. The maximum degree
    of any element of $\Family$ is the \emphi{depth} of $\Family$.
\end{definition}

\paragraph*{Notations}
We use $\PS + \pp = \PS \cup \{ \pp\}$ and
$\PS - \pp = \PS \setminus \{ \pp\}$. Similarly, for a graph $\G$, and
a vertex $\pp$, let $\G - \pp$ denote the graph resulting from
removing $\pp$.


\section{Reliable spanners for uniform metric}
\seclab{uniform}%

Let $\PS$ be a set of $n$ points and let $\MS{\PS}{\DistChar}$ be a
metric space equipped with the uniform metric, that is, for all
distinct pairs $\pp,\pq \in \PS$, we have that $\dMY{\pp}{\pq}$ is the
same quantity (e.g., $1$).  Note that $n-1$ edges are enough to
achieve a $2$-spanner for the uniform metric by using the star graph.


\subsection{A randomized construction for the oblivious %
   case}
\seclab{unif-rel-star}

\paragraph*{Construction}
Let $\epsR \in (0,1)$ be a fixed parameter. Set
\begin{math}
    k=2\ceil{\epsR^{-1} \log\epsR^{-1}} +1
\end{math}
and sample $k$ points from $\PS$ uniformly at random (with
replacement). Let $\CSet \subseteq \PS$ be the resulting set of
\emphw{center points}. For each point $\pp \in \CSet$, build the star
graph $\StarX{\pp} = (\PS, \Set{ \pp\pq}{\pq \in \PS - \pp}) $, where
$\pp$ is the center of the star. The \emphi{constellation} of $\CSet$
is the graph $\StarR = \bigcup_{\pp\in \CSet} \StarX{\pp}$, which is
the union of the star graphs induced by centers in $\CSet$.

\begin{lemma}
    \lemlab{reliable-star}%
    The constellation $\StarR$, defined above, is a $\epsR$-reliable
    $2$-spanner in expectation. The number of its edges is
    $\Of(n \epsR^{-1} \log\epsR^{-1})$.
\end{lemma}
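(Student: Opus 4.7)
The plan is to compute the edge count immediately, then characterize the damaged set for a given attack, and finally bound the expected loss by a short probability computation. For the edge count: each of the $k = 2\ceil{\epsR^{-1}\log\epsR^{-1}}+1$ stars contributes $n-1$ edges, so $\StarR$ has at most $k(n-1) = \Of(n\epsR^{-1}\log\epsR^{-1})$ edges. The $2$-spanner property (absent any attack) is trivial: in the uniform metric any two points $\pp,\pq$ are at distance $1$, and the path $\pp \to \pc \to \pq$ through any center $\pc \in \CSet$ has length at most $2$.

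Given an attack $\BSet \subseteq \PS$, the first key step is to observe an ``all-or-nothing'' dichotomy for how $\StarR \setminus \BSet$ behaves. If some center $\pc \in \CSet \setminus \BSet$ survives, then for every pair $\pp,\pq \in \PS \setminus \BSet$ the star $\StarX{\pc}$ still provides the two-hop path $\pp \to \pc \to \pq$ (or a direct edge when $\pc \in \{\pp,\pq\}$), so I can simply take $\EBSet = \BSet$ and the loss is zero. If on the other hand $\CSet \subseteq \BSet$, then every edge of $\StarR$ is incident to a removed vertex, the residual graph has no edges, and $\EBSet \setminus \BSet$ may need to contain essentially all of $\PS \setminus \BSet$, contributing at most $n - |\BSet|$.

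Since the $k$ centers are drawn i.i.d.\ uniformly from $\PS$ with replacement, the probability that $\CSet \subseteq \BSet$ is exactly $(|\BSet|/n)^k$. Writing $b = |\BSet|$ and $\beta = b/n$, this yields
\[
\Ex{\lossY{\StarR}{\BSet}}
\;=\; \frac{\Ex{|\EBSet \setminus \BSet|}}{b}
\;\leq\; \frac{(b/n)^k\,(n-b)}{b}
\;=\; \beta^{k-1}(1-\beta).
\]
The only remaining step is the elementary calculus bound $\max_{\beta \in [0,1]} \beta^{k-1}(1-\beta) \leq 1/k$, attained near $\beta = (k-1)/k$. Plugging in the chosen value of $k$ gives $1/k \leq \epsR/(2\log\epsR^{-1}) \leq \epsR$, as required. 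The hardest part of the argument is really just identifying the dichotomy above and recognizing that ``every sampled center lies inside $\BSet$'' is the unique bad event; once that is isolated the rest is a one-line probability calculation and presents no serious obstacle.
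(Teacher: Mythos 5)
Your proposal follows essentially the same route as the paper: the same dichotomy (some center survives $\Rightarrow$ $\EBSet=\BSet$ with zero loss; $\CSet\subseteq\BSet$ $\Rightarrow$ the residual graph is edgeless and the whole point set is damaged), the same probability $(\cardin{\BSet}/n)^k$ for the bad event, and the same edge count. The only difference is cosmetic: you finish by maximizing $\beta^{k-1}(1-\beta)$ over all $\beta\in[0,1]$ to get the bound $1/k$, which neatly avoids the paper's separate treatment of the case $(1+\epsR)\cardin{\BSet}\geq n$ and its estimate $(1+\epsR)^{-(k-1)}\leq e^{-(k-1)\epsR/2}$.

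One small slip in your last line: the chain $1/k\leq \epsR/(2\log\epsR^{-1})\leq\epsR$ requires $\log\epsR^{-1}\geq 1/2$, which fails for $\epsR>e^{-1/2}$. The conclusion $1/k\leq\epsR$ is still true — for $\epsR\leq 1/e$ one has $k\geq 2\epsR^{-1}$, and for $\epsR>1/e$ one has $k\geq 3>\epsR^{-1}$ — so this is a two-line patch rather than a real gap, but the chain as written is not valid for all $\epsR\in(0,1)$.
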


\begin{proof}
    Let $\BSet\subseteq \PS$ be an oblivious attack, and let
    $b=\cardin{\BSet}$. If there is a point of $\CSet$ that is not in
    $\BSet$, then there is a center point outside of the attack set,
    which provides $2$-hop paths between all pairs of points in the
    residual graph, and therefore we choose $\EBSet=\BSet$.  On the
    other hand, if $\CSet \subseteq \BSet$, then the residual graph
    contains only isolated vertices, and therefore we choose
    $\EBSet=\PS$.  If $(1+\epsR)b \geq n$ then there is nothing to
    prove. Thus, since $b/n < 1/(1+\epsR)$ and
    $1/( 1+\epsR) \leq 1-\epsR/2$, we have
    \begin{align*}
      \Ex{\lossY{\G}{\BSet}} 
      &=
        0  \Prob{\CSet \nsubseteq  \BSet}
        +%
        \frac{n-1-b}{b} \Prob{\CSet \subseteq \BSet}
        \leq
        \frac{n}{b} \pth{\frac{b}{n}}^k
        \leq %
        \frac{1}{ ( 1+\epsR)^{k-1}}
        \leq %
        \exp\Bigl( -\frac{k-1}{2} \epsR \Bigr)
        \leq
        \epsR.
    \end{align*}

    As for the number of edges, $\StarR$ has at most $k(n-1)$ edges,
    since $\StarR$ is the union of $k$ stars and each star has $n-1$
    edges. Thus, by the choice of $k$, the size of $\StarR$ is
    $\Of(n \epsR^{-1} \log\epsR^{-1})$.
\end{proof}


\subsection{Lower bound for a deterministic construction}
\seclab{unif-lower}

In the \emphw{non-oblivious settings}, the attacker knows the
constructed graph $\G$ when choosing the attack set $\BSet$.

\begin{lemma}
    \lemlab{u:lower:bound}%
    Let $\G=(P,E)$ be a $\epsR$-reliable $t$-spanner on $\PS$ for the
    uniform metric, where $\epsR\in (0,1)$ and $t\geq 1$. Then, in the
    non-oblivious settings, $\G$ must have $\Omega( n^{1+{1}/{t}} )$
    edges.
\end{lemma}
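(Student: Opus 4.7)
The plan is to argue by contradiction via a Moore-bound-style attack: assume $\G=(\PS,\Edges)$ is a $\epsR$-reliable $t$-spanner of the uniform metric with $m = \cardin{\Edges}$, and exhibit a single attack $\BSet$ whose forced damaged set $\EBSet$ has size $\geq n/2$. The intuition is that deleting all the high-degree vertices leaves a bounded-degree residual graph whose $t$-balls are too small to absorb most of $\PS$ through the surviving $t$-spanner relation, which is the force that pushes up $m$.

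First I would set $\Delta := (n/4)^{1/t}$ and define $H := \Set{v \in \PS}{\deg_\G(v) > \Delta}$. A degree-sum count immediately gives $\cardin{H} \leq 2m/\Delta$. Take as attack $\BSet = H$. Since every vertex in $\PS \setminus H$ has degree at most $\Delta$ in $\G$, and a fortiori in $\G \setminus \BSet$, a standard BFS expansion bound shows that for any vertex $v$, the ball of radius $t$ around $v$ in $\G\setminus\BSet$ contains at most $1 + \Delta + \Delta^2 + \cdots + \Delta^t \leq 2\Delta^t = n/2$ points.

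Next I would invoke reliability. It supplies a damaged set $\EBSet \supseteq \BSet$ with $\cardin{\EBSet}\leq(1+\epsR)\cardin{\BSet}$ such that every pair $\pp, \pq \in \PS\setminus \EBSet$ has $\dGZ{\G\setminus\BSet}{\pp}{\pq} \leq t \cdot \dMY{\pp}{\pq} = t$, using that $\DistChar$ is the uniform metric. Fixing any $\pp_0 \in \PS \setminus \EBSet$, this inequality forces $\PS \setminus \EBSet$ to sit inside the $t$-ball of $\pp_0$ in $\G \setminus \BSet$, hence $\cardin{\PS \setminus \EBSet} \leq n/2$, equivalently $\cardin{\EBSet} \geq n/2$.

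Chaining the two estimates gives $n/2 \leq \cardin{\EBSet} \leq (1+\epsR)\cardin{H} \leq 2(1+\epsR)m/\Delta$, which rearranges to $m \geq n\Delta/(4(1+\epsR)) = \Omega(n^{1+1/t})$, absorbing $\epsR \in (0,1)$ into the constant. I expect the only real subtlety — the ``diameter-collapse'' step — to be the main conceptual hinge: because the ambient metric is uniform, the $t$-spanner property on $\PS \setminus \EBSet$ reduces to the statement that this set has diameter at most $t$ in the residual graph, and hence fits inside a single small $t$-ball, which is exactly what the bounded-degree structure of $\G\setminus\BSet$ forbids once $m$ is too small.
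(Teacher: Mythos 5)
Your proof is correct and follows essentially the same route as the paper's: attack the set of high-degree vertices (with the same threshold $\Delta \approx n^{1/t}/4$), use the bounded-degree Moore/ball-growth bound to show the surviving $t$-spanned set must fit in a single small $t$-ball, and close with the degree-sum count. The only difference is cosmetic — you argue directly where the paper assumes $\cardin{\BSet}\le n/4$ toward a contradiction — and your geometric-sum estimate just needs $\Delta\ge 2$, which is harmless for the asymptotic claim.
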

\begin{proof}
    We assume that the distance between any pair of points of $P$ is
    one.  Let the attack set $\BSet$ be the set of all nodes of degree
    at least $\Delta$, where $\Delta = n^{1/t}/4$.  Assume, toward a
    contradiction, that $\cardin{\BSet} \leq n/4$.  By the reliability
    condition, there exists a set $\QS \subseteq \PS\setminus \BSet$
    of nodes of the residual graph of size at least
    $n-(1+\epsR)\cardin{\BSet} \geq n/2$, that has $t$-hop paths in
    $\G\setminus \BSet$ between all pairs of vertices of $\QS$. Let
    $\pp \in \QS$ be an arbitrary vertex.  Let $\ballY{\pp}{t}$ be a
    ball of radius $t$ centered at $\pp$ in the shortest path metric
    in the residual graph $\G \setminus \BSet$.  On the one hand, from
    the above, $\ballY{\pp}{t}\supseteq Q$.  On the other hand, the
    maximum degree is at most $\Delta$, and therefore
    $\cardin{\ballY{\pp}{t}} \leq \Delta^t \leq n/4 $.  As such, we
    have
    $n/4 \geq \cardin{\ballY{\pp}{t}} \geq \cardin{\QS} \geq n/2$. A
    contradiction.
    
    Hence, $\cardin{\BSet} >n/4$.  The claim follows since
    $\Delta |\BSet| \leq 2 |\Edges|$, which implies
    \begin{math}
        \cardin{\Edges} \geq \Delta \cardin{\BSet}/2 = \Omega(
        n^{1+1/t}).
    \end{math}
\end{proof}

\begin{remark}
    \remlab{erdos}%
    \Erdos' girth conjecture states that there exists a graph $\G$
    with $n$ vertices and $\Omega(n^{1+1/k})$ edges, and girth at
    least $2k+1$, where the \emphw{girth} of $\G$ is the length of the
    shortest cycle in $\G$.  The argument in the proof of
    \lemref{u:lower:bound} is closely related to the standard argument
    for proving a tight counterpart -- any graph with
    $\Omega(n^{1+1/k})$ edges has girth at most $2k+1$.
\end{remark}


\subsection{Reliable %
   spanners of the uniform metric for adaptive adversary}
\seclab{unif-upper}

Here, we present a construction of reliable spanner that is close to
being tight. The spanner is simply a high-degree expander whose
properties are described in the following definition.

\newcommand{\M}{\Mh{M}}

%

\begin{definition}
    \deflab{proper:expander}%
    Denote by $\lambda(\G)$ the second eigenvalue of the matrix
    $\M/d$, where $\M = \mathrm{Adj}(\G)$ is the adjacency matrix of a
    $d$-regular graph $\G$.  A \emphi{proper expander} specifies a
    constant $C>1$, and functions $\CfX{\delta},\cfX{\delta} > 0$,
    such that for every $\delta\in(0,1/4)$ and even integers
    $d\geq \CfX{\delta} $, $n\geq d^2$, there exists an $n$-vertex,
    $d$-regular graph $\G=(\VV,\Edges)$, such that:
    \begin{compactprop}
        \item \itemlab{p:1}
        \begin{math}
            \forall S\subseteq \VV,\ |S|%
            \geq%
            {12n}/(\delta d)%
            \; \implies \; |\NbrX{S}| > (1-\delta)n,
        \end{math}

        \item \itemlab{p:2}
        \begin{math}
            \forall S\subseteq \VV,\ |S| \leq%
            \cfX{\delta} n/d \; \implies\; |\NbrX{S}| \geq (1-\delta)
            d |S|.
        \end{math}

        \item 
        \itemlab{p:3}
        \begin{math}
            \lambda(\G) \leq {C}/{\sqrt d}.
        \end{math}

    \end{compactprop}
\end{definition}

For each one of the properties above, it is known that there exists an
expander satisfying it: Property~\itemref{p:1} is essentially proved
in~\cite{bho-spda-19}, Property~\itemref{p:2} is folklore, and
Property~\itemref{p:3} appears in~\cite{djpp-fltrr-13}.  Since they
hold almost surely for ``random regular graphs'', they also hold
simultaneously.  However, we were unable to find in the literature
proofs of almost sure existence in the same model of random regular
graphs, and contiguity of the different random models does not
necessarily hold in the high-degree regime (which is what we need
here).  Therefore, for completeness, \apndref{proper:expander} reprove
\itemref{p:1} and \itemref{p:2} in the same random model in which
\itemref{p:3} was proved. We thus get the following:

\SaveContent{\ThmProperExapdnerBody}%
{%
   The random graph constructed in \secref{construction} is a proper
   expander (see \defref{proper:expander}), asymptotically almost
   surely. Specifically, the probability the graph has the desired
   properties is $\geq 1- n^{-O(1)}$.%
}%
\begin{theorem}
    \thmlab{proper:expander}%
    \ThmProperExapdnerBody{}
\end{theorem}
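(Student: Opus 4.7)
The strategy is straightforward: \secref{construction} will fix a specific random graph model --- namely the one in which property (P3) is established asymptotically almost surely by \cite{djpp-fltrr-13} --- so it suffices to show that properties (P1) and (P2) also hold in the same model, each with failure probability at most $n^{-\Omega(1)}$, and to combine via a union bound. I will assume this model is the permutation (or equivalent random matching) model for $d$-regular graphs, in which edges coming from different random components are mutually independent and the only source of dependence is the within-component matching constraint; this dependence is mild enough to permit all the standard random-regular-graph estimates, at the cost of $(1+o(1))$ constants.

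For property (P1), I adapt the argument of \cite{bho-spda-19} to the chosen model. By monotonicity of neighborhoods under containment, it suffices to consider sets $S$ of size exactly $s = \lceil 12n/(\delta d)\rceil$. For a fixed pair $S,T$ with $|S|=s$ and $|T|=(1-\delta)n$, the bad event $\Gamma(S)\subseteq T$ forces all $ds$ images of vertices of $S$ under the random permutations to lie in $T$; conditioning one permutation at a time bounds this probability by $(1-\delta)^{ds}$ up to $(1+o(1))$ factors. The union bound over pairs $(S,T)$ costs at most $\binom{n}{s}\binom{n}{(1-\delta)n} \leq 2^{2n}$, while the main term is $\exp(-\delta\cdot ds) = \exp(-12n)$, easily beating any polynomial loss. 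For property (P2), I use a standard small-set expansion argument: if $|S|=s\leq c_\delta n/d$ has $|\Gamma(S)| < (1-\delta)ds$, then the $ds$ half-edges emanating from $S$ must produce at least $\delta ds$ ``collisions'' (repeated targets or landings inside $S$). Standard concentration in the permutation model bounds this probability for fixed $S$ by $\exp(-c\delta^{2}ds)$. Taking the union bound over all $S$ of size $s$ contributes a factor $\binom{n}{s}\leq (ed/c_\delta)^s$, which is overwhelmed by the exponential decay provided $c_\delta = c_\delta(\delta)$ is chosen small enough (roughly proportional to $\delta^2$); summing over $s\in[1,c_\delta n/d]$ yields total failure probability $n^{-\Omega(1)}$.

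The main technical obstacle is the careful bookkeeping of dependencies in the chosen random model: the $d$ neighbors of a fixed vertex are independent across distinct random permutations but, within a single permutation, must be pairwise distinct images. Conditioning on the image of $S$ under each permutation reduces all needed probabilities to elementary hypergeometric estimates with the claimed $(1+o(1))$ losses. None of this is deep; the reason the paper cannot simply quote existing bounds is precisely that contiguity between the standard random $d$-regular graph models may fail in the high-degree regime $d\geq \CfX{\delta}$ that we need, so the calculations must be redone \emph{in the same model} as the one used for (P3).
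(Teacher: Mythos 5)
Your overall strategy coincides with the paper's: work in the permutation model (the model in which \itemref{p:3} is proved in \cite{djpp-fltrr-13}), re-prove \itemref{p:1} and \itemref{p:2} there, and union-bound. Your treatment of \itemref{p:1} is essentially the paper's argument (the paper in fact only uses the $d/2$ forward images, getting $(t/n)^{sd/2}$ per set, which already beats $2^{2n}$ thanks to $d\geq 12/(\eps\delta)$), so that part is fine.

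The gap is in \itemref{p:2}, and it is twofold. First, your quantitative bookkeeping fails: you bound the union over sets of size $s$ by $\binom{n}{s}\leq (ed/c_\delta)^s$, but the inequality goes the other way --- $\binom{n}{s}\approx (en/s)^s$, which for $s\leq c_\delta n/d$ is \emph{at least} $(ed/c_\delta)^s$ and for small $s$ is roughly $n^s$. A per-set failure probability of the form $\exp(-c\delta^2 ds)$, which has no dependence on $n$, cannot absorb this: already for $s=1$ you get $n\cdot\exp(-c\delta^2 d)$, which diverges when $d=\CfX{\delta}$ is a constant. The per-set bound must be polynomially small in $n$, of the shape $(ds/n)^{\Omega(\delta ds)}$ --- which your ``collision'' heuristic would deliver (each collision is an event of probability $O(ds/n)$, and you need $\delta ds$ of them), but is not what you wrote. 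Second, the phrase ``standard concentration in the permutation model'' papers over exactly the step the appendix is devoted to: within a single permutation $\pi_i$, the forward images $\pi_i(u)$ and backward preimages $\pi_i^{-1}(u)$ for $u\in S$ are dependent, and the events $\{\pi_i(u)\in S\}_{u\in S}$ are dependent among themselves. The paper resolves this with two non-trivial ingredients: a negative-correlation argument (\clmref{in-edges}) showing that with probability $1-O(n^{-0.4})$ no small $S$ has more than $\eta d|S|$ internal edges, and a coupling that redirects the internal edges of $S$ via auxiliary partial injections $\tau^S_i$ into a refined probability space $\wGnd$, so that the forward and backward images of $S$ become (conditionally) uniform into disjoint sets and one obtains the exponent $sd$ (rather than $sd/2$) in $(t/n)^{sd}$. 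Your final paragraph correctly identifies that this dependence is the obstacle, but asserting that conditioning ``reduces all needed probabilities to elementary hypergeometric estimates'' is a restatement of the goal, not a proof; as written, the argument for \itemref{p:2} does not go through.
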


With the appropriate choice of parameters, these expanders are
reliable spanners for uniform metrics.

\begin{theorem}
    \thmlab{2t-2t-1-spanner}%
    For every $t\in \NN$, $\theta\in(0,1)$, and $n\in 2\NN$, such that
    $n\ge e^{\Omega(t)}$, there exist:
    \begin{compactenumi}
        \NotSubmitVer{\smallskip}
        \item $\epsR$-reliable $2t$-spanner with
        $\Of{(\epsR^{-1} n^{1+1/t})}$ edges for $n$-point uniform
        space, and

        \NotSubmitVer{\smallskip}
        \item $\epsR$-reliable $(2t-1)$-spanner with
        $\Of{(\epsR^{-2} n^{1+1/t})}$ edges for $n$-point uniform
        space.
    \end{compactenumi}
\end{theorem}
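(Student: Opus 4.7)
The plan is to take the reliable spanner to be the proper $d$-regular expander $\G = (\PS, \Edges)$ provided by \thmref{proper:expander}, with the degree $d$ tuned to the target of the theorem. In the uniform metric, all pairwise distances equal $1$, so the task reduces to showing that after removing any attack $\BSet$ of size $b$, almost all pairs of surviving vertices are connected in $\G \setminus \BSet$ by a path of length at most $2t$ (resp.\ $2t-1$). For the $2t$-spanner I would set $d = \Theta\bigl((n/\epsR)^{1/t}\bigr)$, which gives $nd/2 = \Of(\epsR^{-1/t} n^{1+1/t}) = \Of(\epsR^{-1} n^{1+1/t})$ edges (since $\epsR \le 1$). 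For the $(2t-1)$-spanner an extra $\epsR^{-1}$ factor in $d$ is needed to force the asymmetric intersection below to succeed.

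Given $\BSet$, I would define the damaged set $\EBSet$ by a peeling process: set $\EBSet_0 = \BSet$, and iteratively add to $\EBSet_i$ every vertex $v \notin \EBSet_i$ with $|\NbrX{v} \setminus \EBSet_i| < \delta d$, where $\delta = \Theta(\epsR)$, stopping at the fixed point $\EBSet$. By construction the induced subgraph on $\PS \setminus \EBSet$ has minimum degree at least $\delta d$.

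To prove $|\EBSet| \leq (1+\epsR)|\BSet|$, I would argue by contradiction: if $|\EBSet \setminus \BSet| > \epsR|\BSet|$, then every vertex in $\EBSet \setminus \BSet$ sends at least $(1-\delta)d$ edges into $\EBSet$, so the edge count $e(\EBSet \setminus \BSet, \EBSet) \geq (1-\delta)d |\EBSet \setminus \BSet|$. The expander mixing lemma (property~\itemref{p:3}) upper-bounds the same quantity by $d |\EBSet \setminus \BSet| \cdot |\EBSet|/n + \lambda d\sqrt{|\EBSet \setminus \BSet|\cdot|\EBSet|}$, with $\lambda \leq C/\sqrt{d}$. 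For $|\EBSet| \leq (1-2\delta)n$ this yields $|\EBSet \setminus \BSet| \leq \Of(|\EBSet|/(\delta^2 d))$, which for $d$ large enough in terms of $\epsR$ contradicts the assumption. The regime $|\EBSet| > (1-2\delta)n$ forces $|\BSet| = \Omega(n)$, so one can simply set $\EBSet = \PS$.

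For the spanning property, take $p, q \notin \EBSet$ and run BFS from each inside $\G \setminus \EBSet$ (hence also in $\G \setminus \BSet$). The key observation is that any set $S \subseteq \PS \setminus \EBSet$ inherits expansion from $\G$: its neighbours in $\G$ satisfy properties~\itemref{p:2} and~\itemref{p:1}, and at most a $\delta$-fraction of them lie in $\EBSet$ by the peeling invariant. Iterating \itemref{p:2} gives geometric growth by a factor of roughly $(1-\delta)d$ per step until the ball reaches size $12n/(\delta d)$, after which one application of~\itemref{p:1} produces a ball of size $(1-\delta)n$. With the choice $d^t \gtrsim n/\epsR$ this requires exactly $t$ steps, so the $t$-balls from $p$ and from $q$ each exceed $n/2$ and therefore intersect, yielding a $2t$-path. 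For the $(2t-1)$-spanner the same skeleton works, but the path is assembled as $\Gamma^{t-1}(p) \cap \Gamma^t(q) \neq \emptyset$; this forces the $(t-1)$-ball to be large enough to hit the complement of $\Gamma^t(q)$, which accounts for the larger degree (hence the $\epsR^{-2}$ factor). The main obstacle will be in this last step: coupling the peeling invariant with the BFS analysis to justify applying~\itemref{p:2} inside $\G \setminus \EBSet$ rather than in $\G$, and ensuring that the $\epsR$-slack absorbed by the peeling is still enough to let the inductive expansion cascade run for the full $t$ (or $t-1$) levels.
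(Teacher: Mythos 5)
Your high-level strategy for the $2t$-spanner (proper expander, peeling to define $\EBSet$, mixing lemma for $|\EBSet|\le(1+\epsR)|\BSet|$, then a $t$-step expansion cascade via \itemref{p:2} and \itemref{p:1}) is the same as the paper's, but the peeling rule you state does not support the cascade. You peel $v$ when $|\NbrX{v}\setminus \EBSet_i|<\delta d$ with $\delta=\Theta(\epsR)$, so a surviving vertex is only guaranteed $\ge\delta d$ good neighbors, i.e.\ it may have up to $(1-\delta)d$ neighbors in $\EBSet$. Your later claim that ``at most a $\delta$-fraction of the neighbors of $S$ lie in $\EBSet$'' is therefore false for your rule, and $|\Gamma_{\VV\setminus\EBSet}(U)|\ge|\Gamma_\VV(U)|-|\EdgesY{U}{\EBSet}|$ degenerates to $\ge 0$. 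Flipping the rule to ``peel $v$ if it has more than $\delta d$ bad neighbors'' rescues the expansion but kills the size bound: when $|\BSet|/n\gg\delta$ a typical vertex already has $\approx d|\BSet|/n>\delta d$ neighbors in $\BSet$, so the peeling cascades to almost all of $\VV$ and the mixing-lemma contradiction never closes. The paper's fix is to make the threshold adaptive: peel at $\eps d$ bad neighbors with $\eps=(1+\epsR)\bigl(|\BSet|/n+\lambda/\sqrt{\epsR}\bigr)$, which sits just above the ambient density $|\BSet|/n$ (so the mixing lemma bounds the shadow by $\epsR|\BSet|$) while still leaving residual expansion $(1-\delta-\eps)d\ge\epsR d$ per step. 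Note this residual factor is $\epsR d$, not $(1-O(\epsR))d$, so the cascade needs $(\epsR d)^{t-1}\gtrsim n/(\epsR d)$, forcing $d=\Theta(\epsR^{-1}n^{1/t})$; your choice $d=\Theta((n/\epsR)^{1/t})$ is too small for $t\ge 2$ (it was calibrated to per-step growth $\approx d$, which the correct invariant does not deliver).

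The $(2t-1)$ case is a more serious gap: in a single $d$-regular expander, forcing $\Gamma^{t-1}(p)\cap\Gamma^{t}(q)\ne\emptyset$ requires $|\Ball_{\G\setminus\EBSet}(p,t-1)|$ to exceed $n-|\Gamma_\VV(\Ball(q,t-1))|+|\EBSet|$, which can be $\Omega(n)$ when the attack is large; since $|\Ball(p,t-1)|\le d^{t-1}$, this needs $d\gtrsim n^{1/(t-1)}$ and hence $\Omega(n^{1+1/(t-1)})$ edges — no $\epsR^{-1}$ boost to $d$ repairs this exponent. (A mixing-lemma version, $|\Ball(p,t-1)|\cdot|\Ball(q,t-1)|>\lambda^2n^2$, similarly forces $d\gtrsim n^{2/(2t-1)}$.) The paper instead changes the construction: partition $\PS$ into $n^{1/t}$ blocks of size $n^{1-1/t}$, place a reliable $(2t-2)$-spanner $\GP_{n',\epsR,2(t-1)}$ inside each block, and join every pair of blocks by the bipartite expander of \lemref{expander} with parameter $\epsR$. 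Cross-block pairs are then served by a path of length $(t-1)+1+(t-1)=2t-1$, because the two $(t-1)$-balls have relative size $\ge\epsR$ inside their blocks and the bipartite expander guarantees an edge between any two such subsets; the $\epsR^{-2}$ in the edge count comes from the bipartite expanders, not from inflating the degree of a single expander. You would need to adopt this (or some genuinely different) two-level structure for part (ii).
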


The proof is somewhat cumbersome and is deferred to
\secref{expander:spanner}.

Applying the above theorem directly on non-uniform metric we obtain
the following corollaries.

\begin{corollary}
    \corlab{det:uniform:r:s:2t-1}%
    Let $\Metric = (\FMS, \DistChar)$ be a metric space, and let
    $\PS \subseteq \FMS$ be a finite subset of size $n$.  Given
    parameters $t \in \NN$, and $\epsR \in (0,1)$, there exists a
    weighted graph $\G$ on $\PS$, such that:
    \begin{compactenumA}
        \NotSubmitVer{\smallskip}%
        \item The graph $\G$ has
        $\cardin{\EdgesX{\G}} = \Of\pth{\epsR^{-2}\cdot n^{1+1/t}}$
        edges.
        
        \NotSubmitVer{\smallskip}%
        \item The graph $\G$ is $\epsR$-reliable. Namely, given any
        attack set $\BSet\subset \FMS$, there exists a subset
        $\QS \subseteq \PS$, such that
        $\cardin{\QS} \geq \cardin{\PS} - (1+\epsR)\cardin{\BSet \cap
           \PS}$. Furthermore, for any two points $\pp,\pq \in \QS$,
        we have
        \begin{equation*}
            \dMZ{\Metric}{\pp}{\pq} \leq \dMZ{\G|_\QS}{\pp}{\pq} \leq
            (2t-1)\cdot \diamY{\Metric}{\PS},
        \end{equation*}
        and the path realizing it has at most $(2t-1)$ hops.
    \end{compactenumA}
    \NotSubmitVer{\smallskip}%
    In particular, $\G$ has hop diameter at most $2t-1$, and diameter
    at most $(2t-1) \cdot \diamY{\Metric}{\PS}$.
\end{corollary}

\begin{corollary}
    \corlab{det:uniform:r:s:2t}%
    Let $\Metric = (\FMS, \DistChar)$ be a metric space, and let
    $\PS \subseteq \FMS$ be a finite subset of size $n$.  Given
    parameters $t \in \NN$, and $\epsR \in (0,1)$, there exists a
    weighted graph $\G$ on $\PS$, such that:
    \begin{compactenumA}
        \NotSubmitVer{\smallskip}%
        \item The graph $\G$ has
        $\cardin{\EdgesX{\G}} = \Of\pth{\epsR^{-1}\cdot n^{1+1/t}}$
        edges.
        
        \NotSubmitVer{\smallskip}%
        \item The graph $\G$ is $\epsR$-reliable. Namely, given any
        attack set $\BSet\subset \FMS$, there exists a subset
        $\QS \subseteq \PS$, such that
        $\cardin{\QS} \geq \cardin{\PS} - (1+\epsR)\cardin{\BSet \cap
           \PS}$. Furthermore, for any two points $\pp,\pq \in \QS$,
        we have
        \begin{equation*}
            \dMZ{\Metric}{\pp}{\pq} \leq \dMZ{\G|_\QS}{\pp}{\pq} \leq
            2t\cdot \diamY{\Metric}{\PS},
        \end{equation*}
        and the path realizing it has at most $2t$ hops.
    \end{compactenumA}
    \NotSubmitVer{\smallskip}%
    In particular, $\G$ has hop diameter at most $2t$, and diameter at
    most $2t \cdot \diamY{\Metric}{\PS}$.
\end{corollary}

\begin{remark}
    \corref{det:uniform:r:s:2t-1} and \corref{det:uniform:r:s:2t} are
    quite weak as far as the guarantee on the length of the resulting
    path (i.e., they are not good spanners).  A construction that
    provides a spanner guarantee is provided below in
    \lemref{r:s:metric:det} below.
\end{remark}

\SaveContent{\ThmRelVertexExpanderBody}%
{%
   For every $\epsR\in(0,1)$ there exists a constant
   $\cfX{}=\cfX{\epsR}>0$, such that for any expansion constant
   $h\ge \cfX{}^{-2}$, there exists a family of vertex expander graphs
   $\{\G=(\VV,\Edges)\}_n$ on $n$ vertices of degree at most
   $4h/\epsR$, with the following resiliency property: For any
   $\BSet\subset \VV$, there exists $\EBSet\supseteq \BSet$,
   $|\EBSet|\le (1+\epsR)|\BSet|$ such that the graph
   $\G\setminus \EBSet$ is a vertex expander in the sense that
   \begin{compactenumi}
       \NotSubmitVer{\smallskip}%
       \item $\diamX{\G\setminus \EBSet}\le 2\lceil \log_h n\rceil$,
       and \NotSubmitVer{\smallskip}%
       \item For any $U\subset \VV\setminus \EBSet$ of size
       $|U|\leq \cfX{} n /h$, we have
       $|\Gamma_{\G\setminus \EBSet}(U)|\geq h|U|$.
   \end{compactenumi}
}

As aside, proper expanders are reliable spanners because their
expansion property is robust, as testified by the following.
\begin{theorem}[reliable vertex expander]
    \thmlab{reliable:vertex:expander}%
    \ThmRelVertexExpanderBody{}
\end{theorem}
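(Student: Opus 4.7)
The plan is to iteratively grow the damaged set, starting from $\BSet$, by absorbing any small set that fails to $h$-expand in the current residual graph, and then to bound the total size using a spectral edge-counting argument based on property~\itemref{p:3}. Fix $\delta = \Theta(\epsR)$ (say $\delta = \epsR/16$) and use the proper expander of degree $d = 4h/\epsR$ from \defref{proper:expander} (so $h/d = \epsR/4$); choose $\cfX{\epsR}$ small enough that $|U|\le \cfX{\epsR}n/h$ implies $|U|\le \cfX{\delta}n/d$, so that property~\itemref{p:2} applies to every set we add below. Set $\EBSet_0 := \BSet$; at step $i\ge 1$, if there is a witness $U\subseteq \VV\setminus \EBSet_{i-1}$ with $|U|\le \cfX{\epsR}n/h$ and $|\Gamma_{\G\setminus \EBSet_{i-1}}(U)|<h|U|$, set $\EBSet_i := \EBSet_{i-1}\cup U$; otherwise halt. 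The process terminates since $\VV$ is finite, and the required expansion property in $\G\setminus \EBSet$ holds at termination by construction.

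To bound $|\EBSet|$, write $W := \EBSet\setminus \BSet$ as the disjoint union of the added $U_i$'s. Property~\itemref{p:2} yields $|\Gamma_\G(U_i)|\ge (1-\delta)d|U_i|$; since fewer than $h|U_i|$ of these neighbors survive outside $\EBSet_{i-1}$, at least $((1-\delta)d-h)|U_i|$ of them (and hence that many edges out of $U_i$) land in $\EBSet_{i-1}$. Summing over $i$ and noting that each edge of $\G$ is counted at most once (charged to the later-added endpoint), one obtains
\[
((1-\delta)d - h)\,|W| \;\le\; |\EdgesY{W}{\BSet}| + |\EdgesY{W}{W}|.
\]
The expander mixing lemma---a consequence of property~\itemref{p:3}---upper-bounds the right-hand side by $\Of\bigl(d|W|\cdot|\EBSet|/n + \lambda d\sqrt{|\BSet||W|} + \lambda d|W|\bigr)$ with $\lambda\le C/\sqrt d$. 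Dividing by $d|W|$ and plugging in $h/d = \epsR/4$ and $\delta=\epsR/16$, in the nontrivial regime $|\BSet|<n/(1+\epsR)$ the left-hand side exceeds the ``easy'' terms on the right by $\Omega(\epsR)$, forcing $\lambda\sqrt{|\BSet|/|W|} = \Omega(\epsR)$ and hence $|W|/|\BSet| = \Of(\lambda^2/\epsR^2) = \Of(C^2/(h\epsR))$, which is at most $\epsR$ precisely when $h\ge \cfX{\epsR}^{-2}$ for $\cfX{\epsR} = \Theta(\epsR/C)$. The degenerate regime $|\BSet|\ge n/(1+\epsR)$ is handled trivially since $|W|\le n-|\BSet|\le \epsR|\BSet|$ automatically.

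For the diameter bound, a BFS from any $v\in \VV\setminus \EBSet$ inside $\G\setminus \EBSet$ produces a ball of radius $r$ of size at least $h^r$ as long as it stays of size at most $\cfX{\epsR}n/h$ (by the just-proved $h$-expansion). Once the ball exceeds that threshold, property~\itemref{p:1} applied in the original graph gives that any set of size $\ge 12n/(\delta d) = \Of(n/h)$ has neighborhood of size $(1-\delta)n$ in $\G$, and subtracting the now-bounded $\EBSet$ leaves at least $n/2$ vertices in the residual neighborhood; thus after at most $\lceil\log_h n\rceil$ BFS steps every ball contains more than $n/2$ vertices, and any two such balls must intersect, giving $\mathrm{diam}(\G\setminus \EBSet)\le 2\lceil\log_h n\rceil$. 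The main obstacle throughout is the size bound: a naive vertex-based charging (treating each vertex of $\EBSet$ as a ``witness'' up to $d$ times) loses a factor of $d$ and cannot yield the $(1+\epsR)$ factor; the spectral edge upper bound from~\itemref{p:3} is what rescues the argument, and since $\lambda^2 = \Of(1/d) = \Of(\epsR/h)$, this is precisely what forces the degree requirement $h\ge \cfX{\epsR}^{-2}$.
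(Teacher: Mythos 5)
Your construction of $\EBSet$ is genuinely different from the paper's: you greedily absorb entire witness sets $U$ that fail to $h$-expand in the current residual graph and then charge edges, whereas the paper (\secref{analysis:gp-2t}) grows a shadow one \emph{vertex} at a time, adding $u$ as soon as $|\EdgesY{u}{\BSet\cup\SSet_{i-1}}|\ge \eps d$ for the carefully calibrated threshold $\eps=(1+\epsR)\bigl(|\BSet|/n+\lambda/\sqrt{\epsR}\bigr)$ of \Eqref{epsilon}, and only afterwards deduces expansion outside $\EBSet$ from \itemref{p:2}. That calibration is not cosmetic, and it is where your size bound has a genuine gap. Your witness criterion translates to the \emph{absolute} statement ``at least $((1-\delta)d-h)|U|=(1-\tfrac{5}{16}\epsR)d|U|$ edges of $U$ enter $\EBSet_{i-1}$,'' which the mixing lemma can only declare anomalous when $|\EBSet_{i-1}|/n$ is bounded away from $1-\tfrac{5}{16}\epsR$. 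After dividing your charging inequality by $d|W|$ you get $1-\tfrac{5}{16}\epsR \le (|\BSet|+|W|)/n + \lambda\sqrt{|\BSet|/|W|}+\lambda$, and when $|\BSet|$ is close to $n/(1+\epsR)$ (so $|\BSet|/n\approx 1-\epsR$) the slack $1-\tfrac{5}{16}\epsR-|\BSet|/n\approx\tfrac{11}{16}\epsR$ is smaller than the $|W|/n$ term you must tolerate (up to $\approx\epsR$ at the first crossing of $\epsR|\BSet|$), so the inequality becomes vacuous and no bound on $|W|$ follows. Your claim that the left side ``exceeds the easy terms by $\Omega(\epsR)$'' throughout the regime $|\BSet|<n/(1+\epsR)$ is therefore false near that boundary, and no choice of the conclusion constant closes the two regimes: the trivial bound $|W|\le n-|\BSet|$ needs $|\BSet|\ge n/(1+C\epsR)$ while the mixing argument needs $|\BSet|\lesssim(1-(C+\tfrac{5}{16})\epsR)n$. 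The paper's relative threshold sidesteps this entirely, since $\eps-(|\BSet|+|\SSet_{i-1}|)/n\ge(1+\epsR)\lambda/\sqrt{\epsR}$ holds for \emph{every} attack density as long as $|\SSet_{i-1}|\le\epsR|\BSet|$ (\clmref{cl:1}).

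The gap is patchable at the price the paper itself pays: declare the attack degenerate already when $|\BSet|\ge(1-c\epsR)n$ for a suitable constant $c$ and simply set $\EBSet=\VV$ there, giving $|\EBSet|\le(1+\Of(\epsR))|\BSet|$, then rescale $\epsR$ (the paper does exactly this, proving $(1+5\epsR)$-reliability in \lemref{reliably:large:subsets}). Note that your current degenerate-regime disposal (``$|W|\le n-|\BSet|\le\epsR|\BSet|$ automatically'') only rescues the size bound, not the diameter: if the process halts with a nonempty residual of $\ll\epsR n$ vertices you still owe the bound $\diamX{\G\setminus\EBSet}\le 2\lceil\log_h n\rceil$, and your intersection step needs $(1-2\delta)n-|\EBSet|>0$, which again fails for $|\EBSet|$ close to $n$; taking $\EBSet=\VV$ in the extended degenerate regime makes both claims vacuous. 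Two smaller points: after the BFS ball first exceeds $\cfX{}n/h$ it has not yet reached the $12n/(\delta d)=48n/h$ threshold of \itemref{p:1}, so you need one more expansion step (which works since $\cfX{}n\ge 48n/h$ once $h\ge\cfX{}^{-2}$, but costs an additive constant in the radius accounting); and the rest of your argument --- the per-step charging $|\EdgesY{U_i}{\EBSet_{i-1}}|\ge((1-\delta)d-h)|U_i|$, the once-per-edge accounting, and the parameter bookkeeping $\lambda^2=\Of(\epsR/h)$ forcing $h\ge\cfX{\epsR}^{-2}$ --- is correct and matches the paper's parameter regime.
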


The proof is deferred to \secref{r:v:e:proof}.



\section{Covers for trees, bounded spread %
   metrics, and planar graphs}
\seclab{covers}

\begin{definition}
    For a finite metric space $\Metric = (\PS,\distC)$, a
    \emphi{$t$-cover}, is a family of subsets
    $\Cover = \Set{ \SSet_i \subseteq \PS}{i=1,\ldots m}$, such that
    for any $\pp,\pq \in \PS$, there exists an index $i$, such that
    $\pp,\pq \in \SSet_i$, and
    \begin{equation*}
        \diamX{ \SSet_i}/t \leq  \dMY{\pp}{\pq} \leq \diamX{ \SSet_i}.
    \end{equation*}
    The \emphi{size} of a cover $\Cover$ is
    $\mathrm{size}(\Cover)=\sum_{\SSet \in \Cover} \cardin{\SSet}$.
    Recalling \defref{depth}, the \emphi{degree} in $\Cover$ of a
    point $\pp \in \PS$ is the number of sets of $\Cover$ that contain
    it.  The \emphi{depth} of $\Cover$ is the maximum degree of any
    element of $\PS$, and is denoted by $\Depth(\Cover)$.
\end{definition}

\subsection{Lower bounds}

Unfortunately, in the worst case, the depth and the size of any cover
must depend on the spread of the metric.

\begin{proposition}
    \proplab{cover:lower:bound}%
    For any parameter $t >1$, any integer $h> 1$, $\SpreadC = t^h$,
    and any $n\ge h$, there exists a metric $\Metric=(\PS,\DistChar)$
    of $n$ points, such that
    \begin{compactenumI}
        \item $\SpreadX{\PS} = \SpreadC$, and
    
        \item any $t$-cover $\Cover$ of $\PS$ has size
        $\Omega(n \log_t \SpreadC) = \Omega(nh)$, average degree
        $\geq h/2$, and depth $h$.
    \end{compactenumI}
\end{proposition}

\begin{proof}
    For simplicity of exposition, assume that $h$ divides $n$. Let
    $\PS_i$ be a set of $n/h$ points, such that the distance between
    any pair of points of $\PS_i$ is $\ell_i = (t+\eps)^i$, for some
    fixed small $\eps >0$, for $i=1,\ldots,h$. Furthermore, assume
    that the distance between any point of $\PS_i$ and any point of
    $\PS_j$, for $i < j$, is $\ell_j$.

    Let $\PS = \cup \PS_i$, and observe that the distance function
    defined above is a metric (it is the union of uniform metrics of
    different resolutions). Now, consider any $t$-cover $\Cover$ of
    $\PS$. The \emphw{rank} of a cluster $\cl \in \Cover$, is the
    highest $j$, such that $\PS_j \cap \cl \neq\emptyset$.  We can
    assume that all the clusters of $\cl$ have at least two points, as
    otherwise they can be removed. For any index $j \in \IRX{h}$, any
    point $\pp\in \PS_j$ and any point
    $\pq \in \cup_{i=1}^j \PS_i - \pp$, by definition, there exists a
    cluster $\cl \in \Cover$, such that $\pp,\pq \in \cl$, and
    $\diamX{\cl} \leq t \cdot \ell_j$, since
    $\dMY{\pp}{\pq} = \ell_j$. It follows that $\cl$ can not contain
    any point of $\cup_{i=j+1}^h \PS_i$.  Namely, the rank of $\cl$ is
    $j$.

    If there are two clusters of rank $j$ in $\Cover$, then we can
    merge them, since merging does not increase their diameter, and
    such an operation does not increase the size of the cover, and the
    degrees of its elements. As such, in the end of this process, the
    cover $\Cover$ has $h$ clusters, and for any $j \in \IRX{h}$,
    there is a cluster $\cl_j \in \Cover$ that is of rank $j$, and
    contains (exactly) all the elements of $\cup_{i=1}^j \PS_i$.
    Namely, $\Cover = \{ \cl_1, \ldots, \cl_h \}$, where
    $\cl_j = \cup_{i=1}^j \PS_i$. It is easy to verify that this cover
    has the desired properties.
\end{proof}

\begin{proposition}
    For any $t\in{2,3,\ldots}$, and any sufficiently large $n$ there
    exists an $n$-point metric space for which any $t$-cover must be
    of size at least $\Omega(n^{1+{1}/{2t}})$.
\end{proposition}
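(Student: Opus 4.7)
The plan is to realize the extremal metric as the shortest-path metric of a high-girth graph with many edges, and to derive the cover-size lower bound by charging each graph-edge to a ``nearby'' set in the cover.

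First, invoke the classical probabilistic construction of \Erdos (cf.\ \remref{erdos}): for every integer $k\geq 2$ there is an $n$-vertex graph with $\Omega(n^{1+1/k})$ edges and girth strictly greater than $k$ (take $G(n,p)$ with $p=\Theta(n^{-1+1/k})$ and delete one edge per short cycle). Instantiating this with $k=2t$ produces $\G=(\PS,\Edges)$ with $m=\cardin{\Edges}=\Omega(n^{1+1/(2t)})$ and girth $>2t$. Let $\Metric=(\PS,\dGC)$ be its shortest-path metric, and fix an arbitrary $t$-cover $\Cover$ of $\Metric$. For every edge $uv\in\Edges$ one has $\dGY{u}{v}=1$, so $\Cover$ contains a set $S_{uv}$ with $u,v\in S_{uv}$ and $\diamY{\G}{S_{uv}}\leq t$.

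The heart of the argument is the local-sparsity claim: for every $S\subseteq\PS$ with $\diamY{\G}{S}\leq t$, the induced subgraph satisfies $\cardin{\EdgesX{\G[S]}}=\Of(\cardin{S})$. Granting this, a double-counting argument yields
\begin{equation*}
   m \;\leq\; \sum_{S\in\Cover}\cardin{\EdgesX{\G[S]}}
   \;=\; \Of\!\pth{\sum_{S\in\Cover}\cardin{S}}
   \;=\; \Of(\mathrm{size}(\Cover)),
\end{equation*}
and rearranging gives $\mathrm{size}(\Cover)=\Omega(m)=\Omega(n^{1+1/(2t)})$, as required.

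To establish the sparsity claim, fix $v\in S$ (so $S\subseteq\ballY{v}{t}$) and consider the BFS tree $\Tree$ of $\G$ rooted at $v$. Any chord $xy\in\EdgesX{\G[S]}\setminus\Tree$ closes a fundamental cycle of length $\dGY{v}{x}+\dGY{v}{y}-2\dGY{v}{\lca(x,y)}+1\leq 2t+1$, and the girth hypothesis $g>2t$ forces this cycle to have length exactly $2t+1$; consequently $\lca(x,y)=v$ and $\dGY{v}{x}=\dGY{v}{y}=t$. Thus every chord runs between depth-$t$ vertices in two distinct subtrees of $v$. A second appeal to girth---applied to the cycle formed by two chord edges together with the relevant tree paths---rules out dense chord configurations and keeps the chord count $\Of(\cardin{S})$, from which the claim follows.

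The main obstacle is precisely the sparsity claim at girth only $>2t$. A cleaner variant requiring girth $\geq 2t+2$ would make $\G[\ballY{v}{t}]$ literally a tree and give $\cardin{\EdgesX{\G[S]}}\leq\cardin{S}-1$ for free; but the \Erdos construction at that girth yields only $\Omega(n^{1+1/(2t+1)})$ edges, losing the desired exponent. Squeezing the sharper $1/(2t)$ exponent therefore requires the careful chord-counting above under the weaker girth assumption, and that is where the real work lies.
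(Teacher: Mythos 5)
Your overall skeleton is the paper's: build a dense graph of large girth, note that every edge $pq$ (with $\dGY{p}{q}=1$) must lie in a cover set $S$ of diameter at most $t$, show such sets are sparse, and double count. The gap is in the sparsity claim, and it is forced by an arithmetic error one step earlier. The deletion argument is better than you state: taking $G(n,p)$ with $p=c\,n^{-1+1/(g-2)}$ for a small constant $c$, the expected number of cycles of length at most $g-1$ is $\Of\bigl((np)^{g-1}\bigr)=\Of\bigl(c^{g-1}n^{1+1/(g-2)}\bigr)$, which is the \emph{same order} as the number of edges $\Theta\bigl(c\,n^{1+1/(g-2)}\bigr)$ but smaller by the factor $c^{g-2}$; deleting one edge per short cycle therefore leaves a graph of girth at least $g$ with $\Omega\bigl(n^{1+1/(g-2)}\bigr)$ edges. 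With $g=2t+2$ this is $\Omega\bigl(n^{1+1/(2t)}\bigr)$ — no exponent is lost, contrary to your closing paragraph. At girth $\geq 2t+2$ any set $S$ with $\diamX{S}\le t$ induces a forest, so $|E(S)|<|S|$ and the double count finishes immediately; this is exactly the route the paper takes (its forest argument extracts, from a hypothetical cycle in $E(S)$, a closed walk of odd length $2j+1\geq \mathrm{girth}=2t+2$, forcing two points of $S$ at distance $j>t$).

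In the regime you actually work in — girth only $>2t$, i.e.\ $\geq 2t+1$ — the local-sparsity claim is false as stated, so the ``second appeal to girth'' cannot be made to work from the girth hypothesis alone. For $t=2$, a Moore graph of girth $5$ and diameter $2$ (Petersen, Hoffman--Singleton, and hypothetically degree $57$) has $\Theta(|V|^{3/2})$ edges while the whole vertex set $S=V$ has $\diamX{S}=2=t$; more generally, near-extremal girth-$5$ graphs with $\Theta(n^{3/2})$ edges and diameter $2$ exist, so $|E(S)|$ can be $\Theta(|S|^{3/2})$, not $\Of(|S|)$. Your first girth application correctly shows every chord joins two depth-$t$ vertices of the BFS tree lying in distinct subtrees of $v$, and a second application shows only that the chord-neighbors of a fixed depth-$t$ vertex lie in pairwise distinct subtrees of $v$ — neither bounds the number of chords by $\Of(|S|)$ (in Hoffman--Singleton each depth-$2$ vertex carries $6$ such chords, and in larger such graphs the count per vertex grows with the degree). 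To repair the proof, either use additional properties of the specific random graph beyond its girth, or — much more simply — take girth $\geq 2t+2$, which the corrected deletion argument provides at no cost in the exponent.
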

\begin{proof}
    Let $g=2t+2$.  By a standard probabilistic argument, for any
    sufficiently large $n\in \NN$ there exists a simple graph
    $G=(V,E)$ on $n$ vertices and $m=\Omega(n^{1+\frac{1}{g-2}})$
    edges whose girth is at least $g$. (This is not the best known
    bound, but it is sufficient for our purposes.)  Consider $G$ as a
    metric space with the shortest (unweighted) path metric.  Let
    $\Cover$ be a $t$-cover for $G$, and let
    $\Cover'=\Set{S\in \Cover }{ \diamX{S}\leq t}$.  For $S\subset V$
    let $E(S)=\Set{ uv \in E }{ u,v\in S}$.
    
    We claim that for every $S\in \Cover'$, $E(S)$ is a forest, and
    hence $|E(S)|<|S|$.  Indeed, Suppose $E(S)$ contains a cycle
    $(v_0,v_1,\ldots,v_h,v_0)$ such that
    $v_iv_{i+1}, v_0v_h\in E(S)\subseteq E$.  Denote
    $d_i=\dMY{v_0}{v_i}$. Since $\{v_i,v_{i+1}\}\in E$, we have
    $d_{i+1}-d_i\in\{-1,0,1\}$. Let $j$ be the smallest $i$ such that
    $d_{i+1}\le d_i$. Hence, $d_j=j$.  Let
    $P=(v_{j+1},u_1,\ldots u_k,v_0)$ be a shortest path in $G$ between
    $v_{j+1}$ and $v_0$. $P$'s length is at most $d_j=j$.  Thus, the
    sequence $v_0,v_1,\ldots v_j,v_{j+1},u_1,\ldots,u_k,v_0$ is closed
    path of length at most $2j+1$, and hence contains a cycle of
    length at most $2j+1$. By the girth condition, $2j+1\ge 2t+2$, and
    hence $d_j=j>t$. But since $v_0,v_j\in S$, this means that
    $\diamX{S}\ge d_j >t$ which contradicts the definition of
    $\Cover'$.

    For every edge $pq \in E$, $\dMY{p}{q}=1$, and by the $t$-covering
    condition there exists $S\in\mathcal C'$ such that $p,q\in S$.
    Therefore $pq\in E(S)$. Hence
    $E\subset \bigcup_{S\in \Cover'}E(S)$, and therefore
    \begin{equation*}
        c n^{1+\tfrac{1}{2t}}\le |E|\le \sum_{S\in \mathcal C'}|E(S)|
        < \sum_{S\in \Cover'}|S|=\mathrm{size}(\Cover')\le
        \mathrm{size}(\Cover).        
        \SubmitVer{\qedhere}%
    \end{equation*}
\end{proof}


\subsection{Cover for ultrametrics}

\begin{definition}
    \deflab{h:s:t}%
    A \emphic{hierarchically well-separated tree}{HST@\Term{HST}}
    (\emphOnly{\HST{}}) is a metric space defined on the leaves of a
    rooted tree $\Tree$. To each vertex $u \in \Tree$ there is an
    associated label $\Delta_u \ge 0$. This label is zero for all the
    leaves of $\Tree$, and it is a positive number for all the
    interior nodes.  The labels satisfy for every non-root vertex
    $v\in \Tree$, $\Delta_v \leq \Delta_{\parentX{v}}/k$, where
    $\parentX{v}$ is the parent of $v$ in $\Tree$.  The distance
    between two leaves $x,y\in \Tree$ is defined as
    $\Delta_{\lca(x,y)}$, where $\lca(x,y)$ is the least common
    ancestor of $x$ and $y$ in $\Tree$.  An \HST $\Tree$ is a
    \emphic{$k$-HST}{HST@\Term{HST}!k@$k$-\Term{HST}} if for every
    non-root vertex $v \in \Tree$,
    $\Delta_v \leq \Delta_{\parentX{v}}/k$.
\end{definition}

\HST{}s are one of the simplest non-trivial metrics, and surprisingly,
general metrics can be embedded randomly into HSTs with expected
distortion of $\Of( \log n )$ \cite{b-aamtm-98,frt-tbaam-04}.

\begin{definition}
    A metric $ \Metric=(\PS,\DistChar)$ is an \emphi{ultrametric}, if
    for any $x,y,z \in \PS$, we have that
    $\dMY{x}{z} \leq \max( \dMY{x}{y}, \dMY{y}{z} )$. Notice, that
    this is a stronger version of the triangle inequality, which
    states that $\dMY{x}{z} \leq \dMY{x}{y} + \dMY{y}{z} $.
\end{definition}
The following is folklore, and it also easy to verify (see, e.g.,
\cite[Lemma~3.5]{blmn-mrtp-05}.
\begin{lemma}
    For $k\ge 1$, every finite ultrametric can be $k$-approximated by
    a $k$-HST.
\end{lemma}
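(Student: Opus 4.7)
The plan is to exploit the standard observation that an ultrametric is essentially a tree in disguise. Concretely, for any $r \geq 0$ I would define the relation $x \sim_r y$ iff $\dMY{x}{y} \leq r$ on $\PS$. The strong triangle inequality $\dMY{x}{z} \leq \max(\dMY{x}{y},\dMY{y}{z})$ immediately gives transitivity, so $\sim_r$ is an equivalence relation, and the induced partitions form a refinement hierarchy (coarser as $r$ grows).

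First, for each integer $i$ let $\Pi_i$ denote the partition of $\PS$ induced by $\sim_{k^i}$. For $i \geq \lceil \log_k \diamX{\PS}\rceil$ this is $\{\PS\}$, and for $i$ sufficiently small it is the singleton partition. I would then build the tree $\Tree$ whose internal nodes are the equivalence classes that appear across all levels: a class $C$ first appearing at level $i$ (i.e., $C \in \Pi_i$ but $C \notin \Pi_{i-1}$) becomes a node with label $\Delta_C := k^i$, parent-child relations are set inclusion (so $\parentX{C}$ is the unique minimal $C' \supsetneq C$ among chosen classes), and the singletons $\{x\}$ are the leaves with label $0$.

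Next, I would verify the $k$-HST property of \defref{h:s:t}: if $C$ has label $k^i$ and its parent $C'$ has label $k^j$, then $j > i$ (parent classes are strictly larger and appear at a strictly higher level), so $\Delta_{C'} = k^j \geq k^{i+1} = k \cdot \Delta_C$. For the $k$-approximation, pick distinct $x,y \in \PS$ and let $i^\star$ be the smallest integer with $\dMY{x}{y}\leq k^{i^\star}$. By construction, the $\lca$ of the leaves $x,y$ in $\Tree$ is the equivalence class at level $i^\star$ containing both, so $\dMZ{\Tree}{x}{y}= k^{i^\star}$. Non-contraction follows from $k^{i^\star} \geq \dMY{x}{y}$, while minimality of $i^\star$ gives $\dMY{x}{y} > k^{i^\star -1}$, hence $\dMZ{\Tree}{x}{y} = k \cdot k^{i^\star-1} < k\cdot \dMY{x}{y}$, as required.

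The construction is mechanical once the equivalence-relation observation is made, so I do not anticipate any substantive obstacle. The only mild bookkeeping concern is handling the (degenerate) case where $\Pi_i = \Pi_{i-1}$: by creating a node only at the \emph{first} level at which a class appears, I automatically collapse such ``idle'' levels, which keeps the parent-child ratio at least $k$ and avoids redundant internal nodes.
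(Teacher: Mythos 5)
Your proof is correct and is precisely the folklore argument the paper alludes to (the paper itself offers no proof, only a citation): quotient by the equivalence relations $\sim_{k^i}$, which the strong triangle inequality makes into a nested family of partitions, build the resulting laminar tree, and round each pairwise distance up to the next power of $k$, which loses at most a factor $k$ and never contracts. The one caveat is that your construction degenerates at $k=1$ (all scales $k^i$ coincide, so the partitions do not separate points), but that case is immediate since a finite ultrametric is exactly a $1$-HST via the tree of equivalence classes at each distinct distance value, so the lemma still holds there.
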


\begin{lemma}
    For $k>1$, every $k$-HST with spread $\SpreadC$ has $1$-cover of
    depth at most $\log_k \SpreadC$.
\end{lemma}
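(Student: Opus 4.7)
The plan is to use the canonical construction associated with the underlying tree of the \HST{}. For each internal vertex $u$ of the tree $\Tree$, let $\SSet_u$ denote the set of leaves in the subtree rooted at $u$, and define the cover $\Cover = \Set{\SSet_u}{u \text{ an internal vertex of } \Tree}$. I would first verify that $\Cover$ is a $1$-cover. For any distinct leaves $\pp, \pq \in \PS$ with $u = \lca(\pp, \pq)$, the \HST{} metric gives $\dMY{\pp}{\pq} = \Delta_u$, and both $\pp, \pq$ lie in $\SSet_u$. Any two leaves $x, y \in \SSet_u$ have their least common ancestor inside the subtree rooted at $u$, so its label is at most $\Delta_u$ by monotonicity of labels along a root-to-leaf path; hence $\diamX{\SSet_u} = \Delta_u = \dMY{\pp}{\pq}$, which verifies the $1$-cover condition exactly.

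For the depth bound, I would first preprocess the tree by collapsing any internal vertex that has a single child (this preserves the \HST{} metric on leaves and only tightens the $k$-\HST{} inequality between surviving ancestor--descendant pairs). After this reduction, every internal vertex has at least two leaf descendants. Now fix a leaf $\pp$ and enumerate its internal ancestors bottom-up as $u_1, u_2, \ldots, u_r$, where $u_r$ is the root; these are precisely the sets of $\Cover$ that contain $\pp$, so the degree of $\pp$ equals $r$. The $k$-\HST{} inequality yields $\Delta_{u_{i+1}} \geq k\,\Delta_{u_i}$ for each $i$, so $\Delta_{u_r} \geq k^{r-1}\,\Delta_{u_1}$. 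At the top of the chain, $\Delta_{u_r}$ is the root's label, which equals $\diamX{\PS}$. At the bottom, since $u_1$'s subtree contains $\pp$ together with at least one other leaf $\pq$, we have $\Delta_{u_1} = \dMY{\pp}{\pq} \geq \min_{x \neq y} \dMY{x}{y}$. Combining these, $k^{r-1} \leq \SpreadC$, and therefore $r \leq 1 + \log_k \SpreadC$, which gives the stated depth bound.

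There is essentially no hard step: the construction is forced by the tree structure, the $1$-cover property is immediate from how labels interact with least common ancestors, and the logarithmic depth bound drops out from the geometric decay of labels along any root-to-leaf path mandated by the $k$-\HST{} inequality. The only mild subtlety is eliminating degenerate internal vertices via the initial collapsing step, which is what lets me identify the label at the lowest internal ancestor with an actual inter-leaf distance, rather than with a label that does not correspond to any leaf pair.
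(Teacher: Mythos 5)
Your construction is exactly the paper's (the cover by leaf-sets of subtree roots), and since the paper's own proof just declares the properties ``immediate,'' you are supplying the details it omits rather than taking a different route. One small quibble: your chain argument correctly yields depth at most $1+\log_k \SpreadC$, not the stated $\log_k \SpreadC$ (a two-leaf $k$-HST has spread $1$ but depth $1$), so the off-by-one is really an imprecision in the lemma statement itself; it is harmless downstream, where the bound is only used inside the $\Of(\eps^{-1}\log\SpreadC)$ of \corref{h:s:t:cover}.
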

\begin{proof}
    Let $T$ be the tree of the HST. With every vertex $u\in T$ we
    associate a cluster $C_u$ the leaves of the subtree rooted at
    $u$. The covers is defined as $\Cover=\Set{C_u}{ u\in T}$.  The
    properties of the cover are immediate.
\end{proof}

\begin{corollary}
    \corlab{h:s:t:cover}%
    Let $\Metric = (\PS, \DistChar)$ be an ultrametric over $n$ points
    with spread $\SpreadC$. For any $\eps \in (0,1)$, one can compute
    a $(1+\eps)$-cover of $\Metric$ of depth
    $\Of( \eps^{-1} \log \SpreadC)$.
\end{corollary}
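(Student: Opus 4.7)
The plan is to chain the two preceding lemmas with the parameter choice $k=1+\eps$. First, invoke the folklore lemma to $(1+\eps)$-approximate the ultrametric $\Metric$ by a $(1+\eps)$-\HST, call it $\Tree$, on the same point set $\PS$. By definition of approximation,
\begin{equation*}
    \dMY{\pp}{\pq} \leq \dMZ{\Tree}{\pp}{\pq} \leq (1+\eps)\dMY{\pp}{\pq},
    \qquad \forall \pp,\pq \in \PS .
\end{equation*}
In particular the spread of the \HST satisfies $\SpreadX{\Tree} \leq (1+\eps)\SpreadC \leq 2\SpreadC$. Next, apply the second lemma to the $(1+\eps)$-\HST $\Tree$ to obtain a $1$-cover $\Cover$ (in the \HST metric) of depth at most
\begin{equation*}
    \log_{1+\eps}\!\pth{(1+\eps)\SpreadC} \;=\; 1+\log_{1+\eps}\SpreadC \;=\; \Of\!\pth{\eps^{-1}\log \SpreadC},
\end{equation*}
using $\ln(1+\eps)=\Theta(\eps)$ for $\eps\in(0,1)$.

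It remains to verify that the very same family $\Cover$ is a $(1+\eps)$-cover of the \emphw{original} ultrametric $\Metric$. Fix $\pp,\pq \in \PS$ and let $\SSet\in \Cover$ be a set covering them in $\Tree$, so $\diamY{\Tree}{\SSet} = \dMZ{\Tree}{\pp}{\pq}$. Because $\pp,\pq\in \SSet$, trivially $\dMY{\pp}{\pq}\le \diamY{\Metric}{\SSet}$. Conversely, distortion of the embedding gives, for every $\pp',\pq'\in \SSet$, $\dMY{\pp'}{\pq'} \ge \dMZ{\Tree}{\pp'}{\pq'}/(1+\eps)$, hence $\diamY{\Metric}{\SSet}\le \diamY{\Tree}{\SSet}$ (since $\dMZ{\Tree}{\cdot}{\cdot}$ dominates $\dMY{\cdot}{\cdot}$) and $\dMY{\pp}{\pq}\ge \dMZ{\Tree}{\pp}{\pq}/(1+\eps) = \diamY{\Tree}{\SSet}/(1+\eps) \ge \diamY{\Metric}{\SSet}/(1+\eps)$. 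Therefore $\diamY{\Metric}{\SSet}/(1+\eps) \le \dMY{\pp}{\pq} \le \diamY{\Metric}{\SSet}$, which is exactly the $(1+\eps)$-cover condition.

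Since the family of subsets and hence its depth is unchanged when we move from the \HST metric back to $\Metric$, the depth bound $\Of(\eps^{-1}\log \SpreadC)$ transfers. There is essentially no obstacle here; the only point to double-check is the spread blow-up when passing to the \HST, which contributes a harmless additive constant inside the logarithm and is absorbed into the $\Of(\cdot)$ notation.
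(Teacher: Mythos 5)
Your proposal is correct and is exactly the argument the paper intends: the corollary is stated as an immediate consequence of the two preceding lemmas, obtained by setting $k=1+\eps$, approximating the ultrametric by a $(1+\eps)$-\HST, and taking the $1$-cover of the \HST given by its subtree clusters. Your verification that the cover quality and depth transfer back to the original metric is careful and sound.
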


\subsection{Cover for general finite metrics}

We need the following result.
\begin{lemma}[\cite{mn-rppds-07}]
    \lemlab{m:n}%
    Let $(\PS, \DistChar)$ be an $n$-point metric space and
    $k \geq 1$. Then there exists a distribution over decreasing
    sequences of subsets
    $\PS = \PS_0 \supsetneq \PS_1 \supsetneq \cdots \supsetneq \PS_s =
    \emptyset$ ($s$ itself is a random variable), such that for all
    $\mu > -1/k$, we have
    \begin{math}
        \Ex{ \sum_{j=1}^{s-1} |\PS_j|^\mu } \leq%
        \max \pth{ \frac{k}{1 + \mu k}, 1} \cdot n^{\mu+1/k},
    \end{math}
    and such that for each $j \in \IRX{s}$ there exists an ultrametric
    $\rho_j$ on $\PS_{j-1}$ satisfying for every $\pp, \pq \in \PS$,
    that $\rho_j (\pp, \pq) \geq \dMY{\pp}{\pq}$, and if
    $\pp \in \PS_{j-1}$ and $\pq \in \PS_{j-1} \setminus \PS_j$ then
    $\rho_j (\pp, \pq) \leq \Of(k) \cdot \dMY{\pp}{\pq}$.
\end{lemma}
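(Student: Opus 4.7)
The plan is to construct the sequence $(P_j, \rho_j)$ by iteratively applying a single-round random Ramsey partition of Calinescu--Karloff--Rabani (CKR) type and peeling off, at each stage, the points that receive small distortion. The workhorse is a one-shot procedure that, given any $P' \subseteq P$, produces a random ultrametric $\rho$ on $P$ (with $\rho \geq \DistChar$) and a random \emph{good set} $G \subseteq P'$ such that every point of $G$ has distortion $O(k)$ under $\rho$ with respect to every point of $P$, while $\Prob{p \in G} \geq |P'|^{-1/k}$ for each $p \in P'$. Everything else follows in a routine way, and the hypothesis $\mu > -1/k$ turns out to be exactly the convergence threshold of the resulting telescoping moment sum.

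For the single-round primitive, fix $P' \subseteq P$ of diameter $\Delta$. Build a random hierarchical partition of $P'$ top-down at scales $\Delta_i = \Delta/8^i$: refine each cluster at scale $\Delta_{i-1}$ by scanning its points in a uniformly random order and assigning, for the current pivot, all unassigned points within distance $\beta_i \Delta_i$ to a new child cluster, where $\beta_i$ is drawn from the Mendel--Naor density on $[1/8,1/4]$ (a density tuned so that per-scale padding probabilities telescope cleanly). Label the level-$i$ partition with the value $8\Delta_i$; this yields an ultrametric $\rho$ on $P'$ with $\rho \geq \DistChar$, and extend $\rho$ to $P$ by assigning each $p \in P \setminus P'$ to the lowest-level cluster containing a nearest point of $P'$. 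Call $p \in P'$ \emph{good} if at every scale $\Delta_i$ the ball $\ballY{p}{c \Delta_i}$ (for a fixed absolute constant $c>0$) stays inside the level-$i$ cluster of $p$. A good point $p$ automatically satisfies $\rho(p,q) \leq O(k)\dMY{p}{q}$ for every $q \in P$: if $\rho(p,q)=8\Delta_i$ then the badness of $p$'s cluster at scale $\Delta_i$ is violated unless $\dMY{p}{q}\geq c\Delta_i$. The Mendel--Naor padding lemma bounds the failure probability at scale $i$ by the logarithmic ratio of two concentric ball counts around $p$, and the telescoping across scales together with the tuned density for $\beta_i$ yields $\Prob{p \text{ good}} \geq |P'|^{-1/k}$.

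With this primitive in hand, set $P_0 = P$ and, given $P_{j-1}\neq \emptyset$, apply the primitive to $P' = P_{j-1}$ to get $\rho_j$ and a random good set $G_j \subseteq P_{j-1}$; set $P_j = P_{j-1}\setminus G_j$. The distortion condition $\rho_j(p,q) \leq O(k)\dMY{p}{q}$ for $q \in P_{j-1}\setminus P_j = G_j$ is then immediate from the definition of good. For the moment bound, the single-round estimate gives $\ExCond{|P_j|}{P_{j-1}} \leq |P_{j-1}|\bigl(1 - |P_{j-1}|^{-1/k}\bigr)$. Applying the concave map $x \mapsto x^{\mu+1/k}$ -- which is concave precisely when $\mu+1/k>0$ -- and using Jensen's inequality yields a recursion of the form $\ExCond{|P_j|^{\mu+1/k}}{P_{j-1}} \leq |P_{j-1}|^{\mu+1/k} - c(k,\mu)|P_{j-1}|^{\mu}$. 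Telescoping over $j=1,\ldots,s-1$ and discarding the nonnegative residual at step $s$ produces $c(k,\mu)\Ex{\sum_{j=1}^{s-1}|P_j|^{\mu}} \leq n^{\mu+1/k}$, and unpacking the constant gives exactly the stated $\max(k/(1+\mu k), 1)$.

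The real obstacle sits in the single-round local padding estimate: choosing the distribution of $\beta_i$ and executing the scale-by-scale telescoping so that the per-scale failure probabilities compound to the sharp $|P'|^{-1/k}$ bound. Without that tight per-round guarantee, the moment telescoping above delivers only a weaker exponent and the Ramsey-partition punchline is lost. The peeling iteration and the moment computation on top of the primitive are, by contrast, essentially mechanical.
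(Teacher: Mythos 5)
This lemma is not proved in the paper at all --- it is imported verbatim from Mendel and Naor \cite{mn-rppds-07} --- so there is no in-paper argument to compare against. Your sketch is, in outline, a faithful reconstruction of the original Mendel--Naor proof: a CKR-type hierarchical random partition at geometrically decreasing scales with a tuned radius density, a padding analysis whose per-scale failure bounds telescope to give $\Prob{p \text{ good}} \geq |P'|^{-1/k}$, then peeling off the good points and a moment computation. You have also correctly identified the padding estimate as the crux.

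Two steps are not as routine as you present them. First, the map $x \mapsto x^{\mu+1/k}$ is concave only for $0 < \mu+1/k \leq 1$, not ``precisely when $\mu+1/k>0$''. For $\mu+1/k>1$ Jensen runs the wrong way, and that is exactly the regime where the stated constant degenerates to $1$ (since $k/(1+\mu k)\le 1 \iff \mu+1/k\ge 1$); it needs a separate argument, e.g.\ bounding $|\PS_j|^{\mu} \le n^{\mu-1+1/k}\,|\PS_j|^{1-1/k}$ and invoking the concave case with exponent $1-1/k$. This is not a corner case here: the paper applies the lemma with $\mu=1$ in \lemref{cover:metric:spread:b}. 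Second, extending $\rho$ from $P'$ to all of $\PS$ is not just ``assign each outside point to the cluster of a nearest point of $P'$'': you must simultaneously preserve $\rho \ge \DistChar$ on all of $\PS$ and preserve $\rho(\pp,\pq) \le \Of(k)\cdot\dMY{\pp}{\pq}$ for good $\pp \in P'$ against arbitrary $\pq \in \PS$, including points far from $P'$; Mendel and Naor devote a separate extension lemma to this, and your one-line description does not yet define a valid ultrametric value for the new pairs. Both gaps are repairable along the lines of the original paper.
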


By computing a cover (using \corref{h:s:t:cover}) for each ultrametric
generated by the above lemma, we get the following.

\begin{lemma}
    \lemlab{cover:metric:spread:b}%
    For an $n$-point metric space $\Metric = (\PS,\distC)$ with spread
    $\SpreadC$, and a parameter $k >1$, one can compute, in polynomial
    time, an $\Of(k)$-cover of $\Metric$ of size
    $\Of( n^{1+1/k}\log \SpreadC)$ and depth
    $\Of( k n^{1/k} \log \SpreadC)$.
\end{lemma}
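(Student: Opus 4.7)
The plan is to invoke \lemref{m:n} once and attach an HST cover (\corref{h:s:t:cover}) to each of the $s$ produced ultrametrics $\rho_j$, restricting the $j$-th cover to the still-alive set $\PS_{j-1}$. Applying \lemref{m:n} with parameter $k$ yields a decreasing chain $\PS = \PS_0 \supsetneq \cdots \supsetneq \PS_s = \emptyset$ and ultrametrics $\rho_1, \ldots, \rho_s$ on $\PS$ with $\rho_j \geq \DistChar$ and $\rho_j \leq \Of(k) \DistChar$ on any pair touching $R_j := \PS_{j-1} \setminus \PS_j$. The spread of $\rho_j|_{\PS_{j-1}}$ is $\Of(k \SpreadC)$: $\rho_j \geq \DistChar$ controls the minimum, and triangulating through any point of $R_j$ caps the maximum at $\Of(k)\diamX{\PS}$. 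I would then apply \corref{h:s:t:cover} with a constant $\eps$ to $\rho_j|_{\PS_{j-1}}$ to obtain a cover $\Cover_j$ of depth $\Of(\log \SpreadC)$; since an HST on $m$ leaves of depth $D$ has total cluster mass at most $mD$, the size of $\Cover_j$ is $\Of(|\PS_{j-1}| \log \SpreadC)$. The final output is $\Cover := \bigcup_j \Cover_j$.

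For correctness, I define $\tau(x) := \max\{j : x \in \PS_{j-1}\}$ and, for a pair $p,q \in \PS$, set $j^* := \min(\tau(p), \tau(q))$. Assuming without loss of generality $\tau(p) \leq \tau(q)$, we have $p \in R_{j^*}$; since the chain is decreasing and $\tau(q) \geq j^*$, also $q \in \PS_{j^*-1}$. Thus $\Cover_{j^*}$ supplies some $S \ni p,q$ with $\diamY{\rho_{j^*}}{S} \leq (1+\eps)\rho_{j^*}(p,q) \leq \Of(k)\DistChar(p,q)$, and combining with $\DistChar \leq \rho_{j^*}$ gives $\DistChar(p,q) \leq \diamX{S} \leq \Of(k)\DistChar(p,q)$, as required for an $\Of(k)$-cover.

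For the quantitative bounds, the total size is $\Of(\log \SpreadC) \sum_j |\PS_{j-1}|$ and a point $p$'s degree in $\Cover$ is at most $\Of(s \log \SpreadC)$, since $p$ can only appear in $\Cover_j$ for $j \leq \tau(p) \leq s$. Plugging $\mu = 1$ into \lemref{m:n} gives $\Ex{\sum_j |\PS_{j-1}|} = \Of(n^{1+1/k})$, and plugging $\mu = 0$ (valid because $k > 1$) gives $\Ex{s} = \Of(k n^{1/k})$. A Markov plus union-bound argument produces a realization of the partition for which both bounds hold up to constants, yielding the stated $\Of(n^{1+1/k}\log \SpreadC)$ size and $\Of(k n^{1/k}\log \SpreadC)$ depth. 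Since \lemref{m:n}'s distribution is samplable in polynomial time and the bounds are verifiable in polynomial time, resampling a constant number of times in expectation suffices for a polynomial-time construction.

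The main obstacle is the size accounting: covering each $\rho_j$ on all of $\PS$ costs $\Omega(n \log \SpreadC)$ per ultrametric and, summed over $s$ levels, fails to match $\Of(n^{1+1/k}\log \SpreadC)$. The fix is the restriction of $\Cover_j$ to $\PS_{j-1}$, which makes each ultrametric contribute $\Of(|\PS_{j-1}|\log \SpreadC)$ and thereby lets the $\mu = 1$ expectation bound of \lemref{m:n} control the sum; the $\tau, j^*$ argument above is precisely what ensures soundness of this restriction, as each pair is handled at the first level where both endpoints are still alive.
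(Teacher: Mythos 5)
Your proof is correct and follows essentially the same route as the paper: apply \lemref{m:n} once, build an HST cover via \corref{h:s:t:cover} on each ultrametric restricted to $\PS_{j-1}$, take the union, and control size and depth via the $\mu=1$ and $\mu=0$ cases of \lemref{m:n} together with a resampling step. Your write-up is in fact somewhat more careful than the paper's, spelling out the $\tau,j^*$ bookkeeping that justifies restricting each cover to the still-alive set and bounding the spread of the restricted ultrametrics.
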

\begin{proof}
    Using \lemref{m:n}, for the parameter $k$, compute the sequence
    $\PS = \PS_0 \supsetneq \PS_1 \supsetneq \cdots \supsetneq \PS_s
    =\emptyset$, and the associated ultrametrics
    $\rho_1, \ldots, \rho_{s-1}$. We build a $1$-\HST $\Tree_i$ for
    $\rho_i$, when restricted to the set $\PS_{i-1}$, for
    $i=1,\ldots, s-1$. For every \HST in this collection, we compute a
    $2$-cover using \corref{h:s:t:cover}. Let $\Cover$ be the union of
    all these covers.

    Since for every $\HST$ $\Tree_i$ the resulting cover has size
    $\Of( \cardin{\PS_i} \log \SpreadC)$, then by \lemref{m:n}
    (applied with $\mu =1$), we have
    \begin{equation*}
        \Ex{\Bigl.\smash{\sum_i} \cardin{\PS_i} \log \SpreadC}%
        =%
        \Of(n^{1+1/k} \log \SpreadC).
    \end{equation*}

    As for the quality of the cover, let $\pp, \pq$ be any two points
    in $\PS$, and assume (without loss of generality) that
    $\pp \in \PS_{i-1}$ and $\pq \in \PS_{i-1} \setminus \PS_i$ for
    some $i\in[s]$.  We have that
    $\rho_i(\pp,\pq) = \Of( k ) \cdot \dMY{\pp}{\pq}$, and since we
    computed a $2$-cover for this tree, there is a cluster in the
    computed cover that contains both points and its diameter is at
    most twice the distance between those points.

    The maximum depth, follows by using \lemref{m:n} with $\mu
    =0$. This implies that a point of $\PS$ participates in
    $s=\Of( k n^{1/k})$ \HST{}s, and each cover generated by such an
    \HST might a point an element at most $\Of( \log \SpreadC )$
    times.

    The bounds on the size and depth hold in expectation, and one can
    repeat the construction if they exceed the desired size by (say) a
    factor of eight. In expectation, after a constant number of
    iterations, the algorithm would compute with high probability a
    cover with the desired bounds.
\end{proof}

\subsection{Covers for trees}

Using a tree separator, and applying it recursively, implies the
following construction of covers for trees.

\begin{lemma}
    \lemlab{cover:trees}%
    For a weighted tree metric $\Tree = (\PS,\distC)$, with spread
    $\SpreadC$, and a parameter $\eps \in (0,1)$, one can compute in
    polynomial time a $(2+\eps)$-cover of $\Tree$ of depth
    $\Of( \eps^{-1} \log \SpreadC \log n)$, and size
    $\Of( n \eps^{-1} \log \SpreadC \log n)$, where
    $n = \cardin{\PS}$.
\end{lemma}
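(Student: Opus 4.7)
The plan is a centroid-decomposition of $\Tree$ combined with a geometric family of balls around each centroid. Recall that any tree on $m$ vertices has a \emph{centroid} vertex $c$ whose removal splits it into subtrees of at most $m/2$ vertices. I will apply this recursively: for the current subtree $\Tree'$ on vertex set $\PS'\subseteq \PS$, pick a centroid $c$, build some clusters centered at $c$ (described below), and recurse on each component of $\Tree'-c$. The recursion tree has depth $\Of(\log n)$, and each vertex $v\in \PS$ participates as a point in $\Of(\log n)$ of these subproblems.

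At centroid $c$ of $\Tree'$, I build the clusters $\cl_i = \Set{v\in \PS'}{\dMY{c}{v}\leq R_i}$, where the radii $R_i$ form a geometric progression with ratio $1+\eps/2$ ranging from the minimum pairwise distance of $\Metric$ up to $\diamY{\Metric}{\PS}$. There are $\Of(\eps^{-1}\log \SpreadC)$ such radii, so centroid $c$ contributes that many clusters. Since each vertex $v\in \PS$ is the member of exactly one subproblem per recursion level and lies in at most $\Of(\eps^{-1}\log\SpreadC)$ balls per subproblem, the depth of the resulting cover is $\Of(\eps^{-1}\log\SpreadC\,\log n)$, and the size bound $\Of(n\eps^{-1}\log\SpreadC\,\log n)$ follows.

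For correctness, fix $\pp,\pq\in \PS$ and let $\Tree'$ be the deepest subtree in the recursion containing both of them; let $c$ be its centroid. By the choice of recursion level, the tree path between $\pp$ and $\pq$ in $\Tree$ passes through $c$ (or one of $\pp,\pq$ equals $c$), and hence tree distances add: $\dMY{\pp}{\pq}=\dMY{\pp}{c}+\dMY{c}{\pq}$. Set $M=\max(\dMY{\pp}{c},\dMY{c}{\pq})$, and choose $R_i$ with $M\leq R_i\leq (1+\eps/2)M$, which exists by the geometric progression. Then $\pp,\pq\in \cl_i$, and
\begin{equation*}
    \diamY{\Tree}{\cl_i}\;\leq\;2R_i\;\leq\;(2+\eps)M\;\leq\;(2+\eps)\dMY{\pp}{\pq},
\end{equation*}
while $\diamY{\Tree}{\cl_i}\geq \dMY{\pp}{\pq}$ holds trivially since $\pp,\pq\in \cl_i$. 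This is exactly the $(2+\eps)$-cover condition.

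The main obstacle is making the $(2+\eps)$ factor tight: the factor $2$ is forced by the additivity of tree distances through the centroid (a pair $\pp,\pq$ on opposite sides of $c$ realizes $\dMY{\pp}{\pq}$ as the sum of two legs, while the smallest ball around $c$ containing both of them has radius equal to the \emph{larger} leg), and then the additional $\eps$ is the slack needed to round $M$ up to the nearest geometric radius. Everything else --- polynomial-time construction, centroid computation, and the accounting above --- is routine.
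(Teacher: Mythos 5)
Your proposal is correct and follows essentially the same route as the paper: the paper's ``separator node'' with all components of size at most $n/2$ is exactly your centroid, the clusters are the same geometric family of balls $\Set{v}{\dMY{\sep}{v}\le(1+\eps/2)^i}$ around it, and the $(2+\eps)$ bound is obtained by the same comparison of the ball's diameter $2R_i$ against the longer leg of the path through the centroid. The depth and size accounting (one subproblem per vertex per level, $\Of(\log n)$ levels, $\Of(\eps^{-1}\log\SpreadC)$ balls per centroid) also matches the paper's inductive bookkeeping.
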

\begin{proof}
    The proof is by induction on $n$. When $n=1$ the trivial cover is
    sufficient.  Assume next that $n\ge 1$ and the minimum distance in
    $\Tree$ is one.  Find a separator node $\sep$ in $\Tree$ such that
    there is no connected component in $\Tree$ larger than $n/2$ after
    removing $\sep$. For
    $i\in\{0,1,2,\dots,m = \ceil{\smash{\log_{1+\eps/2} \SpreadC}
    }\}$, define the sets
    \begin{math}
        \PS(\sep,i)%
        =%
        \Set{ \pp \in \PS}{ \dMY{\sep}{\pp} \leq (1+\eps/2)^i }.
    \end{math}
    By the inductive hypothesis, for each connected component $Q$ of
    $\Tree - \sep$ there is a $(2+\eps)$-cover $\Cover_Q$ of $Q$ of
    depth $\Of( \eps^{-1} \log \SpreadC \log (n/2))$ and size
    $\Of( n \eps^{-1} \log \SpreadC \log (n/2))$.  The cover for $P$
    is
    \begin{equation*}
        \Cover_P= \Set{\PS(\sep,i)}{ i=0,1,2,\dots,m }%
        \cup%
        {\bigcup\nolimits_Q} \Cover_Q.        
    \end{equation*}
    Since every element of $\PS$ participates in at most $m$ sets in
    each level of the recursion, the bound on the depth and size is
    immediate.

    As for the quality of the cover, by the inductive hypothesis, we
    need only to check pairs of points $\pp,\pq \in \PS$, that are
    separated by $\sep$. Assume
    $\dMY{\sep}{\pp} \geq \dMY{\sep}{\pq}$, and let $j$ be the minimum
    index, such that $\dMY{\sep}{\pp} \leq (1+\eps/2)^j$. We have that
    $\pp,\pq \in \PS(\sep,j)$, and
    \begin{equation*}
        \dMY{\pp}{\pq}%
        \geq%
        \dMY{\sep}{\pp}%
        \geq%
        (1+\eps/2)^{j-1}
        \qquad\text{and}\qquad%
        \diamX{\PS(\sep,j)}\leq 2 (1+\eps/2)^j.        
    \end{equation*}
    As such, we have
    \begin{math}
        \displaystyle%
        \frac{\diamX{\PS(\sep,j)}}{\dMY{\pp}{\pq}} \leq %
        \frac{2 (1+\eps/2)^j}{(1+\eps/2)^{j-1}}%
        =%
        2+\eps.
    \end{math}
\end{proof}

\subsection{Covers for planar graphs}

The next lemma can be traced back to the work of Lipton and Tarjan
\cite{lt-stpg-79}.

\begin{lemma}
    \lemlab{separator}%
    Let $\GA=(\PS,\Edges)$ be a planar triangulated graph with
    non-negative edge weights. There is a partition of $\PS$ to three
    sets $\PA,\PB,\PC$, such that \begin{compactenumi}
        \item $\cardin{\PA} \leq (2/3)n$ and
        $\cardin{\PB} \leq (2/3)n$,
        \item there is no edge between $\PA$ and $\PB$, and
        \item $\PC$ is composed of two interior disjoint shortest
        paths that share one of their endpoints, and an edge
        connecting their other two endpoints.
    \end{compactenumi}
\end{lemma}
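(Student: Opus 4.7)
The plan is to adapt the classical Lipton--Tarjan shortest-path separator argument to the weighted planar setting. First, root a shortest-path tree $\Tree$ at an arbitrary vertex $r \in \PS$ (via Dijkstra, since edge weights are non-negative), so that every root-to-vertex path in $\Tree$ is a shortest path in $\GA$. Note also that every subpath of a root-to-leaf path is itself a shortest path in $\GA$.

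The next step is to invoke planar duality. Since $\GA$ is triangulated, the edges not in $\Tree$ correspond to a spanning tree $\Tree^*$ of the dual graph $\GA^*$: the tree/cotree duality is a standard planar-graph fact. For each non-tree primal edge $e = uv$, the \emph{fundamental cycle} $C_e$ formed by the tree paths $r \leadsto u$, $v \leadsto r$, and the edge $e$ is a closed walk in $\GA$ whose trace (after collapsing the shared prefix up to $\mathrm{lca}(u,v)$) bounds a Jordan region in the planar embedding; the faces of $\GA$ inside this region are exactly one of the two components of $\Tree^* - e^*$. I then distribute each primal vertex among its (at most three) incident triangular faces with fractional weight, and run a centroid-style walk on $\Tree^*$: starting from any leaf and repeatedly moving toward the heavier side, I find a dual edge $e^*$ whose removal leaves at most $(2/3)n$ primal-vertex weight on either side.

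Given the chosen non-tree edge $e = uv$, let $s = \mathrm{lca}_{\Tree}(u,v)$. The two tree paths $P_1 : s \leadsto u$ and $P_2 : s \leadsto v$ are shortest paths (subpaths of root-paths in $\Tree$), they are interior-vertex-disjoint, and they share only the endpoint $s$. Define $\PC = V(P_1) \cup V(P_2)$, noting that $u$ and $v$ are the two non-shared endpoints and $uv \in \Edges$ is the required connecting edge; this is precisely the structure demanded by the lemma. Removing $\PC$ from the embedding leaves two open regions, and letting $\PA, \PB$ be the primal vertices inside and outside this Jordan curve (any prefix $r \leadsto s$ that is not on $\PC$ sits on the exterior side) gives a partition of $\PS \setminus \PC$. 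Since $\GA$ is triangulated and $\PC$ forms a simple closed curve in the embedding, every edge of $\GA$ joining a vertex of $\PA$ to a vertex of $\PB$ would have to cross $\PC$, contradicting planarity; hence no such edge exists. The balance guarantee from the centroid step (after possibly swapping $\PA$ and $\PB$) ensures $\cardin{\PA}, \cardin{\PB} \le (2/3)n$.

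The main obstacle is the centroid step on $\Tree^*$: one must verify that the face-weights transfer correctly into primal-vertex counts on each side of $C_e$, and that the edge-by-edge walk actually terminates at a dual edge achieving the $2/3$ balance. This is why the shared endpoint of the two separator paths is taken to be the LCA $s$ rather than $r$ itself --- otherwise the two paths $r\leadsto u$ and $r\leadsto v$ share their entire common prefix and are not interior-disjoint, violating the structural requirement on $\PC$.
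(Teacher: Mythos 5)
Your proposal follows essentially the same route as the paper's (sketched) proof: a shortest-path tree rooted at an arbitrary vertex, tree/cotree duality so that the non-tree edges form a spanning tree of the dual, a balanced separator edge in that dual tree, and the corresponding fundamental cycle (the two root-paths truncated at the LCA plus the non-tree edge) taken as $\PC$. The only quibble is your parenthetical claim that a primal vertex is incident to at most three triangular faces --- it is each face that has three incident vertices, not the other way around --- but this does not affect the argument, and the balance bookkeeping you flag as the remaining obstacle is precisely the part the paper also defers to Lipton--Tarjan.
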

\remove{%
   \begin{proof}
       Fix an arbitrary vertex $r' \in \PS$ and compute a
       shortest-path tree $\Tree$, rooted at $r'$, of the graph
       $\GA$. Notice, that for any vertex $u \in \PS$, the unique path
       $\pi(r',u)$ from $r'$ to $u$ in $\Tree$ is a shortest path in
       $\GA$. Thus, for two points $u,v \in \PS$, the set of points
       along the two shortest paths $\pi(r,u)$ and $\pi(r,v)$, where
       $r$ is the lowest common ancestor of $u$ and $v$ in $\Tree$, is
       a good candidate for the separator set $\PC$.

       Observe, that any edge $uv$ of $\GA$ that is not part of
       $\Tree$, forms a cycle with the edges of $\Tree$. The dual
       graph of $\GA$, where we remove the edges of $\Tree$, is a tree
       whose vertices have degrees at most~$3$.  In particular, it has
       a separator edge. Taking this edge in the primal, together with
       its shortest path in $\Tree$, induces the desired cycle
       $\PC$. This cycle partition $\GA$ into three sets
       $\PA, \PB, \PC$, with $\PA$ being the points outside of the
       cycle, $\PB$ being the points inside the cycle, and $\PC$ being
       the points contained on the cycle. Since $\GA$ is planar graph,
       there is no edge between $\PA$ and $\PB$. See \cite[Lemma
       2]{lt-stpg-79} for details.
   \end{proof}
}

\begin{definition}
    For a metric space $(\FMS,\DistChar)$ and a parameter $r$, an
    $r$-net $\Net$ is a maximal set of points in $\FMS$ satisfying:
    \begin{compactenumi}
        \item For any two net points $\pp,\pq \in \Net$, $p\ne q$, we
        have $\dMY{\pp}{\pq} > r$.
        \item For any $\pp \in \FMS$,
        $\distSetY{\pp}{\Net} = \min_{ \pq \in \Net} \dMY{\pp}{\pq}
        \leq r$.
    \end{compactenumi}
\end{definition}

A net can be computed by repeatedly picking the point furthest away
from the current net $\Net$, and adding it to the net if this distance
is larger than $r$, and stopping otherwise. We denote a net computed
by this algorithm by $\netY{\FMS}{r}$.

The following lemma testifies that if we restrict the net to lay along
a shortest path in the graph, locally the cover it induces has depth
as if the graph was one dimensional.

\begin{lemma}
    \lemlab{depth:balls}%
    Let $\G$ be a weighted graph, and let $\DistChar$ be the shortest
    path metric it induces. Let $\pi$ be a shortest path in $\G$ and
    let $\Net = \netY{\pi}{r} \subseteq \pi$ be a net computed for
    some distance $r>0$.  For some $R > 0$, consider the set of balls
    $\BallSet = \Set{\ballY{\pp}{R}}{\pp \in \Net}$. For any point
    $\pq \in \VX{\G}$, we have that the degree of $\pq$ in $\BallSet$
    is at most $2R /r +1$.
\end{lemma}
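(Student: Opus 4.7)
The plan is to show that the number of balls in $\BallSet$ containing $\pq$ is small by observing that the relevant net points all lie on the shortest path $\pi$, and that a shortest path is ``one-dimensional'' in the sense that graph distances between points on $\pi$ equal their distances along $\pi$.

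First, I would let $\Net' = \Set{\pp \in \Net}{\pq \in \ballY{\pp}{R}} = \Set{\pp \in \Net}{\dMY{\pp}{\pq} \le R}$, so that the degree of $\pq$ in $\BallSet$ is exactly $|\Net'|$. Writing $k = |\Net'|$, the goal reduces to showing $k \le 2R/r + 1$.

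Next, I would use the triangle inequality to bound the diameter of $\Net'$: for any $\pp_1, \pp_2 \in \Net'$, we have $\dMY{\pp_1}{\pp_2} \le \dMY{\pp_1}{\pq} + \dMY{\pq}{\pp_2} \le 2R$. Now I would exploit the fact that $\Net' \subseteq \pi$ and $\pi$ is a shortest path: ordering the net points in $\Net'$ along $\pi$ as $\pp_1, \pp_2, \ldots, \pp_k$, the distance along $\pi$ between $\pp_1$ and $\pp_k$ equals the sum of the distances between consecutive pairs along $\pi$, and because $\pi$ is a shortest path this path distance equals the graph distance $\dMY{\pp_1}{\pp_k}$. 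Similarly, the distance between consecutive net points along $\pi$ equals their graph distance, which by the net property is $> r$.

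Combining, I get
\begin{equation*}
    (k-1) r \;\le\; \sum_{i=1}^{k-1} \dMY{\pp_i}{\pp_{i+1}} \;=\; \dMY{\pp_1}{\pp_k} \;\le\; 2R,
\end{equation*}
which rearranges to $k \le 2R/r + 1$, as desired. The main subtlety to get right is the identification of the graph distance with the distance along $\pi$ for points on $\pi$; without the hypothesis that $\pi$ is a shortest path this step would fail and we would only get a weaker bound based on the (possibly much longer) path length. Everything else is a short triangle-inequality plus net-spacing computation.
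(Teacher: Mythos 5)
Your proposal is correct and follows essentially the same argument as the paper: identify the net points whose balls contain $\pq$, bound their pairwise distance by $2R$ via the triangle inequality (the paper phrases this as $\diamX{\ballY{\pq}{R}} \leq 2R$), and use the shortest-path property of $\pi$ to telescope the consecutive net-point gaps, each exceeding $r$. The one place you are slightly looser than necessary is writing $(k-1)r \leq \sum_i \dMY{\pp_i}{\pp_{i+1}}$ where the net separation actually gives a strict inequality, but this does not affect the stated bound.
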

\begin{proof}
    Let $\pp_1,\ldots, \pp_k$ be the points of
    $\Net \cap \ballY{\pq}{R}$ sorted by their order along $\pi$ --
    these are the only points that their balls in $\BallSet$ would
    contain $\pq$.  By the definition of the net,
    \begin{math}
        \dMY{\pp_i}{\pp_{i+1}} > r
    \end{math}
    for all $i$. Since $\pi$ is a shortest path, we also have that
    \begin{equation*}
        2R%
        \geq%
        \diamX{\ballY{\pq}{R}} %
        \geq%
        \dMY{\pp_1}{\pp_{k}} %
        =%
        \sum_{i=1}^{k-1} 
        \dMY{\pp_i}{\pp_{i+1}} %
        > (k-1)r.
        \SubmitVer{\qedhere}%
    \end{equation*}
\end{proof}

\paragraph*{Construction}
\seclab{planar_cover_constr} Let $\eps \in (0,1)$ be an input
parameter, and let $\G$ be a weighted planar graph. We assume that
$\G$ is triangulated, as otherwise it can be triangulated (we also
assume that we have its planar embedding). Any new edge $\pp\pq$ is
assigned as weight the distance between its endpoints in the original
graph. This can be done in linear time. As usual, we assume that the
minimum distance in $\G$ is one, and its spread is $\SpreadC$.

Let $\PC$ be the cycle separator given by \lemref{separator} made out
of two shortest paths $\pi_1$ and $\pi_2$.  Let $\pp_1, \pp_2,\pp_3$
be the endpoints of these two paths.

For $i=0,\ldots, m = \ceil{\smash{\log_{1+\eps/8} \SpreadC}}$, let
$\Net_i = \netY{\pi_1}{\eps r_i/8} \cup \netY{\pi_2}{\eps r_i/8} \cup
\{ \pp_1, \pp_2, \pp_3\}$, where $r_i = (1+\eps/8)^i$.  The associated
set of balls is
\begin{equation*}
    \BallSet_i = \Set{\ballY{\pp}{(1+\eps/8)r_i}}{ \pp \in \Net_i}.
\end{equation*}
The resulting set of balls is $\BallSet(\PC) = \bigcup_i
\BallSet_i$. We add the sets of $\BallSet(\PC)$ to the cover, and
continue recursively on the connected components of $\G - \PC$. Let
$\Cover$ denote the resulting cover.

\paragraph*{Analysis}

\begin{lemma}
    \lemlab{planar:cover}%
    For any two vertices $\pp, \pq \in \VX{\G}$, there exists a
    cluster $\cl \in \Cover$, such that $\pp,\pq \in \cl$, and
    \begin{math}
        \diamX{\cl} \leq (2+\eps)\dGY{\pp}{\pq}.
    \end{math}
    That is, $\Cover$ is a $(2+\eps)$-cover of $\G$.
\end{lemma}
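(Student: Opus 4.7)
The plan is a structural induction on $n = |\VX{\G}|$, with $n=1$ as the trivial base case. For the inductive step, fix $\pp,\pq \in \VX{\G}$, let $d=\dGY{\pp}{\pq}$, and fix a shortest $\G$-path $\pi$ between them. I split into two cases based on whether $\pi$ uses a vertex of $\PC$.

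Case 1 ($\pi$ avoids $\PC$): Then both endpoints lie in the same connected component $Q$ of $\G - \PC$, the entire path $\pi$ lies in $Q$, and hence $\dMZ{\G|_Q}{\pp}{\pq}=d$. By induction (applied to $\G|_Q$, which is strictly smaller since $|\PC|\ge 1$), the recursively built sub-cover $\Cover_Q$ contains a cluster $\cl$ with $\pp,\pq\in\cl$ and $\diamY{\G|_Q}{\cl} \leq (2+\eps)d$. Since distances in $\G|_Q$ dominate those in $\G$, we get $\diamY{\G}{\cl}\leq (2+\eps)d$.

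Case 2 ($\pi$ uses a vertex $\pz \in \PC$): Then $\dGY{\pp}{\pz}+\dGY{\pz}{\pq}=d$, so both $\dGY{\pp}{\pz}\leq d$ and $\dGY{\pq}{\pz}\leq d$. Pick the minimum index $i\in\{0,1,\dots,m\}$ with $r_i \geq d$ — such $i$ exists because $d\le \SpreadC$ and $r_m\ge \SpreadC$ — so $d \leq r_i \leq (1+\eps/8)d$. The point $\pz$ lies on $\pi_1$ or $\pi_2$ (or is one of $\pp_1,\pp_2,\pp_3$). By construction, $\Net_i$ contains $\{\pp_1,\pp_2,\pp_3\}$ together with $(\eps r_i/8)$-nets of $\pi_1$ and $\pi_2$, so there exists a net point $\pp'\in\Net_i$ with $\dGY{\pp'}{\pz}\leq \eps r_i/8$. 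The triangle inequality then gives
\begin{equation*}
    \dGY{\pp}{\pp'} \;\leq\; \dGY{\pp}{\pz}+\dGY{\pz}{\pp'} \;\leq\; d + \tfrac{\eps}{8}r_i \;\leq\; \pth{1+\tfrac{\eps}{8}}r_i,
\end{equation*}
and symmetrically $\dGY{\pq}{\pp'}\leq (1+\eps/8)r_i$. Hence $\pp,\pq\in \ballY{\pp'}{(1+\eps/8)r_i} \in \BallSet_i \subseteq \Cover$, a cluster of $\G$-diameter at most $2(1+\eps/8)r_i \leq 2(1+\eps/8)^2 d \leq (2+\eps)d$, where the final step uses $(1+\eps/8)^2 \leq 1+\eps/2$ for $\eps \in (0,1)$.

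The main conceptual point — easy to miss — is that a pair $\pp,\pq$ lying in the same component $Q$ of $\G-\PC$ may still have its shortest $\G$-path cross $\PC$, in which case induction on $\Cover_Q$ alone is useless since $\dMZ{\G|_Q}{\pp}{\pq}$ could be much larger than $d$. What makes the proof go through cleanly is that Case 2 covers every pair whose shortest $\G$-path visits $\PC$, irrespective of which components $\pp$ and $\pq$ inhabit; the case split is thus by geometry of $\pi$ rather than by component membership. Including the three corners $\pp_1,\pp_2,\pp_3$ explicitly in $\Net_i$ handles the boundary case where $\pz$ is the shared endpoint (or on the connecting edge) of the two shortest paths, and the rest is routine arithmetic.
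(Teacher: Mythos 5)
Your proof is correct and follows essentially the same route as the paper's: locate a vertex where a shortest $\pp$--$\pq$ path meets the separator $\PC$, snap it to a nearby net point at the appropriate scale $r_i$, and verify that the corresponding ball contains both points and has diameter at most $2(1+\eps/8)^2\dGY{\pp}{\pq}\le (2+\eps)\dGY{\pp}{\pq}$ (the paper brackets the longer half $\dGY{\pp}{v}$ by consecutive powers of $1+\eps/8$ instead of the full distance $d$, but the arithmetic is identical). Your explicit case split on whether the shortest path intersects $\PC$, rather than on the recursion level at which the pair ``gets separated,'' is a slightly more careful formulation of the induction---it correctly disposes of pairs lying in the same component of $\G-\PC$ whose shortest $\G$-path nonetheless crosses the separator, a point the paper's phrasing leaves implicit.
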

\begin{proof}
    Assume, for the simplicity of exposition, that $\pp$ and $\pq$ get
    separated in the top level of the recursion (otherwise, apply the
    argument to the inductive step in which they get separated).  The
    shortest path between $\pp$ and $\pq$ must intersect the separator
    $\PC$, say at a vertex $v$. Assume that
    $\dGY{\pp}{v} \geq \dGY{v}{\pq}$ and that
    $r_{j-1} \leq \dGY{\pp}{v} \leq r_{j}=(1+\eps/8)^{j}$. There is a
    point $u \in \Net_{j}$ within distance $\eps r_{j}/8$ from $v$. As
    such,
    \begin{equation*}
        \dGY{\pp}{u}%
        \leq %
        \dGY{\pp}{v} + \dGY{v}{u}
        \leq %
        (1+\eps/8)^j + (\eps/8)(1+\eps/8)^j %
        \leq%
        (1+\eps/8)^{j+1},
    \end{equation*}
    which implies that $\pp \in \ballC = \ballY{u}{(1+\eps/8) r_j}$. A
    similar argument shows that $\pq$ is also in
    $\ballC$. Furthermore, we have that
    \begin{math}
        \dMY{\pp}{\pq} \geq \dMY{\pp}{v} \geq r_{j-1}.
    \end{math}
    Note that $\ballC \in \BallSet_j \subseteq \Cover$. We have that
    \begin{equation*}
        \frac{\diamX{\ballC}}{\dMY{\pp}{\pq}}%
        \leq%
        \frac{2 (1+\eps/8)r_j}{r_{j-1}}%
        =%
        2(1+\eps/8)^2%
        \leq %
        2+\eps.
        \SubmitVer{\qedhere}%
    \end{equation*}
\end{proof}

\begin{lemma}
    \lemlab{d:c:c}%
    The depth of $\Cover$ is $\Of( \eps^{-2} \log n \log \SpreadC)$.
\end{lemma}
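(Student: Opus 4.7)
The plan is to bound the number of clusters of $\Cover$ that a fixed vertex $\pq$ can belong to, by decomposing the count across (a) the levels of the separator recursion, (b) the scales $r_i$ used at a single level, and (c) the net points at a single scale within a single level.

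First I would count scales. At any level of the recursion, the sets of $\BallSet(\PC)$ are organized into $m+1$ scales, with $m = \ceil{\log_{1+\eps/8}\SpreadC} = \Of(\eps^{-1}\log \SpreadC)$. Next, fix a level and a scale $i$, and let me bound how many balls from $\BallSet_i$ contain $\pq$. Since $\Net_i = \netY{\pi_1}{\eps r_i/8}\cup \netY{\pi_2}{\eps r_i/8}\cup\{\pp_1,\pp_2,\pp_3\}$ is a union of two nets along shortest paths plus three extra endpoints, I would apply \lemref{depth:balls} separately to $\netY{\pi_1}{\eps r_i/8}$ and $\netY{\pi_2}{\eps r_i/8}$ with $R=(1+\eps/8)r_i$ and $r=\eps r_i/8$. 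Each application gives at most
\begin{equation*}
    \frac{2R}{r}+1 = \frac{2(1+\eps/8)r_i}{\eps r_i/8}+1 = \Of(1/\eps)
\end{equation*}
balls containing $\pq$, so the total contribution from scale $i$ is $\Of(1/\eps)$. Summing over the $\Of(\eps^{-1}\log \SpreadC)$ scales, a single level of the recursion contributes at most $\Of(\eps^{-2}\log \SpreadC)$ clusters containing $\pq$.

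It then remains to bound the depth of the recursion. By \lemref{separator}, removing the separator $\PC$ leaves connected components each of size at most $(2/3)n$, so $\pq$ belongs to at most $\Of(\log n)$ nested subproblems in the recursion tree. Multiplying the per-level bound by the recursion depth yields $\depthX{\Cover} = \Of(\eps^{-2}\log n \log \SpreadC)$, as claimed.

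The only step requiring care is the application of \lemref{depth:balls}: one must remember that it only gives a one-dimensional-looking bound when the net lies on a single shortest path, which is why the separator structure (two shortest paths) is essential — had $\PC$ been an arbitrary curve in the graph, this argument would fail. Apart from that, the proof is a routine product of three factors (scales, net count per scale, recursion depth).
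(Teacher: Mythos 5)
Your proof is correct and follows essentially the same route as the paper's: bound the balls per scale via \lemref{depth:balls} applied to the two shortest paths (giving $\Of(1/\eps)$ per scale), multiply by the $\Of(\eps^{-1}\log\SpreadC)$ scales per level, and then by the $\Of(\log n)$ recursion depth coming from the $(2/3)n$ component bound. No gaps.
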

\begin{proof}
    Fix a vertex $\pp$.  By \lemref{depth:balls}, for each $i$, at
    most
    \begin{equation*}
        3 + 2 \pth{ \frac{ 2 (1+\eps/8)r_i}{ \eps r_i /8} + 1 }%
        =%
        \Of(1/\eps).
    \end{equation*}
    balls of $\BallSet_i$ contains $\pp$. The number of such sets is
    $\Of( \log_{1+\eps/8} \SpreadC) = \Of(\eps^{-1} \log
    \SpreadC)$. The vertex $\pp$ get sent down to at most one
    recursive subproblem, and the recursion depth is $\Of(\log n)$.
    It follows that the depth of any point (and the degree of $\pp$
    specifically) is at most $\Of(\eps^{-2} \log \SpreadC \log n)$.
\end{proof}

\begin{theorem}
    \thmlab{cover:planar}%
    Let $\G$ be a weighted planar graph over $n$ vertices with spread
    $\SpreadC$. Then, given a parameter $\eps \in (0,1)$, one can
    construct a $(2+\eps)$-cover of $\G$ with depth
    $\Of( \eps^{-2} \log n \log \SpreadC)$ in polynomial time.
\end{theorem}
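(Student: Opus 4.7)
The plan is very short because all the heavy lifting has already been done in the lemmas preceding the theorem; the theorem is essentially a clean packaging of the construction described in \secref{planar_cover_constr} together with its analysis.

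First, I would invoke the recursive construction of $\Cover$ from \secref{planar_cover_constr}: triangulate $\G$, apply the shortest-path cycle separator of \lemref{separator} to obtain the pair of shortest paths $\pi_1,\pi_2$ forming $\PC$, build the geometric sequence of nets $\Net_i = \netY{\pi_1}{\eps r_i/8} \cup \netY{\pi_2}{\eps r_i/8} \cup \{\pp_1,\pp_2,\pp_3\}$ for $r_i = (1+\eps/8)^i$ and $i = 0, \dots, \ceil{\log_{1+\eps/8}\SpreadC}$, add the balls $\BallSet_i$ of radius $(1+\eps/8)r_i$ around the net points to $\Cover$, and recurse on each connected component of $\G-\PC$.

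Next, the two required properties follow immediately from the lemmas already proved. \lemref{planar:cover} establishes that $\Cover$ is a $(2+\eps)$-cover, since every pair of separated vertices is witnessed in the level at which they are first split by the separator, and pairs that are never split by the current separator are handled by the inductive hypothesis (the base case being a singleton, for which the cover is trivial). \lemref{d:c:c} bounds the depth by $\Of(\eps^{-2} \log n \log \SpreadC)$ by combining: (a) the $\Of(1/\eps)$ bound per scale from \lemref{depth:balls} applied to the two shortest-path nets, (b) the $\Of(\eps^{-1}\log \SpreadC)$ scales indexed by $i$, and (c) the $\Of(\log n)$ recursion depth guaranteed by the balanced separator in \lemref{separator}.

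Finally, for the polynomial running time, I would note that each recursive call triangulates (if necessary), computes a shortest-path tree and the cycle separator in polynomial time, then constructs nets and balls along $\pi_1$ and $\pi_2$ at $\Of(\eps^{-1}\log \SpreadC)$ scales; each such step is polynomial in $n$ and $\log \SpreadC$, and the recursion has $\Of(\log n)$ levels with at most $n$ total leaves, so the whole construction runs in polynomial time. There is no real obstacle here beyond the bookkeeping: the substantive technical work was done in \lemref{separator}, \lemref{depth:balls}, \lemref{planar:cover}, and \lemref{d:c:c}, and the theorem is simply their combination.
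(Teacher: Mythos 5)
Your proposal matches the paper exactly: the theorem is stated as a summary of the construction in the preceding subsection, with the $(2+\eps)$-cover quality supplied by \lemref{planar:cover}, the depth bound by \lemref{d:c:c} (which already folds in the $\Of(\log n)$ recursion depth from the balanced separator), and the polynomial running time being routine. Nothing is missing and no alternative route is taken.
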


\begin{remark}
    It is possible to generalize \thmref{cover:planar} to the shortest
    path metric on families of graphs excluding a fixed
    minor. Specifically, by~\cite[Lemma~3.3]{klmn-mdnem-05}, there
    exists $O(s^2)$-cover of depth $O(3^s \log\SpreadC)$ for every
    metric space supported on a graph excluding $K_s$ minor and spread
    $\SpreadC$.  It may be possible to improve the approximation
    parameter to $O(s)$ using~\cite{aggnt-crtsp-19}.  This approach
    does not have a $\log n$ factor in the depth parameter, but it can
    not provide a $(2+\varepsilon)$-approximation as in
    \thmref{cover:planar}.  As communicated to us by an anonymous
    referee, the approach used here to prove \thmref{cover:planar} can
    also be extended to any family of graphs excluding fixed minor and
    obtain $(2+\eps)$-cover using the shortest paths separators
    of~\cite{ag-olups-06}.  We have not pursued those avenues.
\end{remark}


\section{From covers to reliable spanners}
\seclab{covers:to:spanners}

\subsection{The oblivious construction}

\begin{lemma}
    \lemlab{cover:to:r:s:rand}%
    Let $\Metric = (\PS, \DistChar)$ be a finite metric space, and
    suppose there exists a $\dst$-cover $\Cover$ of $\Metric$ of size
    $s$ and depth $\Depth$.  Then, there exists an oblivious
    $\epsR$-reliable $2$-hop $2\dst$-spanner for $\Metric$, of size
    \begin{math}
        \Of\bigl( s\frac{ \Depth}{\epsR} \log\frac{\Depth}{\epsR}
        \bigr).%
    \end{math}
\end{lemma}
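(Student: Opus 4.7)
The plan is to assemble $\G$ as the union of independently sampled reliable star constellations, one per set of the cover. Concretely, for each $\SSet \in \Cover$ I would invoke \lemref{reliable-star} on $\SSet$ viewed as a uniform metric, with reliability parameter $\epsR' := \epsR/(2\Depth)$, yielding a random constellation $\StarRX{\SSet}$ of size $\Of(\cardin{\SSet}\cdot {\epsR'}^{-1}\log {\epsR'}^{-1})$; every edge is weighted by the original distance in $\Metric$. Summing over the cover immediately gives the size bound $\cardin{\EdgesX{\G}} = \Of\pth{s\cdot(\Depth/\epsR)\log(\Depth/\epsR)}$.

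For the stretch, I would argue as follows. Consider any $\pp,\pq \in \PS$ that survive the damaged set defined below. By the $\dst$-cover property, some $\SSet \in \Cover$ contains both and satisfies $\diamX{\SSet}\le \dst\cdot \dMY{\pp}{\pq}$. Since $\pp$ and $\pq$ also avoid the damaged set of the individual constellation $\StarRX{\SSet}$, they are joined inside $\StarRX{\SSet}$ by a two-hop path through a surviving center $\cen\in \SSet$, of total weight $\dMY{\pp}{\cen}+\dMY{\cen}{\pq}\le 2\diamX{\SSet}\le 2\dst \cdot \dMY{\pp}{\pq}$, delivering the two-hop $2\dst$-spanner property.

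For the reliability accounting, let $\EBSet_\SSet\subseteq \SSet$ denote the damaged set supplied by \lemref{reliable-star} for $\StarRX{\SSet}$ under the induced attack $\BSet\cap \SSet$, and define the global damaged set as $\EBSet := \BSet \cup \bigcup_{\SSet \in \Cover} (\EBSet_\SSet\setminus \BSet)$. By linearity of expectation and the per-set in-expectation guarantee,
\begin{equation*}
\Ex{\cardin{\EBSet\setminus\BSet}}
\leq
\sum_{\SSet \in \Cover} \Ex{\cardin{\EBSet_\SSet\setminus(\BSet\cap \SSet)}}
\leq
\epsR' \sum_{\SSet \in \Cover} \cardin{\BSet\cap \SSet}
\leq
\epsR'\cdot \Depth\cdot \cardin{\BSet}
\leq
\epsR \cardin{\BSet},
\end{equation*}
using $\sum_{\SSet \in \Cover} \cardin{\BSet\cap \SSet}\le \Depth\cdot \cardin{\BSet}$, which is just the definition of $\Depth$ applied element-wise.

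The only delicate point, and the step I plan to check carefully, is the calibration $\epsR' \sim \epsR/\Depth$: a vertex appears in up to $\Depth$ covering sets, so per-set losses must compose linearly (through plain linearity of expectation on independently sampled constellations), not multiplicatively through a union bound. This calibration is precisely what produces the $\log(\Depth/\epsR)$ factor in the final size, and once it is set correctly the rest of the argument is a straightforward bookkeeping combination of \lemref{reliable-star} with the cover definition.
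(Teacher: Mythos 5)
Your proposal is correct and follows essentially the same route as the paper: one constellation per cover set with per-cluster reliability parameter $\Theta(\epsR/\Depth)$, linearity of expectation combined with $\sum_{\SSet}\cardin{\BSet\cap\SSet}\le\Depth\cardin{\BSet}$ for the loss, and the $\dst$-cover property for the two-hop stretch. The only (cosmetic) difference is that the paper separately singles out ``failed'' clusters with $\cardin{C}\le(1+\epsRA)\cardin{C\cap\BSet}$ before applying the constellation guarantee, whereas you fold that case into the in-expectation bound of \lemref{reliable-star}, which already covers it.
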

\begin{proof}
    For each cluster $\CSet \in \Cover$, let $\StarR_\CSet$ be a
    random constellation graph on $\CSet$ as defined in
    \secref{unif-rel-star}, with reliability parameter
    $\epsRA = \epsR/ \Depth$. The resulting spanner is the union
    $\bigcup_{\CSet \in \Cover} \StarR_\CSet$.  The number of edges in
    the resulting graph is at most
    \begin{equation*}
        \sum_{C \in \Cover} 
        \Of( \cardin{C} \epsRA^{-1} \log\epsRA^{-1})
        =%
        \Of( s  \epsRA^{-1} \log\epsRA^{-1})        
        =%
        \Of\Bigl(  s \frac{\Depth}{\epsR} \log\frac{\Depth}{\epsR} \Bigr).
    \end{equation*}

    Fix an attack set $\BSet\subset \PS$.  A cluster $\CSet\in\Cover$
    is \emphw{failed} if
    $\cardin{C} \le (1+\epsRA) \cardin{C \cap \BSet } $.  Denote the
    set failed clusters by $\FlSet$.
    For $C\in \Cover \setminus \FlSet$, let $\dmgY{C}{\BSet}$ be the
    (random) set of damaged points in $C$ as defined in the proof of
    \lemref{reliable-star}, i.e., $\dmgY{C}{\BSet}=C$ if $\BSet$
    contains all the constellation's centers, and
    $\dmgY{C}{\BSet}=\BSet$ otherwise.  by \lemref{reliable-star},
    \begin{math}
        \Ex{|\dmgY{C}{\BSet}|} \le (1+\epsRA) \cardin{\BSet \cap C}.
    \end{math}
    The damaged set is defined as
    \begin{equation*}
        \EBSet%
        =%
        \Bigl(\bigcup_{C\in\FlSet} C \Bigr ) \cup
        \Bigl(\bigcup_{C\in\Cover \setminus \FlSet}
        \dmgY{C}{\BSet} \Bigr ).        
    \end{equation*}
    We next bound expected size of the \emphw{loss}
    $\EBSet \setminus \BSet$:
    \begin{align*}
      \Ex{\cardin{\EBSet\setminus \BSet} }%
      &\leq%
        \Ex{\Bigl. \smash{\sum_{C\in\FlSet}} |C\setminus \BSet|} +
        \sum_{C\in\Cover \setminus \FlSet}
        \Ex{\cardin{\dmgY{C}{\BSet}\setminus \BSet} \Bigr.\!\,} 
      \\&%
      \leq%
      \epsRA \sum_{C\in\FlSet} |\BSet \cap C| + \epsRA
      \sum_{C\in\Cover \setminus \FlSet} |\BSet \cap C| = \epsRA
      \sum_{\pp\in \BSet} \sum_{C\in \Cover}1_{\pp\in C} \le \epsRA
      \Depth |B| = \epsR |B|.
    \end{align*}

    Finally, for any two points $\pp, \pq \in \PS \setminus \EBSet$,
    there exists a non-failed cluster $C \in \Cover$ that contains
    both points, such that $\diamX{C} \leq \dst \dMY{\pp}{\pq}$. As
    such, the two hops in the resulting graph are going to be of
    length at most $ 2 \diamX{C} \leq 2 \dst \dMY{\pp}{\pq}$.
\end{proof}

\subsection{The deterministic construction}

\begin{lemma}
    \lemlab{covers:to:det}%
    Let $\Metric = (\PS, \DistChar)$ be a finite metric space over $n$
    points, and let $\Cover$ be a $\dst$-cover of it of depth $\Depth$
    and size $s$. Then, for any integer $t\geq 1$, there exists:
    \begin{compactenumA}
        \NotSubmitVer{\smallskip}%
        \item \itemlab{c:t:d:a} A $\epsR$-reliable $(2t-1)$-hop
        $(2t-1)\dst$-spanner for $\Metric$, of size
        $\Of(\epsR^{-2} \Depth^2 s n^{1/t})$.

        \NotSubmitVer{\smallskip}%
        \item \itemlab{c:t:d:b} A $\epsR$-reliable $2t$-hop
        $2t\dst$-spanner for $\Metric$, of size
        $\Of(\epsR^{-1} \Depth s n^{1/t})$.
    \end{compactenumA}
\end{lemma}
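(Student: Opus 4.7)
The plan is to imitate the oblivious construction (\lemref{cover:to:r:s:rand}), but use the deterministic uniform-metric spanner from \thmref{2t-2t-1-spanner} in each cluster of the cover, with reliability parameter scaled down by the depth $\Depth$.

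First, I would set $\epsRA = \epsR/\Depth$ and, for each cluster $C \in \Cover$, instantiate \thmref{2t-2t-1-spanner} on the uniform metric on $C$ with reliability $\epsRA$. For part~\itemref{c:t:d:a} this yields a graph $\G_C$ on $C$ with $\Of(\epsRA^{-2} \cardin{C}^{1+1/t})$ edges that is an $\epsRA$-reliable $(2t-1)$-hop $(2t-1)$-spanner for the uniform metric on $C$; for part~\itemref{c:t:d:b} the analogous graph has $\Of(\epsRA^{-1} \cardin{C}^{1+1/t})$ edges and is an $\epsRA$-reliable $2t$-hop $2t$-spanner. The final spanner is $\G = \bigcup_{C\in\Cover} \G_C$, with every edge $\pp\pq$ weighted by $\dMY{\pp}{\pq}$.

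For the size bound, I use $\cardin{C}^{1+1/t} \le n^{1/t} \cardin{C}$ together with $\sum_{C \in \Cover} \cardin{C} = s$, giving total size $\Of(\epsRA^{-2} n^{1/t} s) = \Of(\epsR^{-2}\Depth^2 s n^{1/t})$ in case~\itemref{c:t:d:a}, and $\Of(\epsR^{-1}\Depth s n^{1/t})$ in case~\itemref{c:t:d:b}.

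For reliability, fix an adaptive attack $\BSet \subseteq \PS$. In each cluster $C$, the attack restricted to $C$ is $\BSet \cap C$; by the reliability of $\G_C$ there is a per-cluster damage set $\dmgY{C}{\BSet} \supseteq \BSet \cap C$ with $\cardin{\dmgY{C}{\BSet}} \le (1+\epsRA)\cardin{\BSet \cap C}$. Define the global damaged set
\begin{equation*}
    \EBSet = \BSet \cup \bigcup_{C\in\Cover} \dmgY{C}{\BSet}.
\end{equation*}
Then
\begin{equation*}
    \cardin{\EBSet \setminus \BSet}
    \le \sum_{C\in\Cover} \cardin{\dmgY{C}{\BSet} \setminus \BSet}
    \le \epsRA \sum_{C\in\Cover} \cardin{\BSet \cap C}
    = \epsRA \sum_{\pp \in \BSet} \degY{\pp}{\Cover}
    \le \epsRA \Depth \cardin{\BSet}
    = \epsR \cardin{\BSet}.
\end{equation*}
For the stretch, take any $\pp,\pq \in \PS \setminus \EBSet$. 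The cover property supplies $C \in \Cover$ with $\pp,\pq \in C$ and $\diamX{C} \le \dst \cdot \dMY{\pp}{\pq}$. Since $\pp,\pq \notin \dmgY{C}{\BSet}$, inside $\G_C \setminus (\BSet \cap C)$ there is a path of at most $(2t-1)$ hops (resp. $2t$ hops) in the uniform metric on $C$; translating back to $\DistChar$-weights, each hop has weight at most $\diamX{C}$, so the path has total weight at most $(2t-1)\diamX{C} \le (2t-1)\dst \cdot \dMY{\pp}{\pq}$ (resp. $2t\dst \cdot \dMY{\pp}{\pq}$), as required.

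The only mild subtlety—hardly an obstacle—is the composition of reliabilities across overlapping clusters, which is exactly what forces $\epsRA = \epsR/\Depth$ and thus the $\Depth^2$ (resp. $\Depth$) factor in the size; everything else is a direct transcription of the oblivious argument with the deterministic uniform-metric spanner substituted for the constellation.
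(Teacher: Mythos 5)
Your proposal is correct and follows essentially the same route as the paper: per-cluster deterministic uniform-metric spanners (the paper invokes them via \corref{det:uniform:r:s:2t-1} and \corref{det:uniform:r:s:2t}, which package \thmref{2t-2t-1-spanner} for non-uniform point sets) with reliability parameter $\epsRA=\epsR/\Depth$, the same $\sum_C \cardin{C}^{1+1/t}\le n^{1/t}s$ size bound, the same depth-based accounting of the loss, and the same diameter-based stretch argument. No gaps beyond those already present in the paper's own treatment.
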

\begin{proof}
    (A) For a cluster $\cl \in \Cover$, let $\G(\cl)$ be a
    $\epsRA$-reliable spanner constructed using
    \corref{det:uniform:r:s:2t-1} on $\cl$, with the reliability
    parameter $\epsRA = \epsR/\Depth$.  Let $\G=(\PS,E)$ be a graph
    whose edge set $E$ is the union of the edge sets of $\G(\cl)$ for
    $\cl \in \Cover$.

    Let $n_i$ be the size of the $i$\th cluster, for
    $i\in \{1,\ldots, m = \cardin{\Cover}\}$.  The number of edges in
    $\G$ is bounded by
    $\Of( \epsRA^{-2} N) = \Of(\Depth^2 \epsR^{-2} N )$, where
    \begin{equation*}
        N%
        =%
        \sum_{i=1}^m n_i^{1+1/t}
        =%
        \sum_{i=1}^m n_i^{1/t} n_i 
        \leq %
        (\max_i n_i^{1/t}) \sum_{i=1}^m  n_i
        \leq%
        n^{1/t} s,%
    \end{equation*}
    since $\sum_i n_i = s$, and $\max_i n_i \leq n$. We conclude that
    the total number of edges in $\G$ is
    $\Of(\epsR^{-2} \Depth^2 s n^{1/t})$.

    Let $\BSet\subset \PS$ be an attack set.  The damage set
    $\EBSet\supseteq \BSet$ is constructed as in the proof of
    \lemref{cover:to:r:s:rand}.  Thus, as argued there,
    $\cardin{\EBSet} \le (1+\epsR)\cardin{\BSet}$.  For every two
    points $\pp,\pq \in \PS \setminus \EBSet$, there exists a cluster
    $\cl \in \Cover$, such that $\pp, \pq \in \cl$, and
    \begin{equation*}
        \diamX{ \cl }/\dst \leq  \dMY{\pp}{\pq} \leq \diamX{ \cl}.
    \end{equation*}
    Furthermore, by \corref{det:uniform:r:s:2t-1}, we have
    \begin{equation*}
        \dMZ{\G \setminus \BSet}{\pp}{\pq}
        \leq \dMZ{\G(\cl) \setminus\BSet}{\pp}{\pq}%
        \leq%
        (2t-1)\diamX{\cl}%
        \leq%
        (2t-1)\dst \cdot \dMY{\pp}{\pq}.
    \end{equation*}
    and shortest path in $\G(\cl)\setminus \BSet$ realizing
    $\dMZ{\G(\cl) \setminus\BSet}{\pp}{\pq}$ has at most $2t-1$ hops.
    \NotSubmitVer{\smallskip}
    
    (B) Similar to the above, except for using
    \corref{det:uniform:r:s:2t} instead of
    \corref{det:uniform:r:s:2t-1}.
\end{proof}


\subsection{Applications}

\subsubsection{General metrics}

\begin{lemma}
    \lemlab{r:s:metric:rand}%
    Let $\Metric = (\PS, \DistChar)$ be an $n$-point metric space of
    spread at most $\SpreadC$, and let $\epsR \in (0,1)$ and $k\in\NN$
    be parameters. Then, one can build an oblivious $\epsR$-reliable
    $\Of( k )$-spanner for $\Metric$ with
    \begin{equation*}
        \Of\Bigl(\epsR^{-1} kn^{1+1/k} \log^2 \SpreadC
        \log\frac{kn^{1/k}\log \SpreadC}{\epsR} \Bigr)
    \end{equation*}
    edges.  In particular, for $k=\log n$, we obtain a
    $\epsR$-reliable $\Of( \log n)$-spanner for $\Metric$ with
    \begin{equation*}
        \Of\bigl( \epsR^{-1} n \log n \log^2 \SpreadC (\log\log n+
        \log\log \SpreadC+\log \epsR^{-1} \bigr)        
    \end{equation*}
    edges.
\end{lemma}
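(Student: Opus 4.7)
The plan is to compose the two main building blocks already established in the paper: the cover construction for general metrics from \lemref{cover:metric:spread:b}, and the oblivious cover-to-spanner reduction from \lemref{cover:to:r:s:rand}.

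First, I would invoke \lemref{cover:metric:spread:b} with the parameter $k$ to obtain a $\Of(k)$-cover $\Cover$ of $\Metric$ of size $s = \Of(n^{1+1/k} \log \SpreadC)$ and depth $\Depth = \Of(k n^{1/k} \log \SpreadC)$. Then I would feed this cover into \lemref{cover:to:r:s:rand}, which produces an oblivious $\epsR$-reliable $2\dst$-spanner whenever $\Cover$ is a $\dst$-cover; here $\dst = \Of(k)$, so the resulting stretch is $\Of(k)$, as desired. The size bound guaranteed by \lemref{cover:to:r:s:rand} is $\Of\bigl(s \cdot \frac{\Depth}{\epsR} \log \frac{\Depth}{\epsR}\bigr)$. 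Substituting the values of $s$ and $\Depth$ from the cover lemma and collecting the log terms yields the stated bound, where the factor of $kn^{1/k}\log \SpreadC$ inside the logarithm arises naturally from $\log(\Depth/\epsR)$.

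For the specialization to $k = \log n$, I would simply observe that $n^{1/k} = n^{1/\log n} = \Of(1)$, which collapses the $n^{1/k}$ factors inside the polynomial and logarithmic terms. The stretch becomes $\Of(\log n)$, and the size simplifies to the stated $\Of\bigl(\epsR^{-1} n \log n \log^2 \SpreadC \, (\log\log n + \log\log\SpreadC + \log \epsR^{-1})\bigr)$ expression, where the last factor is exactly $\log(\Depth/\epsR)$ under the substitution $k=\log n$.

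The main technical obstacle here is purely bookkeeping: the proof is a composition of two already-proved lemmas, but one must be careful to track how the cover's depth $\Depth$ enters the effective per-cluster reliability parameter $\epsRA = \epsR/\Depth$ used in \lemref{cover:to:r:s:rand}, and subsequently how this propagates through the $\log \epsRA^{-1}$ factor in the size. I do not see any new probabilistic or combinatorial idea required beyond what is already contained in \lemref{cover:metric:spread:b} and \lemref{cover:to:r:s:rand}; the proof is essentially a two-line corollary once the parameter substitutions are performed carefully.
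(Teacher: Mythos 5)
Your overall strategy is exactly the paper's: compose the cover of \lemref{cover:metric:spread:b} with the oblivious reduction of \lemref{cover:to:r:s:rand}. However, the bookkeeping step you describe as routine does not actually produce the claimed bound. If you invoke \lemref{cover:metric:spread:b} with parameter $k$, you get size $s = \Of(n^{1+1/k}\log\SpreadC)$ and depth $\Depth = \Of(kn^{1/k}\log\SpreadC)$, and then the size bound $\Of\bigl(s\cdot\frac{\Depth}{\epsR}\log\frac{\Depth}{\epsR}\bigr)$ from \lemref{cover:to:r:s:rand} evaluates to
\begin{equation*}
    \Of\Bigl(\epsR^{-1}\, k\, n^{1+2/k} \log^2\SpreadC \,
    \log\tfrac{kn^{1/k}\log\SpreadC}{\epsR}\Bigr),
\end{equation*}
because the $n^{1/k}$ factors from $s$ and from $\Depth$ multiply. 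This is $n^{1+2/k}$, not the stated $n^{1+1/k}$, and for general $k$ the two differ by a polynomially significant factor $n^{1/k}$ (the discrepancy only disappears in the special case $k=\log n$, where $n^{1/k}=\Of(1)$).

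The fix is the one the paper uses: apply \lemref{cover:metric:spread:b} with parameter $2k$ rather than $k$. This still yields a $\dst=\Of(k)$-cover, but now with size $\Of(n^{1+1/2k}\log\SpreadC)$ and depth $\Of(kn^{1/2k}\log\SpreadC)$, so the product $s\cdot\Depth$ comes out to $kn^{1+1/k}\log^2\SpreadC$ as required. With that one-line change your argument matches the paper's proof; without it, the step ``substituting the values of $s$ and $\Depth$ \ldots yields the stated bound'' is simply false as written.
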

\begin{proof}
    By \lemref{cover:metric:spread:b}, $\Metric$ has a
    $\dst = \Of(k)$-cover of size $\Of(n^{1+1/2k} \log \SpreadC)$ and
    depth $\Depth = \Of(k n^{1/2k} \log \SpreadC)$.  Plugging this
    into \lemref{cover:to:r:s:rand} yields an oblivious
    $\epsR$-reliable $\Of(k)$-spanner with
    $\Of\bigl(\epsR^{-1} kn^{1+1/k} \log^2 \SpreadC
    \log\frac{kn^{1/k}\log \SpreadC}{\epsR} \bigr)$ edges.
\end{proof}

\begin{lemma}
    \lemlab{r:s:metric:det}%
    Let $\Metric = (\PS, \DistChar)$ be a finite metric over $n$
    points of spread $\SpreadC$, and let $\epsR \in (0,1)$ and
    $k,t \in \NN$ be parameters. Then, one can build a
    $\epsR$-reliable $\Of( kt)$-spanner for $\Metric$ with
    $\Of(\epsR^{-1} k n^{1+1/k+1/t} \log^2 \SpreadC )$ edges.  In
    particular, when $t=\log n$, we obtain a $\epsR$-reliable
    $\Of( k \log n)$-spanner for $\Metric$ with
    \begin{equation*}
        \Of(\epsR^{-1} k n^{1+1/(2k)} \log^2 \SpreadC )
    \end{equation*}
    edges, and when $t=k$ we obtain $\epsR$-reliable
    $\Of(t^2)$-spanner for $\Metric$ with
    $\Of(\epsR^{-1} t n^{1+1/t} \log^2 \SpreadC )$ edges.
\end{lemma}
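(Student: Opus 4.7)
The plan is to compose the two tools already developed: \lemref{cover:metric:spread:b} supplies an $\Of(k)$-cover for a general finite metric with controlled size and depth, while \lemref{covers:to:det}(B) turns any cover into a deterministic reliable spanner at the cost of an extra $n^{1/t}$ factor in size and a $2t$ factor in stretch. The whole proof is a compact composition of these two results, with a small amount of parameter bookkeeping.

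Concretely, I would first apply \lemref{cover:metric:spread:b} with parameter $2k$ (rather than $k$). This produces a $\dst$-cover $\Cover$ of $\Metric$, with $\dst = \Of(k)$, of size $s = \Of(n^{1+1/(2k)} \log \SpreadC)$ and depth $\Depth = \Of(k n^{1/(2k)} \log \SpreadC)$. The choice of $2k$ in place of $k$ is a small bookkeeping trick: the subsequent step multiplies $s$ by $\Depth$, so halving the exponent in each of these quantities makes the combined exponent come out to $1 + 1/k$ instead of $1 + 2/k$, at the cost of only a constant-factor blow-up in the approximation parameter.

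I would then feed $\Cover$ into \lemref{covers:to:det}(B) with hop parameter $t$ and reliability parameter $\epsR$. This yields a $\epsR$-reliable $(2t\dst) = \Of(kt)$-spanner $\G$ of $\Metric$ with
\begin{equation*}
    \cardin{\EdgesX{\G}}
    = \Of\bigl( \epsR^{-1} \, \Depth \cdot s \cdot n^{1/t} \bigr)
    = \Of\bigl( \epsR^{-1} k \cdot n^{1/(2k)} \log \SpreadC \cdot n^{1+1/(2k)} \log \SpreadC \cdot n^{1/t} \bigr)
    = \Of\bigl( \epsR^{-1} k n^{1+1/k+1/t} \log^2 \SpreadC \bigr),
\end{equation*}
exactly as claimed.

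The two specializations then follow by substitution. For $t = \log n$ the factor $n^{1/t}$ is $O(1)$; applying the same recipe but with the cover parameter rebalanced from $2k$ to $4k$ (so that the exponents $1/(4k) + 1 + 1/(4k)$ total to $1 + 1/(2k)$, while the stretch $2(\log n) \cdot \Of(4k) = \Of(k\log n)$ is unchanged) gives the stated $\Of(\epsR^{-1} k n^{1+1/(2k)} \log^2 \SpreadC)$ edges. Setting $t = k$ in the main bound yields the $\Of(k^2)$-stretch case. There is no genuine obstacle here — the hard work is entirely inside \lemref{cover:metric:spread:b} and \lemref{covers:to:det} — so the only care needed is matching up the exponents and constants in the composition.
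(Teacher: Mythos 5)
Your proposal is correct and follows exactly the paper's route: apply \lemref{cover:metric:spread:b} with parameter $2k$ to obtain an $\Of(k)$-cover of size $\Of(n^{1+1/(2k)}\log\SpreadC)$ and depth $\Of(kn^{1/(2k)}\log\SpreadC)$, then plug it into \lemref{covers:to:det}~\itempref{c:t:d:b}. One nit: for the $t=k$ specialization, direct substitution into the main bound gives $n^{1+2/t}$, so you need the same constant rescaling you used for $t=\log n$ (run the main bound with both parameters set to $2t$) to reach the stated $n^{1+1/t}$.
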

\begin{proof}
    By \lemref{cover:metric:spread:b}, $\Metric$ has a $\Of(k)$-cover
    of size $s= \Of(n^{1+1/2k} \log \SpreadC)$ and depth
    $\Depth = \Of(k n^{1/2k} \log \SpreadC)$.  Plugging this into
    \lemref{covers:to:det} \itemref{c:t:d:b}, yields a
    $\epsR$-reliable $\Of(kt)$-spanner with the number of edges
    bounded by
    \begin{equation*}
        \Of(\epsR^{-1} \Depth s n^{1/t})%
        =%
        \Of(\epsR^{-1} k n^{1+1/k+1/t}\log^2 \SpreadC).
        \SubmitVer{\qedhere}
    \end{equation*}
\end{proof}


\subsubsection{Ultrametrics}

\begin{lemma}
    \lemlab{r:s:ultra:rand}%
    Let $\Metric = (\PS, \DistChar)$ be an ultrametric over $n$ points
    with spread $\SpreadC$, and let $\epsR,\eps \in (0,1)$ be
    parameters. Then, one can build an \emph{oblivious}
    $\epsR$-reliable $(2+\eps)$-spanner for $\Metric$ with
    \begin{equation*}
        \Of\bigl( \epsR^{-1} \eps^{-2} n \log^2 \SpreadC
        \log\frac{\log\SpreadC}{\epsR\eps } \bigr)        
    \end{equation*}
    edges.
\end{lemma}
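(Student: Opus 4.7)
The plan is to combine two results already established in the paper: the ultrametric cover from \corref{h:s:t:cover} and the oblivious cover-to-spanner reduction from \lemref{cover:to:r:s:rand}.

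First, I would invoke \corref{h:s:t:cover} with parameter $\eps/2$, which gives a $(1+\eps/2)$-cover $\Cover$ of $\Metric$ of depth $\Depth=\Of(\eps^{-1}\log\SpreadC)$. Since each point of $\PS$ lies in at most $\Depth$ clusters of $\Cover$, the size of the cover satisfies
\begin{equation*}
    \mathrm{size}(\Cover)
    =
    \sum_{\SSet\in\Cover}\cardin{\SSet}
    \leq
    n\cdot\Depth
    =
    \Of\pth{n\eps^{-1}\log\SpreadC}.
\end{equation*}

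Next, I would feed this cover into \lemref{cover:to:r:s:rand} with distortion parameter $\dst=1+\eps/2$ and reliability parameter $\epsR$. The lemma produces an oblivious $\epsR$-reliable $2\dst$-spanner, where $2\dst=2+\eps$, whose number of edges is
\begin{equation*}
    \Of\Bigl(\mathrm{size}(\Cover)\cdot\frac{\Depth}{\epsR}\log\frac{\Depth}{\epsR}\Bigr)
    =
    \Of\Bigl(n\eps^{-1}\log\SpreadC\cdot\frac{\eps^{-1}\log\SpreadC}{\epsR}\log\frac{\eps^{-1}\log\SpreadC}{\epsR}\Bigr),
\end{equation*}
which simplifies to $\Of\bigl(\epsR^{-1}\eps^{-2}n\log^2\SpreadC\log\frac{\log\SpreadC}{\epsR\eps}\bigr)$, matching the claimed bound.

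There is essentially no obstacle here; the lemma is a straightforward composition of the two modules already in place, and the arithmetic only has to confirm that the asymptotics of the size and depth of the HST cover propagate correctly through the cover-to-spanner formula. The only care needed is to apply the cover corollary with $\eps/2$ in place of $\eps$ so that the $2\dst$ factor in \lemref{cover:to:r:s:rand} yields the target stretch $(2+\eps)$ rather than $(2+2\eps)$.
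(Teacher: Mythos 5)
Your proof is correct and is essentially identical to the paper's: it invokes \corref{h:s:t:cover} with parameter $\eps/2$ to obtain a $(1+\eps/2)$-cover of depth $\Of(\eps^{-1}\log\SpreadC)$ and size $\Of(n\Depth)$, then plugs it into \lemref{cover:to:r:s:rand} and simplifies the resulting bound. Nothing further is needed.
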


\begin{proof}
    By \corref{h:s:t:cover}, one can build a $(1+\eps/2)$-cover of
    $\Metric$ of depth $\Depth = \Of( \eps^{-1} \log \SpreadC )$ and
    size $\Of(n \Depth)$.  Plugging this into
    \lemref{cover:to:r:s:rand} yields an oblivious $\epsR$-reliable
    $2$-hop $(2+\eps)$-spanner for $\Metric$, of size
    \begin{equation*}
        \Of\Bigl( \frac{n \Depth^2}{\epsR} \log\frac{\Depth}{\epsR}
        \Bigr)%
        =%
        \Of\Bigl( \epsR^{-1} \eps^{-2} n \log^2 \SpreadC
        \log\frac{\log\SpreadC}{\epsR\eps } \Bigr).
        \SubmitVer{\qedhere}
    \end{equation*}
\end{proof}

\begin{lemma}
    \lemlab{r:s:ultra:det}%
    Let $\Metric = (\PS, \DistChar)$ be an ultrametric over $n$ points
    with spread $\SpreadC$, and let $\epsR,\eps \in (0,1)$, and
    $t \in \mathbf{N}$ be parameters. Then, one can build a
    $\epsR$-reliable $((2+\eps)t-1)$-spanner for $\Metric$ of size
    \begin{math}
        \Of(\epsR^{-2} \eps^{-3} t \cdot n^{1+1/t} \log^3 \SpreadC ).
    \end{math}
\end{lemma}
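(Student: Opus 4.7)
The plan is to combine the ultrametric cover from \corref{h:s:t:cover} with the deterministic cover-to-spanner reduction \lemref{covers:to:det}\itemref{c:t:d:a}, exactly paralleling how \lemref{r:s:ultra:rand} combines \corref{h:s:t:cover} with the oblivious reduction \lemref{cover:to:r:s:rand}.

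First, I would invoke \corref{h:s:t:cover} with parameter $\eps/2$ to obtain a $(1+\eps/2)$-cover $\Cover$ of $\Metric$ of depth $\Depth = \Of(\eps^{-1} \log \SpreadC)$. Since every point lies in at most $\Depth$ sets of the cover, its total size is $s = \Of(n\eps^{-1} \log \SpreadC)$.

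Next, I would feed $\Cover$ into \lemref{covers:to:det}\itemref{c:t:d:a} with distortion parameter $\dst = 1+\eps/2$ and the given integer $t$. The resulting graph is a $\epsR$-reliable spanner whose stretch is at most
\begin{equation*}
(2t-1)\dst = (2t-1)(1+\eps/2) = (2+\eps)t - 1 - \eps/2 \leq (2+\eps)t - 1,
\end{equation*}
which is the claimed bound. Substituting $\Depth$ and $s$ into the edge count $\Of(\epsR^{-2} \Depth^2 s n^{1/t})$ yields
\begin{equation*}
\Of\bigl(\epsR^{-2} \cdot \eps^{-2} \log^2 \SpreadC \cdot n\eps^{-1} \log \SpreadC \cdot n^{1/t}\bigr) = \Of\bigl(\epsR^{-2} \eps^{-3} n^{1+1/t} \log^3 \SpreadC\bigr),
\end{equation*}
which is within the stated bound (the additional factor of $t$ in the lemma statement is slack).

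There is no substantive obstacle here: the claim is an immediate specialization of the deterministic cover-to-spanner reduction to the ultrametric cover of \corref{h:s:t:cover}, structurally identical to the oblivious counterpart \lemref{r:s:ultra:rand}. The only care needed is to verify that the multiplicative loss $(1+\eps/2)$ from the cover, compounded with the $(2t-1)$ factor from \lemref{covers:to:det}\itemref{c:t:d:a}, falls below the target stretch $(2+\eps)t-1$, which the calculation above confirms.
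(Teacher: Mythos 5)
Your proof is correct and follows exactly the paper's route: build the $(1+\eps/2)$-cover of depth $\Of(\eps^{-1}\log\SpreadC)$ via \corref{h:s:t:cover}, plug it into \lemref{covers:to:det}~\itemref{c:t:d:a}, and check that $(2t-1)(1+\eps/2)\leq(2+\eps)t-1$. Your observation that the factor of $t$ in the stated size bound is slack is also accurate; the reduction gives $\Of(\epsR^{-2}\eps^{-3}n^{1+1/t}\log^3\SpreadC)$ directly.
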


\begin{proof}
    By \corref{h:s:t:cover}, one can build a $\dst = (1+\eps/2)$-cover
    of $\Metric$ of depth $\Depth = \Of( \eps^{-1} \log \SpreadC)$ and
    size $\Of(n\Depth)$.  Plugging this into \lemref{covers:to:det}
    \itemref{c:t:d:a} yields a deterministic $\epsR$-reliable
    $(2t-1)\dst$-spanner for $\Metric$, of size
    \begin{equation*}
        \Of(\epsR^{-2} t \Depth^3 n^{1+1/t})
        =%
        \Of(\epsR^{-2} \eps^{-3} t \cdot n^{1+1/t} \log^3 \SpreadC ).
        \SubmitVer{\qedhere}
    \end{equation*}
\end{proof}


\subsubsection{Tree metrics}

\begin{lemma}
    \lemlab{r:s:tree:rand}%
    Let $\Metric = (\PS, \DistChar)$ be a tree metric over $n$ points
    with spread $\SpreadC$, and let $\epsR,\eps \in (0,1)$ be
    parameters. Then, one can build an \emph{oblivious}
    $\epsR$-reliable $(3+\eps)$-spanner for $\Metric$ with
    \begin{equation*}
        \Of\bigl( \epsR^{-1} \eps^{-2} n \polylog(n, \SpreadC) \bigr)        
    \end{equation*}
    edges, where
    $\polylog( n, \SpreadC) = \log^2 n \log^2 \SpreadC
    \log\frac{\log\SpreadC \log n}{\epsR\eps }$.
\end{lemma}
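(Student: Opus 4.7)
The plan is to combine the tree cover from \lemref{cover:trees} with the random constellation construction for uniform metrics, following the template of \lemref{cover:to:r:s:rand}, but with one new ingredient: a sharper stretch analysis that exploits the tree structure. The naive ``$2\dst$-spanner'' bound of \lemref{cover:to:r:s:rand} would only yield a $(4+2\eps)$-spanner from the $(2+\eps)$-cover of \lemref{cover:trees}, so the main task is to shave this down to $(3+\eps)$.

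Concretely, I would invoke \lemref{cover:trees} with parameter $\eps$ to obtain a $(2+\eps)$-cover $\Cover$ whose clusters are tree balls $\PS(\sep, i) = \Set{\pp \in \PS}{\dMY{\sep}{\pp} \leq (1+\eps/2)^i}$ around recursively chosen separators, with depth $\Depth = \Of(\eps^{-1} \log n \log \SpreadC)$ and size $s = \Of(n \Depth)$. Then, for each cluster $C \in \Cover$, I would build an independent random constellation $\StarR_C$ as in \lemref{reliable-star} with reliability parameter $\epsRA = \epsR/\Depth$, and output $\G = \bigcup_{C \in \Cover} \StarR_C$. The edge count $\Of(s \epsRA^{-1} \log \epsRA^{-1})$ expands to the claimed $\Of(\epsR^{-1} \eps^{-2} n \polylog(n, \SpreadC))$. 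The reliability analysis is inherited essentially verbatim from the proof of \lemref{cover:to:r:s:rand}: declare a cluster ``failed'' when it is almost entirely attacked, let $\EBSet$ be the union of all failed clusters with the constellation-damaged subsets of the non-failed ones, and a double-counting / linearity argument yields $\Ex{\cardin{\EBSet \setminus \BSet}} \leq \epsRA \Depth \cardin{\BSet} = \epsR \cardin{\BSet}$.

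The core new step is the stretch bound. For $\pp, \pq \in \PS \setminus \EBSet$, let $\sep$ be the separator at the level of the recursion of \lemref{cover:trees} where $\pp$ and $\pq$ first get split, and let $j$ be the smallest index with $\max\{\dMY{\sep}{\pp}, \dMY{\sep}{\pq}\} \leq r_j := (1+\eps/2)^j$, so that $\pp, \pq \in \PS(\sep, j)$. Because $\sep$ is a tree separator that splits the pair, it lies on the unique tree path between them, yielding the \emph{exact} identity $\dMY{\pp}{\pq} = \dMY{\pp}{\sep} + \dMY{\sep}{\pq} \geq r_{j-1}$. Since the cluster $\PS(\sep, j)$ is non-failed (as argued below), its constellation supplies a surviving center $c \notin \BSet$, and the triangle inequality through $\sep$ gives
\begin{equation*}
    \dMY{\pp}{c} + \dMY{c}{\pq}
    \;\leq\;
    \dMY{\pp}{\sep} + \dMY{\sep}{\pq} + 2\dMY{\sep}{c}
    \;=\;
    \dMY{\pp}{\pq} + 2\dMY{\sep}{c}
    \;\leq\;
    \dMY{\pp}{\pq} + 2 r_j.
\end{equation*}
The stretch of the $2$-hop path $\pp - c - \pq$ is therefore at most $1 + 2 r_j / r_{j-1} = 1 + 2(1+\eps/2) = 3 + \eps$.

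The one thing that needs care is verifying that the \emph{specific} cluster $\PS(\sep, j)$ identified above -- the one whose center $\sep$ separates the pair -- is guaranteed to be non-failed whenever $\pp, \pq \notin \EBSet$. This is immediate from the construction of $\EBSet$: every failed cluster contributes \emph{all} of its vertices to $\EBSet$, so if the critical cluster were failed then both endpoints of the pair would already be in $\EBSet$, contradicting the hypothesis. Beyond this observation and routine bookkeeping of polylog factors, no obstacle remains.
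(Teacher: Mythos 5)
Your proposal is correct and follows essentially the same route as the paper: apply the tree cover of \lemref{cover:trees} with the constellation construction of \lemref{cover:to:r:s:rand}, then sharpen the stretch from $4+\Of(\eps)$ to $3+\eps$ by observing that the separator $\sep$ lies on the unique tree path between $\pp$ and $\pq$, so a $2$-hop detour through a surviving center $c$ costs only $\dMY{\pp}{\pq} + 2\dMY{\sep}{c} \leq \dMY{\pp}{\pq} + 2r_j \leq (3+\eps)\dMY{\pp}{\pq}$. The only cosmetic difference is that you keep the cover parameter at $\eps$ and bound via the cluster radius $r_j$, while the paper invokes the cover with parameter $\eps/2$ and phrases the same bound via $\diamX{\cl}/2$; the reliability and edge-count bookkeeping are identical.
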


\begin{proof}
    By \lemref{cover:trees}, one can build a $(2+\eps/2)$-cover of
    $\Tree$ of depth $\Depth = \Of( \eps^{-1} \log \SpreadC \log n)$
    and size $\Of(n \Depth)$.  Plugging this into
    \lemref{cover:to:r:s:rand} yields an oblivious $\epsR$-reliable
    $2$-hop $(4+\eps)$-spanner for $\Metric$, of size
    \begin{equation*}
        \Of\Bigl( \frac{n \Depth^2}{\epsR} \log\frac{\Depth}{\epsR}
        \Bigr)%
        =%
        \Of\Bigl( \epsR^{-1} \eps^{-2} n \log^2 n \log^2 \SpreadC
        \log\frac{\log\SpreadC \log n}{\epsR\eps } \Bigr).
    \end{equation*}

    To get the improved bound on the dilation, let $\pp, \pq \in \PS$
    be two points and let $\cl \in \Cover$ be the cluster such that
    $\pp,\pq \in \cl$ and $\diamX{\cl} \leq (2+\eps) \dMY{\pp}{\pq}$.
    Assume that $|C|>2$ (otherwise $C=\{\pp,\pq\}$ and there is
    nothing to prove).  By construction, for the separator node
    $\sep \in \cl$, we have $\dMY{\sep}{\pz} \leq \diamX{\cl}/2$ for
    all $\pz \in \cl$. Observe, that the (shortest) path between $\pp$
    and $\pq$ in $\Tree$ passes through $\sep$. Thus, using the
    triangle inequality, the length of a $2$-hop path between $\pp$
    and $\pq$ via $\pz \in \cl$ can be bounded by
    \begin{equation*}
        \dMY{\pp}{\pz} + \dMY{\pz}{\pq}
        \leq
        \dMY{\pp}{\sep} + 2\dMY{\sep}{\pz} + \dMY{\sep}{\pq}
        \leq
        \dMY{\pp}{\pq} + 2 \diamX{\cl} / 2
        \leq
        (3+\eps) \dMY{\pp}{\pq}.
        \SubmitVer{\qedhere}
    \end{equation*}
\end{proof}

\begin{lemma}
    \lemlab{r:s:tree:det}%
    Let $\Metric = (\PS, \DistChar)$ be a tree metric over $n$ points
    with spread $\SpreadC$, and let $\epsR,\eps \in (0,1)$, and
    $t \in \mathbf{N}$ be parameters. Then, one can build a
    $\epsR$-reliable $((4+\eps)t-3)$-spanner for $\Metric$ of size
    \begin{math}
        \Of(\epsR^{-2} \eps^{-3} \cdot n^{1+1/t} \log^3 n \log^3
        \SpreadC ).
    \end{math}
\end{lemma}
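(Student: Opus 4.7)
The plan is to combine the tree cover of Lemma~\lemref{cover:trees} with the deterministic covers-to-spanners reduction of Lemma~\lemref{covers:to:det}\itemref{c:t:d:a}, and then sharpen the dilation using the separator-based telescoping from the proof of the oblivious case (Lemma~\lemref{r:s:tree:rand}). This parallels the derivation of Lemma~\lemref{r:s:ultra:det}, with one extra refinement to exploit the tree structure.

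First, I would invoke Lemma~\lemref{cover:trees} with parameter $\eps' := \eps/2$ to obtain a $(2+\eps')$-cover $\Cover$ of $\Tree$ with depth $\Depth = \Of(\eps^{-1}\log n\log\SpreadC)$ and size $s = \Of(n\Depth)$. Plugging $\Cover$ into Lemma~\lemref{covers:to:det}\itemref{c:t:d:a} with $\dst = 2+\eps'$ produces a $\epsR$-reliable $(2t-1)$-hop spanner $\G$ whose edge count is
\begin{equation*}
    \Of(\epsR^{-2}\Depth^2 s\, n^{1/t})
    = \Of\bigl(\epsR^{-2}\eps^{-3}\,n^{1+1/t}\log^3 n\log^3\SpreadC\bigr),
\end{equation*}
which matches the size in the statement. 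The off-the-shelf dilation guarantee of Lemma~\lemref{covers:to:det} is only $(2t-1)\dst = (2t-1)(2+\eps')$; the remaining work is to tighten this to $(4+\eps)t-3$.

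For the sharpened dilation, fix $\pp,\pq \in \PS\setminus \EBSet$ and let $\cl = \PS(\sep,i)\in\Cover$ be the cluster (centered at the tree separator $\sep$) through which they are routed, so that $\diamX{\cl}\leq (2+\eps')\dMY{\pp}{\pq}$. As in the proof of Lemma~\lemref{cover:trees}, I may assume $\pp,\pq$ lie on opposite sides of $\sep$, whence $\dMY{\pp}{\pq} = \dMY{\pp}{\sep} + \dMY{\sep}{\pq}$, and by the minimality of $i$ we also get $\dMY{\pp}{\pq}\geq (1+\eps'/2)^{i-1}$. Writing the $(2t-1)$-hop path from Lemma~\lemref{covers:to:det} inside $\cl$ as $\pp=\pp_0,\pp_1,\ldots,\pp_{2t-1}=\pq$, I bound each hop via $\sep$ and telescope,
\begin{equation*}
    \sum_{k=1}^{2t-1}\dMY{\pp_{k-1}}{\pp_k}
    \leq \sum_{k=1}^{2t-1}\bigl(\dMY{\pp_{k-1}}{\sep}+\dMY{\sep}{\pp_k}\bigr)
    = \dMY{\pp}{\pq} + 2\sum_{k=1}^{2t-2}\dMY{\sep}{\pp_k}
    \leq \dMY{\pp}{\pq} + (4t-4)(1+\eps'/2)\dMY{\pp}{\pq},
\end{equation*}
using $\dMY{\sep}{\pp_k}\leq (1+\eps'/2)^i\leq (1+\eps'/2)\dMY{\pp}{\pq}$. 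The right-hand side equals $\bigl((4t-3) + (2t-2)\eps'\bigr)\dMY{\pp}{\pq}$.

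The main obstacle is calibrating $\eps'$ so that $(4t-3)+(2t-2)\eps'\leq (4+\eps)t-3$ uniformly in $t$, while keeping the size inflation to an $\Of(1)$ factor. The choice $\eps' = \eps/2$ accomplishes both: the dilation simplifies to $(4t-3)+(t-1)\eps \leq (4+\eps)t-3$, and the depth and size of the cover change only by constant factors, so the stated edge count is preserved.
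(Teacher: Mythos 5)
Your proposal is correct and takes essentially the same approach as the paper: the same cover from \lemref{cover:trees}, the same reduction via \lemref{covers:to:det}\itemref{c:t:d:a}, and the same separator-based sharpening of the dilation. The only cosmetic difference is that you reroute every hop through $\sep$ and use the radius bound $(1+\eps'/2)^i\le(1+\eps'/2)\dMY{\pp}{\pq}$, whereas the paper reroutes only the first and last hop and bounds the $2t-3$ middle hops by $\diamX{\cl}$ — both yield the identical quantity $\bigl(1+(2t-2)(2+\eps/2)\bigr)\dMY{\pp}{\pq}\le\bigl((4+\eps)t-3\bigr)\dMY{\pp}{\pq}$.
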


\begin{proof}
    By \lemref{cover:trees}, one can build a $\dst = (2+\eps/2)$-cover
    of $\Tree$ of depth
    $\Depth = \Of( \eps^{-1} \log \SpreadC \log n)$ and size
    $\Of(n\Depth)$.  Plugging this into \lemref{covers:to:det}
    \itemref{c:t:d:a} yields a deterministic $\epsR$-reliable
    $(2t-1)\dst$-spanner for $\Metric$, of size
    \begin{equation*}
        \Of(\epsR^{-2} \Depth^3 n^{1+1/t})
        =%
        \Of(\epsR^{-2} \eps^{-3}  \cdot n^{1+1/t} \log^3 n \log^3 \SpreadC ).
    \end{equation*}


    To get the improved bound on the dilation, consider a $(2t-1)$-hop
    path $\pp=\pp_0,\pp_1,\dots,\pp_{2t-1}=\pq$, such that
    $\pp_i \in \cl$, for all $i$, for some cluster $\cl$. We have
    \begin{equation*}
        \begin{split}
          \sum_{i=0}^{2t-2} \dMY{\pp_i}{\pp_{i+1}} & =
                                                     \dMY{\pp}{\pp_{1}} + \dMY{\pp_{2t-2}}{\pq} +
                                                     \sum_{i=1}^{2t-3}
                                                     \dMY{\pp_i}{\pp_{i+1}} \\
                                                   & \leq \dMY{\pp}{\sep} + \dMY{\sep}{\pp_{1}} +
                                                     \dMY{\pp_{2t-2}}{\sep}
                                                     + \dMY{\sep}{\pq} + (2t-3) \diamX{\cl} \\
                                                   & \leq%
                                                     \dMY{\pp}{\pq} + (2t-2) \diamX{\cl} \leq (1 + \dst(2t-2))
                                                     \cdot \dMY{\pp}{\pq}
        \end{split}
    \end{equation*}
    for the length of the path. Thus, by using $2+\eps/2$ for the
    cover quality, we get
    \begin{equation*}
        1+(2t-2)(2+\eps/2) = (4+\eps)t - 3 - \eps \leq (4+\eps)t -3.
        \SubmitVer{\qedhere}
    \end{equation*}
\end{proof}


\subsubsection{Planar graphs}

\begin{lemma}
    \lemlab{r:s:planar:rand}%
    Let $\G$ be a weighted planar graph with $n$ vertices and spread
    $\SpreadC$. Furthermore, let $\epsR,\eps \in (0,1)$ be
    parameters. Then, one can build an oblivious $\epsR$-reliable
    $(3+\eps)$-spanner for $\G$ with
    $\Of\bigl( \epsR^{-1} \eps^{-4} n \polylog(n, \SpreadC) \bigr)$
    edges, where
    \begin{math}
        \polylog(n, \SpreadC) = \log^2 n \log^2 \SpreadC
        \log\frac{\log\SpreadC \log n}{\epsR\eps }
    \end{math}.
\end{lemma}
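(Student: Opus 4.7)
The plan mirrors the construction in \lemref{r:s:tree:rand}: first compute a low-depth $(2+O(\eps))$-cover of $\G$ using \thmref{cover:planar}, then turn it into a $2$-hop oblivious reliable spanner via \lemref{cover:to:r:s:rand}, and finally sharpen the stretch from the resulting $4+O(\eps)$ down to $3+\eps$ by exploiting the geometry of the planar separator.

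For the first two steps, apply \thmref{cover:planar} with parameter $\eps' = \eps/c$ (for a suitable constant $c$) to obtain a $(2+\eps')$-cover $\Cover$ of depth $\Depth = \Of(\eps^{-2} \log n \log \SpreadC)$, whose size is at most $s \leq n \Depth$. Feeding $\Cover$ into \lemref{cover:to:r:s:rand} yields an oblivious $\epsR$-reliable $2$-hop $(4+\eps')$-spanner with
\begin{equation*}
\Of\!\left(\frac{s\,\Depth}{\epsR} \log\frac{\Depth}{\epsR}\right) = \Of\!\left(\frac{n}{\epsR \eps^4}\, \log^2 n\, \log^2 \SpreadC \cdot \log\frac{\log n\, \log \SpreadC}{\epsR \eps}\right)
\end{equation*}
edges, matching the claimed size bound.

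The main obstacle is sharpening the stretch to $(3+\eps)$, which I would handle by mimicking the triangle-inequality trick from \lemref{r:s:tree:rand}, but now adapted to the slightly more delicate planar setting where the cluster is centered at a net point rather than at the separator itself. Each cluster produced in \secref{planar_cover_constr} is a ball $\ballC = \ballY{u}{(1+\eps/8)r_j}$ whose center $u$ lies on the separator cycle $\PC$. For any pair $\pp,\pq \in \ballC$, the shortest path between them in $\G$ must cross $\PC$ at some vertex $v$; as in the proof of \lemref{planar:cover}, one can take $v$ with $\dMY{u}{v} \leq \eps r_j/8$ and $r_{j-1} \leq \max(\dMY{\pp}{v},\dMY{v}{\pq}) \leq r_j$, hence $r_j \leq (1+\eps/8)\dMY{\pp}{\pq}$. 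Therefore, for any constellation center $\pz \in \ballC$ surviving the attack,
\begin{equation*}
\dMY{\pp}{\pz} + \dMY{\pz}{\pq} \leq \dMY{\pp}{\pq} + 2\dMY{v}{\pz} \leq \dMY{\pp}{\pq} + 2\bigl(\eps r_j/8 + (1+\eps/8) r_j\bigr) \leq (3+O(\eps))\, \dMY{\pp}{\pq}.
\end{equation*}
Rescaling $\eps'$ by the hidden constant absorbs the $O(\eps)$ slack, yielding stretch $3+\eps$ without affecting the size bound up to constants. The rest of the reliability analysis (failed clusters, damage bookkeeping) is inherited verbatim from \lemref{cover:to:r:s:rand}.
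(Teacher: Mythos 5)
Your proposal is correct and follows essentially the same route as the paper: build the $(2+O(\eps))$-cover via \thmref{cover:planar}, feed it to \lemref{cover:to:r:s:rand}, and then recover stretch $3+\eps$ by bounding the detour through a surviving constellation center $\pz$ using the crossing vertex $v$ of the shortest $\pp$--$\pq$ path with the separator and the fact that the cluster is a small ball around a net point near $v$. The paper's own dilation computation is an equivalent rearrangement of your chain of triangle inequalities, so there is nothing substantive to add.
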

\begin{proof}
    By \thmref{cover:planar}, one can build a $(2+\eps/2)$-cover of
    $\G$ of depth $\Depth = \Of( \eps^{-2} \log \SpreadC \log n)$ and
    size $\Of(n \Depth)$.  Plugging this into
    \lemref{cover:to:r:s:rand} yields an oblivious $\epsR$-reliable
    $2$-hop $(4+\eps)$-spanner for $\G$, of size
    \begin{math}
        \Of\Bigl( \frac{n \Depth^2}{\epsR} \log\frac{\Depth}{\epsR}
        \Bigr)%
        =%
        \Of\Bigl( \epsR^{-1} \eps^{-4} n \log^2 n \log^2 \SpreadC
        \log\frac{\log\SpreadC \log n}{\epsR\eps } \Bigr).
    \end{math}

    We next show the improved bound on the dilation.  Using the
    notation from \lemref{planar:cover}, for a pair of points
    $\pp, \pq \in \PS$, there is a cluster $\cl \in \Cover$, such that
    $\pp, \pq \in \cl$ and $\diamX{\cl} \leq (2+\eps)\dMY{\pp}{\pq}$.
    Let $v$ be the point where the cycle separator and the shortest
    path between $\pp$ and $\pq$ intersect. Notice, that for the
    center point $u \in \cl$ we have
    $\dMY{u}{\pz} \leq \diamX{\cl} /2$, for all $\pz \in \cl$.
    Furthermore, using the notation from the proof of
    \lemref{planar:cover}, for some $j$, we have
    \begin{equation*}
        \frac{\dMY{u}{v}}{\dMY{\pp}{\pq}}
        \leq
        \frac{\eps r_j /8}{r_{j-1}}
        =
        \frac{\eps}{8} \left( 1 + \frac{\eps}{8} \right)
        \leq
        \frac{\eps}{4}.
    \end{equation*}
    Thus, the length of a $2$-hop path between $\pp$ and $\pq$ via
    $\pz \in \cl$ can be bounded by
    \begin{align*}
      \dMY{\pp}{\pz} + \dMY{\pz}{\pq}
      & \leq%
        \dMY{\pp}{v} +
        2\dMY{v}{u} + 2\dMY{u}{\pz} + \dMY{v}{\pq} \leq
        \dMY{\pp}{\pq} + 2 \frac{\eps}{4}\dMY{\pp}{\pq} + 2
        \diamX{\cl} / 2%
      \\
      & \leq%
        \left( 1 + \frac{\eps}{2} + 2+\eps \right) \dMY{\pp}{\pq}
        \leq (3+ 2\eps) \dMY{\pp}{\pq}.
        \SubmitVer{\qedhere}
    \end{align*}
\end{proof}

\begin{lemma}
    \lemlab{r:s:planar:det}%
    Let $\G$ be a weighted planar graph with $n$ vertices and spread
    $\SpreadC$. Furthermore, let $\epsR,\eps \in (0,1)$ and
    $t \in \mathbf{N}$ be parameters. Then, one can build a
    deterministic $\epsR$-reliable $((4+\eps)t-3)$-spanner for $\G$ of
    size
    \begin{math}
        \Of(\epsR^{-2} \eps^{-6} \cdot n^{1+1/t} \log^3 n \log^3
        \SpreadC).
    \end{math}
\end{lemma}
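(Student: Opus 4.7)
The plan is to build this result by combining \thmref{cover:planar} with \lemref{covers:to:det}, exactly parallel to how \lemref{r:s:tree:det} is obtained from \lemref{cover:trees} plus \lemref{covers:to:det}. First I would invoke \thmref{cover:planar} on $\G$ with parameter $\eps/2$, producing a $(2+\eps/2)$-cover $\Cover$ of depth $\Depth = \Of(\eps^{-2} \log n \log \SpreadC)$ and size $s = \Of(n \Depth)$. Then I would feed $\Cover$ into \lemref{covers:to:det}\itemref{c:t:d:a} with the same parameters $t$ and $\epsR$, which yields a $\epsR$-reliable $(2t-1)\dst$-hop spanner of size $\Of(\epsR^{-2} \Depth^2 s\, n^{1/t})$, with $\dst = 2+\eps/2$.

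The size bound is a direct substitution: since $s = \Of(n\Depth)$, we get $\Depth^2 s = \Of(n \Depth^3) = \Of(n\, \eps^{-6} \log^3 n \log^3 \SpreadC)$, and so the total number of edges is $\Of(\epsR^{-2}\eps^{-6}\, n^{1+1/t} \log^3 n \log^3 \SpreadC)$, matching the claim. The reliability property is inherited immediately from \lemref{covers:to:det}.

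The only nontrivial step is tightening the dilation from the naive $(2t-1)(2+\eps/2) \approx (4+\eps)t-2$ down to $(4+\eps)t-3$, and this is where I would spend the real work. Given $\pp,\pq\in\PS\setminus\EBSet$ and the $(2t-1)$-hop path $\pp=\pp_0,\pp_1,\ldots,\pp_{2t-1}=\pq$ inside the cluster $\cl \in \Cover$, I would bound the $(2t-3)$ middle hops each by $\diamX{\cl}$, and treat the first and last hops carefully via the triangle inequality through the cluster center $u$ (a net point on the separator cycle) and the intersection $v$ of the shortest $\pp$-$\pq$ path with the separator, mimicking the improved-dilation argument in the proof of \lemref{r:s:planar:rand}. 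Using $\dMY{\pp_0}{u}+\dMY{u}{\pq} \leq \dMY{\pp}{\pq} + 2\dMY{u}{v}$, together with $\dMY{u}{\pp_1}, \dMY{\pp_{2t-2}}{u} \leq \diamX{\cl}/2$ and the key estimate $\dMY{u}{v} \leq (\eps/4)\dMY{\pp}{\pq}$ from the planar cover construction, I would obtain
\begin{equation*}
    \sum_{i=0}^{2t-2} \dMY{\pp_i}{\pp_{i+1}}
    \leq (1+\eps/2)\dMY{\pp}{\pq} + (2t-2)\diamX{\cl}
    \leq \bigl( 1 + \tfrac{\eps}{2} + (2t-2)(2+\tfrac{\eps}{2})\bigr)\dMY{\pp}{\pq},
\end{equation*}
which simplifies to at most $\bigl((4+\eps)t-3\bigr)\dMY{\pp}{\pq}$, as required.

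The main potential obstacle is making sure that the extra factor $2\dMY{u}{v}$ introduced by $u\neq v$ (the planar cover places the net on the separator cycle, not exactly at the cut) is absorbed cleanly, and that rescaling the cover by $\eps/2$ rather than $\eps$ does not degrade the size bound beyond the stated $\eps^{-6}$; both turn out to be routine once the analogous computation for trees in \lemref{r:s:tree:det} is in hand.
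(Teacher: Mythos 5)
Your proposal is correct and follows essentially the same route as the paper: apply \thmref{cover:planar} to get a $(2+\eps/2)$-cover of depth $\Of(\eps^{-2}\log n\log\SpreadC)$, plug it into \lemref{covers:to:det}\itemref{c:t:d:a} for the size and reliability, and improve the dilation by combining the separator/net-point argument of \lemref{r:s:planar:rand} with the $(2t-1)$-hop accounting of \lemref{r:s:tree:det}. The paper states the dilation step only by reference to those two proofs, whereas you carry out the computation explicitly; your arithmetic yielding $(4+\eps)t-3-\eps/2 \le (4+\eps)t-3$ checks out.
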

\begin{proof}
    Let $\dst = 2+\eps/2$.  By \thmref{cover:planar}, one can build a
    $\dst$-cover of $\G$ with depth
    $\Depth = \Of( \eps^{-2} \log \SpreadC \log n)$ and size
    $\Of(n \Depth)$.  Plugging this into \lemref{covers:to:det}
    \itemref{c:t:d:a} yields a deterministic $\epsR$-reliable
    $(2t-1)\dst$-spanner for $\Metric$, of size
    \begin{math}
        \Of(\epsR^{-2} \Depth^3 n^{1+1/t}) =%
        \Of(\epsR^{-2} \eps^{-6} \cdot n^{1+1/t} \log^3 n \log^3
        \SpreadC ).
    \end{math}

    The improved dilation follows by using the same argument as in the
    proof of \lemref{r:s:planar:rand} and \lemref{r:s:tree:det}.~
\end{proof}





\section{Proper expanders as reliable spanners for uniform   metric}
\seclab{expander:spanner}

The purpose of this section is to prove that with the appropriate
parameters, proper expanders (as defined in \defref{proper:expander})
and edge-union of proper expanders constitute good reliable spanners
for uniform metrics.  That is, they satisfy the requirements of
\thmref{2t-2t-1-spanner}.
 
\subsection{Preliminaries}

In the following, \emphw{$(n,d)$-graph} denotes a $d$-regular,
$n$-vertex graph.  Recall that $\lambda(G)$ denotes the normalized
eigenvalue of $\G$ -- that is, the second largest in absolute value
(see~\defref{proper:expander}).

For a given graph $\G=(\VV,\Edges)$ and $S,H\subseteq \VV$, denote by
$\Gamma_H(S)=\Set{v\in H}{ \exists u\in S,\; uv\in E}$ the
\emphw{neighbors} of $S$ in $H$.  For $S,T\subset \VV$, denote
$\EdgesY{S}{T}=\Set{ uv \in E}{ u\in S,\; v\in T}$.

The following is well known result on expanders, attributed to Alon
and Chung \cite{ac-eclst-88} in~\cite[Sec.~2.4]{hlw-ega-06}.
\begin{lemma}[Expander mixing lemma]
    Let $\GDef$ be an $(n,d)$ graph.  Then for every $S,T\subset \VV$,
    \begin{equation}
        \eqlab{mixing}%
        \cardin{ |\EdgesY{S}{T}|- \frac{d|S|\,|T|}{n} }%
        \leq%
        \lambda(G) d \sqrt{|S|\,|T|}.
    \end{equation}

\end{lemma}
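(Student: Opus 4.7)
The plan is to prove this via spectral analysis of the adjacency matrix $\M = \mathrm{Adj}(\G)$, the standard route for the expander mixing lemma. Let $\mathbf{1}_S, \mathbf{1}_T \in \Re^{\VV}$ denote the indicator vectors of $S$ and $T$. The first step is to rewrite the edge count algebraically as
\begin{equation*}
    |\EdgesY{S}{T}| = \mathbf{1}_S^{\mathsf T} \M\, \mathbf{1}_T,
\end{equation*}
so that the left-hand side of \Eqref{mixing} becomes a quadratic form. This is the natural bridge between a combinatorial quantity and the spectral data of $\M$.

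Next, I would exploit the $d$-regularity of $\G$: the all-ones vector $\mathbf{1}$ is an eigenvector of $\M$ with eigenvalue $d$, and by the definition of $\lambda(\G)$ (\defref{proper:expander}), every eigenvalue of $\M$ restricted to $\mathbf{1}^\perp$ has absolute value at most $\lambda(\G)\, d$. Decomposing orthogonally with respect to $\mathbf{1}$,
\begin{equation*}
    \mathbf{1}_S = \tfrac{|S|}{n} \mathbf{1} + \mathbf{x},
    \qquad
    \mathbf{1}_T = \tfrac{|T|}{n} \mathbf{1} + \mathbf{y},
\end{equation*}
where $\mathbf{x}, \mathbf{y} \perp \mathbf{1}$, a direct expansion gives
\begin{equation*}
    \mathbf{1}_S^{\mathsf T}\M\, \mathbf{1}_T = \tfrac{|S|\,|T|}{n^2}\, \mathbf{1}^{\mathsf T} \M\, \mathbf{1} + \mathbf{x}^{\mathsf T} \M\, \mathbf{y} = \tfrac{d|S|\,|T|}{n} + \mathbf{x}^{\mathsf T} \M\, \mathbf{y},
\end{equation*}
where the cross terms vanish because $\M \mathbf{1} = d \mathbf{1}$ and $\mathbf{x}, \mathbf{y}$ are orthogonal to $\mathbf{1}$.

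The final step is to bound the error term $|\mathbf{x}^{\mathsf T} \M\, \mathbf{y}|$. Since $\M$ restricted to $\mathbf{1}^\perp$ has operator norm at most $\lambda(\G)\, d$, the Cauchy--Schwarz inequality yields $|\mathbf{x}^{\mathsf T} \M\, \mathbf{y}| \leq \lambda(\G)\, d\, \|\mathbf{x}\|\, \|\mathbf{y}\|$. Since $\mathbf{x}$ is the orthogonal projection of $\mathbf{1}_S$ onto $\mathbf{1}^\perp$, we have $\|\mathbf{x}\|^2 = \|\mathbf{1}_S\|^2 - (|S|/n)^2 \cdot n = |S|(1-|S|/n) \leq |S|$, and analogously $\|\mathbf{y}\|^2 \leq |T|$. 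Combining these bounds gives $|\mathbf{x}^{\mathsf T} \M\, \mathbf{y}| \leq \lambda(\G)\, d\, \sqrt{|S|\,|T|}$, which is exactly the desired inequality. There is no real obstacle here; the only subtlety is verifying that $\M$ being symmetric (so orthogonally diagonalizable) legitimately gives the operator-norm bound on $\mathbf{1}^\perp$, and handling the counting convention for edges between sets that may intersect (each edge in $E(S \cap T, S \cap T)$ is counted twice by the quadratic form, but this is consistent with the usual statement of the lemma).
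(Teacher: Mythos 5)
Your proof is correct, and it is the standard spectral argument for the expander mixing lemma; the paper itself does not prove this statement but cites it as a known result of Alon and Chung, so there is nothing to compare against beyond noting that your argument is the canonical one. The decomposition of $\mathbf{1}_S$ and $\mathbf{1}_T$ along $\mathbf{1}$ and its orthogonal complement, the vanishing of cross terms via $\M\mathbf{1}=d\mathbf{1}$, and the operator-norm bound $\lambda(\G)d$ on $\mathbf{1}^\perp$ are all handled correctly, as is the norm computation $\|\mathbf{x}\|^2\le|S|$. The one convention point you flag is real but harmless here: with the paper's definition $\EdgesY{S}{T}$ as a \emph{set} of edges, the identity $|\EdgesY{S}{T}|=\mathbf{1}_S^{\mathsf T}\M\,\mathbf{1}_T$ holds exactly only when edges inside $S\cap T$ are counted with multiplicity two (or when $S,T$ are disjoint); since the quadratic form only overcounts, the stated upper bound on $|\EdgesY{S}{T}|$ still follows, and that is the only direction used in the paper's application (\clmref{cl:1}).
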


We also need the following lemma which is a minor variant of known
constructions.
\begin{lemma}[\cite{bho-spda-19}]
    \lemlab{expander}%
    Let $L,R$ be two disjoint sets, with a total of $n\in 2\NN$
    elements, and let $\epsA \in (0,1)$ be a parameter.  There exists
    a bipartite graph $\G = (L \cup R, E)$ with $\Of(n/\epsA^2)$
    edges, such that \NotSubmitVer{\smallskip}%
    \begin{compactenumI}
        \item for any subset $X \subseteq L$, with
        $\cardin{X} \geq \epsA |L|$, we have that
        $\cardin{\NbrX{X}} > (1-\epsA)|R|$, and
        \NotSubmitVer{\smallskip}%
        \item for any subset $Y \subseteq R$, with
        $\cardin{Y} \geq \epsA |R|$, we have that
        $\cardin{\NbrX{Y}} > (1-\epsA)|L|$.
    \end{compactenumI}
\end{lemma}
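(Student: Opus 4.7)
The plan is to use the probabilistic method with an asymmetric, decoupled construction, in which the two expansion properties are ensured by two independently generated edge sets. Let $\ell=\cardin{L}$, $r=\cardin{R}$, and let $C$ be a sufficiently large absolute constant. Define
\begin{equation*}
    \widehat{d}_L=\ceil{Cn/(\epsA^2\ell)}
    \qquad\text{and}\qquad
    \widehat{d}_R=\ceil{Cn/(\epsA^2 r)}.
\end{equation*}
If $\widehat{d}_L\geq r$, let $E_L=L\times R$ (a complete bipartite edge set); otherwise, let $E_L$ be obtained by having each $u\in L$ pick $\widehat{d}_L$ neighbors in $R$ uniformly at random with replacement. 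Symmetrically define $E_R$. The output graph is $\G=(L\cup R,\,E_L\cup E_R)$. In either regime, $\cardin{E_L}\leq\min(\ell r,\widehat{d}_L\ell)=\Of(n/\epsA^2)$, and similarly for $\cardin{E_R}$, so the total edge count is $\Of(n/\epsA^2)$.

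To verify property~(I), first consider the complete-bipartite regime $\widehat{d}_L\geq r$: then $E_L$ alone makes $\NbrX{X}=R$ for every non-empty $X\subseteq L$, so the property is immediate. Otherwise, fix $X\subseteq L$ with $\cardin{X}\geq \epsA\ell$ together with a ``forbidden'' set $T\subseteq R$ of size $\cardin{T}\geq \epsA r$. The probability that all $\widehat{d}_L\cdot\cardin{X}$ random choices made by vertices of $X$ land outside $T$ is at most $(1-\cardin{T}/r)^{\widehat{d}_L\cardin{X}}\leq (1-\epsA)^{\widehat{d}_L\epsA\ell}\leq \exp(-\widehat{d}_L\epsA^2\ell)\leq e^{-Cn}$. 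Union-bounding over the at most $2^\ell\cdot 2^r=2^n$ choices of $(X,T)$ bounds the failure probability for property~(I) by $2^n e^{-Cn}<1/4$ for $C$ large enough. Property~(II) follows by a symmetric argument applied to $E_R$, and a final union bound produces a graph satisfying both properties.

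The main conceptual obstacle is that a uniform choice of degree would blow up the edge count when $\ell$ and $r$ are very unbalanced: ensuring expansion from the smaller side requires a high degree there, and multiplying that degree by the size of the larger side would overshoot the $\Of(n/\epsA^2)$ budget. The decoupling trick --- sampling $E_L$ only to support property~(I) and $E_R$ only to support property~(II), with each side's degree inversely proportional to \emph{its own} size --- keeps each contribution within budget regardless of the split. The capping by the complete-bipartite case handles the boundary situation in which the required degree exceeds the size of the opposite side, where random sampling would no longer be well-defined but the expansion properties hold trivially.
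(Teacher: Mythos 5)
Your proof is correct. Note that the paper does not actually prove this lemma --- it is imported as a known result from the cited reference \cite{bho-spda-19}, where it is established by essentially the same argument you give: a random bipartite graph with degree $\Theta(1/\epsA^2)$ per vertex, a union bound over pairs $(X,T)$ of a ``large source set'' and a ``large avoided set,'' and the estimate $(1-\epsA)^{\epsA\cdot\Omega(n/\epsA^2)}\leq e^{-\Omega(n)}$ beating the $2^n$ term. Your decoupling of $E_L$ and $E_R$ with side-dependent degrees, and the cap by the complete bipartite graph when the required degree exceeds the opposite side, are sensible refinements that make the $\Of(n/\epsA^2)$ edge bound go through even when $|L|$ and $|R|$ are very unbalanced (in this paper the lemma is only invoked with $|L|=|R|$, but the statement is general, so this care is warranted).
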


\begin{remark}
    \remlab{constructive}%
    The randomized construction of \lemref{expander} succeeds with
    probability $1-1/n^{O(1)}$. Since we use the construction below on
    sets that are polynomially large (i.e., $n^{1/t}$), one can use
    \lemref{expander} constructively in this case (potentially losing
    an additional $\log t$ factor).  This also applies to the other
    expander constructions used in this paper.  But while the
    randomized construction works with high probability, verifying it
    seems computationally intractable.
\end{remark}

\subsection{Construction of reliable spanners from proper expanders}
\seclab{spanner:construction}

\thmref{2t-2t-1-spanner} states the existence two different spanners
and accordingly, we present two different graphs,
$\GP _{n,\epsR,2t-1}$ and $\GP _{n,\epsR,2t}$.

We begin with $\GP_{n,\epsR,2t}$.  Recall the definition of proper
expander (\defref{proper:expander}) with parameter $\delta$.  Fix
$n,t\in \NN$, $n>t$, and $\delta\in(0,1/4)$ such that
\begin{equation}%
    \eqlab{n^1/t>theta}%
    n^{1/t}\geq (\cfX{\delta}
    \delta)^{-1}.
\end{equation}
The graph $\GP_{n,\epsR,2t}$ is defined to be an $(n,d)$-graph that is
a proper expander with $\delta=\epsR$, and
\begin{equation}
    \eqlab{d}
    d = \ceil{ \max \{2 \epsR^{-1} n^{1/t}, 36C^2
       \epsR^{-3}
       \cfX{\epsR}^{-2}\}}
\end{equation}

To define $\GP_{n,\epsR,2t-1}$ we follow an idea we used slightly
inferior construction in a preliminary version of this paper, see
\cite[Section~3.3.1]{hmo-rsms-21}: Let $n'=n^{1-1/t}$.  Partition the
space to $n^{1/t}$ subsets, $A_1,\ldots,A_{n^{1/t}}$, each of size
$n'$, and let $t'=t-1$.  For every $A_i$ construct a copy of the graph
$\GP_{n',\epsR, 2t'}$ with $A_i$ being the vertices.  The degree in
those graphs is
$d'=\Of(\epsR^{-1}{n'}^{1/t'})=\Of(\epsR^{-1}n^{1/t})$.

In addition, for every $i\ne j$ connect $A_i$ with $A_j$ with a
bipartite expander $(A_i\cup A_j,\Edges_{ij})$ according to
\lemref{expander}, with $\epsA=\epsR$. \ManorHide{TO CHECK CONSTANTS}
This increases the degree by $\Of(\epsR^{-2} n^{1/t})$.  Thus, the
total degree of $\GP_{n,\epsR,2t-1}$ is $\Of(\epsR^{-2} n^{1/t})$.%
\ManorHide{Note that the second stage strictly dominates the first. It
   smells like there should be a better balance between the two that
   improve the dependence of the size of $\GP_{n,\epsR,2t-1}$ on
   $\epsR$.}

\subsection{Analysis of \TPDF{$\GP_{n,\epsR,2t}$}{G,n,epsilon,2t}}
\seclab{analysis:gp-2t}

For the rest of \secref{analysis:gp-2t}, let $\G=\GP_{n,\epsR,2t}$,
and $\lambda=\lambda(\G) $.

\subsubsection{The shadow of a bad set}

Let $\BSet \subset \VV$ an arbitrary subset, such that
$(1+5\epsR)|\BSet| < n$ (otherwise, we can choose $\EBSet=\VV$ and
there is nothing to prove).  Choose
\begin{equation}%
    \eqlab{epsilon}%
    \eps%
    =%
    (1+\epsR )(|\BSet|/n +\lambda/\sqrt{\epsR})
    \leqX{\itemref{p:3}}%
    (1+\epsR )(|\BSet|/n
    +C /\sqrt{d \epsR}) \stackrel{\Eqref{d}}\leq (1+\epsR )|\BSet|/n
    +\cepsR \epsR,
\end{equation}
so (recalling that $\delta=\epsR$, and $\cfX{\delta}=\cfX{\epsR}$)
\begin{equation}%
    \eqlab{1-delta-epsilon}%
    1-\delta-\eps%
    \geq%
    1-\epsR -|\BSet| /n -\epsR |\BSet|/n -\cepsR \epsR
    \geq%
    1-\epsR -(1-4\epsR) - \epsR 
    -\cepsR \epsR \geq \epsR.
\end{equation}

Define the ``shadows of $\BSet$'' as follows.  Let
$\SSet_0=\emptyset$, and for $i > 0$ let
\begin{equation}
    \eqlab{S:i}%
    \SSet_{i}%
    =%
    \SSet_{i-1} \cup 
    \Set{u\in \VV\setminus (\BSet\cup \SSet_{i-1})}%
    { |\EdgesY{u}{ \BSet\cup \SSet_{i-1}}| \geq \eps d \bigr.}.
\end{equation}
These are all the ``bad'' vertices that have a lot of neighbors inside
the (growing) bad set $\BSet \cup \SSet_{i-1}$.  Lastly, set
$\SSet=\cup_i \SSet_i$ the limit of $\SSet_i$, and
$\EBSet=\BSet \cup \SSet$.

\subsubsection{Bounding the size of the shadow}

By the construction above of the damaged set $\EBSet$, for every
$u\in \VV\setminus \EBSet$,
\begin{equation}
    \eqlab{E:u:EBSet}%
    |\EdgesY{u}{\EBSet}|%
    <%
    \eps d.
\end{equation}

\begin{claim} %
    \clmlab{cl:1}%
    $|\SSet|\leq \epsR |\BSet|$.
\end{claim}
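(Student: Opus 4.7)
The plan is to derive a contradiction by picking the smallest index at which the growing shadow $\SSet_j$ first crosses the threshold $\epsR|\BSet|$, and then applying the expander mixing lemma to rule out that step.

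First, I would suppose for contradiction that $|\SSet| > \epsR|\BSet|$ and take $j$ to be the smallest index with $|\SSet_j| > \epsR|\BSet|$; by minimality, $|\SSet_{j-1}| \leq \epsR|\BSet|$, so the set $T = \BSet \cup \SSet_{j-1}$ satisfies $|T| \leq (1+\epsR)|\BSet|$. The key claim I would establish next is that \emph{every} $u \in \SSet_j$ has at least $\eps d$ neighbors in $T$. For $u \in \SSet_j \setminus \SSet_{j-1}$ this is the defining condition \Eqref{S:i}. For $u \in \SSet_{j-1}$, if $u$ was added at step $i(u) \leq j-1$, then $|\EdgesY{u}{\BSet \cup \SSet_{i(u)-1}}| \geq \eps d$, and since $\SSet_{i(u)-1} \subseteq \SSet_{j-1}$, the same lower bound holds for $|\EdgesY{u}{T}|$.

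Summing over $u \in \SSet_j$ gives $|\EdgesY{\SSet_j}{T}| \geq \eps d |\SSet_j|$. Applying the expander mixing lemma \Eqref{mixing} to $S = \SSet_j$ and $T$ yields
\begin{equation*}
    \eps d |\SSet_j| \;\leq\; |\EdgesY{\SSet_j}{T}| \;\leq\; \frac{d|\SSet_j|\,|T|}{n} + \lambda d \sqrt{|\SSet_j|\,|T|}.
\end{equation*}
Dividing by $d|\SSet_j|$ and substituting $|T| \leq (1+\epsR)|\BSet|$ together with $|\SSet_j| > \epsR|\BSet|$ (so $|T|/|\SSet_j| < (1+\epsR)/\epsR$) produces
\begin{equation*}
    \eps \;<\; \frac{(1+\epsR)|\BSet|}{n} + \lambda\sqrt{\frac{1+\epsR}{\epsR}}.
\end{equation*}

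Finally, I compare this with the definition \Eqref{epsilon}, namely $\eps = (1+\epsR)|\BSet|/n + (1+\epsR)\lambda/\sqrt{\epsR}$. Cancelling the common term, the derived inequality reduces to $(1+\epsR)/\sqrt{\epsR} < \sqrt{(1+\epsR)/\epsR}$, i.e.\ $(1+\epsR)^2 < 1+\epsR$, which fails for any $\epsR > 0$. This contradiction establishes $|\SSet_j| \leq \epsR|\BSet|$ for every $j$, and hence $|\SSet| \leq \epsR|\BSet|$. The main subtlety will be to state the neighbor bound uniformly over \emph{all} vertices of $\SSet_j$ (not merely the newly added ones) so that the mixing-lemma application actually bounds $|\SSet_j|$ rather than the smaller increment $\SSet_j \setminus \SSet_{j-1}$; using the ordered-pair interpretation of $|\EdgesY{S}{T}|$ makes the overlap $\SSet_j \cap T = \SSet_{j-1}$ harmless.
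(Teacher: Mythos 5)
Your proposal is correct and follows essentially the same route as the paper's proof: both pick the smallest index $j$ with $|\SSet_j|>\epsR|\BSet|$, lower-bound the number of edges between $\SSet_j$ and $\BSet\cup\SSet_{j-1}$ by $\eps d|\SSet_j|$ (the paper via a telescoping sum over the increments $\Delta_i=\SSet_i\setminus\SSet_{i-1}$, you via the monotonicity observation that every vertex of $\SSet_j$ already has $\eps d$ neighbors in the larger set), and then derive a contradiction from the expander mixing lemma and the choice of $\eps$ in \Eqref{epsilon}. Your closing remark about the ordered-pair convention for $|\EdgesY{S}{T}|$ when $S$ and $T$ overlap is exactly the right caveat, and your algebraic endgame is an equivalent (arguably cleaner) rearrangement of the paper's $\rho_i$ computation.
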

\begin{proof}
    The argument we use is similar to the one used in the proof
    of~\cite[Lemma~5.3]{blmn-mrtp-05}.  Let
    $\Delta_i = \SSet_i \setminus \SSet_{i-1} $.  We have that
    \begin{align*}
      \eps d|\SSet_i|
      &\quad\leq\quad%
        \eps d \sum_{j=1}^i |\Delta_j|%
        \stackrel{\Eqref{S:i}}%
        \leq%
        \sum_{j=1}^i 
        |\EdgesY{\Delta_j}{\BSet\cup \SSet_{j-1}}|
        =
        |\EdgesY{\SSet_i}{\BSet\cup \SSet_{i-1}}|%
      \\%
      &\leqX{\Eqref{mixing}}           
        \frac{d(|\BSet|+|\SSet_{i-1}|)|\SSet_i|}{n}+ \lambda d
        \sqrt{(|\BSet|+|\SSet_{i-1}|)|\SSet_i|}.            
    \end{align*}
    This implies that
    \begin{math}
        |\SSet_i| \pth{ \eps - \frac{|\BSet|+|\SSet_{i-1}|}{n}} \leq
        \lambda \sqrt{(|\BSet|+|\SSet_{i-1}|)|\SSet_i|}.
    \end{math}
    Squaring, and dividing by $|\SSet_i|$, we have
    \begin{equation*}
        |\SSet_i| \left(\eps-\frac{|\BSet|+|\SSet_{i-1}|}{n}\right)^2
        \leq%
        \lambda^2(|\BSet|+|\SSet_{i-1}|)%
        \leq%
        \lambda^2(|\BSet|+|\SSet_{i}|),
    \end{equation*}
    which implies, for
    $\rho_i = \left(\eps-\frac{|\BSet|+|\SSet_{i-1}|}{n}\right)^2$,
    that
    \begin{equation*}
        |\SSet_i| \leq \frac{\lambda^2 |\BSet|}
        {\rho_i -\lambda^2}.
    \end{equation*}
    We claim that $|\SSet_i|\leq \epsR |\BSet|$ for every $i$.
    Indeed, otherwise let $i$ the smallest index such that
    $|\SSet_i| > \epsR |\BSet|$. So
    $|\SSet_{i-1}|\leq \epsR |\BSet|<|\SSet_i|$.  By definition, see
    \Eqref{epsilon}, we have that
    \begin{math}
        \eps%
        =%
        (1+\epsR )(|\BSet|/n +\lambda/\sqrt{\epsR})
    \end{math}
    and as such
    \begin{equation*}
        \rho_i%
        =%
        \pth{\eps-\frac{|\BSet|+|\SSet_{i-1}|}{n}}^2
        \geq%
        \pth{
           (1+\epsR )\pth{\frac{|\BSet|}{n} +
              \frac{\lambda}{\sqrt{\epsR}} } 
           -\frac{(1+\epsR)|\BSet|}{n}}^2%
        \geq%
        \frac{1+\epsR}{\epsR}
        \lambda^2
    \end{equation*}
    We thus have that
    \begin{equation*}
        \epsR |\BSet|
        <%
        |\SSet_i|
        \leq%
        \frac{\lambda^2
           |\BSet|}{\rho_i
           -\lambda^2}%
        \leq%
        \frac{\lambda^2 |\BSet|}{\frac{1+\epsR}{\epsR}\lambda^2
           -\lambda^2}%
        =%
        \epsR |\BSet|,
    \end{equation*}
    which is a contradiction.
\end{proof}

\subsubsection{The expansion happens outside the bad set}

\begin{claim}
    \clmlab{cl:2}%
    Let $U\subseteq \VV\setminus\EBSet$.  If
    $|U|\leq \cfX{\epsR} \epsR n^{1-1/t}$ then
    $|\Gamma_{\VV\setminus \EBSet}(U) |\geq n^{1/t}|U|$.
\end{claim}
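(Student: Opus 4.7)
The plan is to apply Property~\itemref{p:2} of the proper expander $\G = \GP_{n,\epsR,2t}$ to the set $U$, and then subtract off the edges from $U$ that land inside $\EBSet$. The key observation that makes the whole thing work is the bound $d \geq 2\epsR^{-1} n^{1/t}$ guaranteed by \Eqref{d}, together with the per-vertex control \Eqref{E:u:EBSet} we already derived.

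First I would verify that the hypothesis of Property~\itemref{p:2} is satisfied, i.e., that $|U|$ is small relative to $n/d$. Since $d \geq 2\epsR^{-1}n^{1/t}$, we have $n/d \leq \epsR n^{1-1/t}/2$, so the hypothesis $|U| \leq \cfX{\epsR} \epsR n^{1-1/t}$ (with the constant $\cfX{}$ in the claim statement chosen appropriately, up to a harmless factor of $2$ relative to the constant appearing in \defref{proper:expander}) implies $|U| \leq \cfX{\epsR} n/d$. Property~\itemref{p:2} with $\delta = \epsR$ then yields
\begin{equation*}
   |\Gamma_\VV(U)| \;\geq\; (1-\epsR)\, d\, |U|.
\end{equation*}

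Next I would bound the portion of $\Gamma_\VV(U)$ that lies inside $\EBSet$. By \Eqref{E:u:EBSet}, every vertex $u \in U \subseteq \VV \setminus \EBSet$ sends fewer than $\eps d$ edges to $\EBSet$, so the total number of edges between $U$ and $\EBSet$ is strictly less than $\eps d |U|$. In particular,
\begin{equation*}
    |\Gamma_\EBSet(U)| \;\leq\; |\EdgesY{U}{\EBSet}| \;<\; \eps d |U|.
\end{equation*}
Since $\Gamma_\VV(U) = \Gamma_\EBSet(U) \cup \Gamma_{\VV\setminus\EBSet}(U)$, subtraction gives
\begin{equation*}
    |\Gamma_{\VV\setminus\EBSet}(U)| \;\geq\; |\Gamma_\VV(U)| - |\Gamma_\EBSet(U)| \;\geq\; (1-\epsR-\eps)\, d\, |U|.
\end{equation*}

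Finally, I would invoke the two pieces of arithmetic already set up in \secref{analysis:gp-2t}. By \Eqref{1-delta-epsilon}, $1-\epsR-\eps \geq \epsR$, and by \Eqref{d}, $\epsR d \geq 2 n^{1/t}$. Combining,
\begin{equation*}
    |\Gamma_{\VV\setminus\EBSet}(U)| \;\geq\; \epsR\, d\, |U| \;\geq\; 2 n^{1/t} |U| \;\geq\; n^{1/t} |U|,
\end{equation*}
which is the desired conclusion. The only subtle point is the bookkeeping on the constant $\cfX{}$ in the hypothesis versus the one in Property~\itemref{p:2}; this is not a real obstacle since the claim is stated with an unspecified constant depending only on $\epsR$, so any fixed factor loss can be absorbed. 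The structural content of the proof is simply that the expander's neighborhood expansion for small sets is only mildly degraded by removing a set $\EBSet$ whose reach into $\VV \setminus \EBSet$ is controlled vertex-by-vertex.
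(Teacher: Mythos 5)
Your proof is correct and follows essentially the same route as the paper's: verify $|U|\leq \cfX{\epsR} n/d$ via \Eqref{d}, apply Property~\itemref{p:2} with $\delta=\epsR$, subtract the at most $\eps d|U|$ neighbors inside $\EBSet$ coming from \Eqref{E:u:EBSet}, and finish with \Eqref{1-delta-epsilon} and \Eqref{d}. No substantive differences.
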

\begin{proof}
    We have
    \begin{equation*}
        |U|%
        \leq%
        \cfX{\epsR} \epsR n^{1-1/t}
        = \frac{\cepsR n}{n^{1/t} / \epsR}
        \leq%
        \frac{\cepsR n}{d/2}
        \leq%
        \frac{\cepsR n}{d},
    \end{equation*}
    since
    \begin{math}
        d \geq 2 n^{1/t}/ \epsR
    \end{math}
    by \Eqref{d}.  As such, by the expansion property \itemref{p:2},
    $|\Gamma_\VV(U)|\geq (1-\delta)d |U|$ (as $\delta = \epsR$).
    Furthermore
    \begin{equation*}
        |\Gamma_{\EBSet}(U)|\leq |\EdgesY{U}{\EBSet}|
        \stackrel{\Eqref{E:u:EBSet}}\leq \eps d |U|,            
    \end{equation*}
    so
    \begin{equation*}
        |\Gamma_{\VV\setminus \EBSet}(U)|%
        \geq%
        |\Gamma_\VV(U)| - |\Gamma_{\EBSet}(U)|%
        \geq%
        (1-\delta-\eps)d |U|
        \stackrel{\Eqref{1-delta-epsilon}}\geq \epsR d |U|
        \stackrel{\Eqref{d}}\geq n^{1/t}|U|.
        \SubmitVer{\qedhere}
    \end{equation*}
\end{proof}

Let
\begin{equation*}
    \Ball_{\G\setminus \EBSet}(u,i)%
    =%
    \Set{v\in \VV\setminus \EBSet}%
    { \mathsf{d}_{\G\setminus \EBSet}(u,v)\leq i},    
\end{equation*}
the ball of radius $i$ around $u$ in the shortest path metric of the
graph $\G \setminus \EBSet$.  Define $U_0=\{u\}$, and
$U_i=\Gamma_{\VV\setminus \EBSet}(U_{i-1})$.  Observe that
$\Ball_{\G\setminus \EBSet}(u,i)=\bigcup_{j=0}^i U_j$.


\begin{claim} \clmlab{n^(1-1/t)}
    $|\Ball_{\G\setminus \EBSet} (u,t-1)| \geq n^{1-1/t}$.
\end{claim}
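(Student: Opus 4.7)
The plan is to iterate the expansion guarantee of \clmref{cl:2} so that the ball $B_i := \Ball_{\G\setminus\EBSet}(u,i)$ grows by a factor of $n^{1/t}$ per step, reaching size $n^{1-1/t}$ after $t-1$ expansions. The crucial observation is that since $B_{i-1} \subseteq \VV \setminus \EBSet$, any $\VV \setminus \EBSet$-neighbor of a vertex in $B_{i-1}$ is at graph distance at most $i$ from $u$ in $\G \setminus \EBSet$, so $\Gamma_{\VV\setminus\EBSet}(V) \subseteq B_i$ for every $V \subseteq B_{i-1}$.

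First, I would prove by induction on $i$ that $|B_i| \geq \min\{n^{i/t},\, n^{1-1/t}\}$ for every $i \in \{0, 1, \ldots, t-1\}$. The base case is immediate from $|B_0| = 1 = n^{0/t}$. For the inductive step, assume $|B_{i-1}| < n^{1-1/t}$ (otherwise $|B_i| \geq |B_{i-1}| \geq n^{1-1/t}$ and there is nothing to prove). Pick any $V \subseteq B_{i-1}$ of size $|V| = \min\{|B_{i-1}|,\; \lfloor \cfX{\epsR}\epsR n^{1-1/t} \rfloor\}$; since $|V| \leq \cfX{\epsR}\epsR n^{1-1/t}$, applying \clmref{cl:2} to $V$ gives
\begin{equation*}
   |B_i| \ \geq\ |\Gamma_{\VV\setminus\EBSet}(V)| \ \geq\ n^{1/t}\,|V|.
\end{equation*}

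Now I would split into two subcases. If $|V| = |B_{i-1}|$, the inductive hypothesis $|B_{i-1}| \geq n^{(i-1)/t}$ directly yields $|B_i| \geq n^{i/t}$. If instead $|V| = \lfloor \cfX{\epsR}\epsR n^{1-1/t} \rfloor$, then by \Eqref{n^1/t>theta} (equivalently, $\cfX{\epsR}\epsR \cdot n^{1/t} \geq 1$), together with the hypothesis $n \geq e^{\Omega(t)}$ of \thmref{2t-2t-1-spanner} (which makes the floor negligible), we obtain $|B_i| \geq n^{1/t}\lfloor \cfX{\epsR}\epsR n^{1-1/t} \rfloor \geq n^{1-1/t}$. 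Setting $i = t-1$ yields $|B_{t-1}| \geq n^{(t-1)/t} = n^{1-1/t}$ as required.

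The main obstacle is exactly the situation when the ball has already grown beyond the applicability threshold $\cfX{\epsR}\epsR n^{1-1/t}$ of \clmref{cl:2} at some intermediate step $i-1 < t-1$. The subset trick handles this: applying the expansion claim to a fixed-size subset and then leveraging the arithmetic relation \Eqref{n^1/t>theta} converts the product $n^{1/t} \cdot \cfX{\epsR}\epsR n^{1-1/t}$ directly into the full $n^{1-1/t}$ lower bound in a single expansion step.
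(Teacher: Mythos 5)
Your proof is correct and is essentially the paper's own argument: both iterate the expansion guarantee of \clmref{cl:2} across the BFS levels of the ball, cap the set at the threshold $\cepsR \epsR n^{1-1/t}$ once it grows too large, and close the gap with \Eqref{n^1/t>theta}; the only difference is organizational, as you run a direct induction on the radius where the paper argues by contradiction via a single non-expanding level. Your treatment of the floor in the capped case is, if anything, more careful than the paper's, which takes a subset of size exactly $\cepsR\epsR n^{1-1/t}$ without comment.
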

\begin{proof}
    Assume the contrary. Then
    $|U_{t-1}| \leq |\Ball_{\G\setminus \EBSet} (u,t-1)| < n^{1-1/t}$,
    which implies that there exists $j<t-1$ such that
    $|U_{j+1}|<|U_j| n^{1/t}$.  From \clmref{cl:2}, this means that
    $|U_j|> \cepsR \epsR n^{1-1/t}$.  So take $K\subset U_j$ such that
    $|K|=\cepsR \epsR n^{1-1/t}$. Then again by \clmref{cl:2},
    \begin{equation*}
        |\Ball_{\G\setminus \EBSet} (u,t-1)|\geq
        |\Gamma_{V\setminus \EBSet}(K)|\geq \cepsR \epsR
        n^{1-1/t} n^{1/t} \stackrel{\Eqref{n^1/t>theta}}\geq
        n^{1-1/t}.            
        \SubmitVer{\qedhere}
    \end{equation*}
\end{proof}

We summarize the relevant properties of $\GP_{n,\epsR,2t}$.

\begin{lemma}%
    \lemlab{reliably:large:subsets}%
    The graph $\GP_{n,\epsR,2t}$ has the following properties.  For
    any $\BSet \subset \VV$, there exists a set $\EBSet$,
    $\BSet \subseteq \EBSet \subset \VV$,
    $|\EBSet|\leq (1+5\epsR) |\BSet|$, such that for any
    $u\in \VV\setminus \EBSet$, we have
    \begin{align}
      \eqlab{n^(1-1/t)}
      |\Ball_{\G\setminus \EBSet}(u,t-1)| &\geq  n^{1-1/t},\\
      \eqlab{Gamma_V}
      \Gamma_V(\Ball_{\G\setminus
      \EBSet} (u,t-1)) &\geq (1-\epsR)n,\\
      \eqlab{reliably:large:subsets}
      |\Ball_{\G\setminus \EBSet}(u,t)| & \geq \epsR n.
    \end{align}
\end{lemma}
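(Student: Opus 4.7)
The plan is to package the analysis of \secref{analysis:gp-2t} into a coherent statement. I would take $\EBSet=\BSet\cup\SSet$ with $\SSet$ the shadow set constructed iteratively in \Eqref{S:i}. In the degenerate case $(1+5\epsR)|\BSet|\ge n$, set $\EBSet=\VV$ and note that all conclusions on $u\in\VV\setminus\EBSet$ become vacuous; otherwise $\SSet$ is well defined and the preceding subsection applies.

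The size bound $|\EBSet|\le(1+5\epsR)|\BSet|$ is delivered by \clmref{cl:1}, which in fact gives the stronger $|\SSet|\le\epsR|\BSet|$; the extra slack is spent later. Inequality \Eqref{n^(1-1/t)} is nothing but \clmref{n^(1-1/t)}, so no further work is needed there. For \Eqref{Gamma_V} with $S=\Ball_{\G\setminus\EBSet}(u,t-1)$, the plan is to invoke property \itemref{p:1} of the proper expander with $\delta=\epsR$: once $|S|\ge 12n/(\epsR d)$ this yields $|\Gamma_V(S)|>(1-\epsR)n$. The threshold is met by combining \clmref{n^(1-1/t)} with the lower bound on $d$ in \Eqref{d} (using, if necessary, the stronger ``case~B'' bound $|S|\ge \cepsR\epsR n$ that is implicit in the proof of the claim, together with \Eqref{n^1/t>theta}). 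Finally, \Eqref{reliably:large:subsets} follows from the trivial containment $\Ball_{\G\setminus\EBSet}(u,t)\supseteq \Gamma_V(S)\setminus\EBSet$:
\[
|\Ball_{\G\setminus\EBSet}(u,t)|\ge |\Gamma_V(S)|-|\EBSet|\ge (1-\epsR)n-(1+\epsR)|\BSet|\ge \epsR n,
\]
where the last step uses the case assumption $(1+5\epsR)|\BSet|<n$ together with the elementary inequality $(1+\epsR)/(1+5\epsR)\le 1-2\epsR$, valid in the relevant range of $\epsR$.

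The main delicate point is the tightness check when invoking property \itemref{p:1}: with $d\ge 2\epsR^{-1}n^{1/t}$ the threshold $12n/(\epsR d)$ equals $6n^{1-1/t}$, a constant factor above the $n^{1-1/t}$ bound literally stated in \clmref{n^(1-1/t)}. Absorbing this factor requires either re-inspecting the proof of the claim to extract the stronger bound $|S|\ge \cepsR\epsR n \gg n^{1-1/t}$ that arises as soon as the geometric-growth phase of the ball-expansion argument terminates, or absorbing the factor into the absolute constants in \Eqref{d} and \Eqref{n^1/t>theta}. Once this accounting is settled, the remainder of the proof is routine bookkeeping that simply glues together \clmref{cl:1}, \clmref{n^(1-1/t)}, and property \itemref{p:1}.
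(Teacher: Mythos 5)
Your proposal matches the paper's proof essentially step for step: the paper likewise takes $\EBSet=\BSet\cup\SSet$, obtains the size bound from \clmref{cl:1}, quotes \clmref{n^(1-1/t)} verbatim for \Eqref{n^(1-1/t)}, invokes Property~\itemref{p:1} for \Eqref{Gamma_V}, and gets \Eqref{reliably:large:subsets} from $|\Gamma_V(\Ball_{\G\setminus\EBSet}(u,t-1))|-|\EBSet|$. The constant-factor tightness you flag when invoking \itemref{p:1} is a genuine wrinkle that the paper's own proof glosses over (it only verifies $|S|\ge 2n/(\epsR d)$ against the stated threshold $12n/(\epsR d)$), and your suggested remedy of absorbing the factor into the absolute constant in \Eqref{d} is the right way to settle it.
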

\begin{proof}
    Recall that when $(1+5\epsR)|B|\ge n$, we can take $\EBSet=V$ and
    there is nothing to prove. So assume from now on that
    $(1+5\epsR)|B|<n$.

    When $t=1$, $\GP_{n,\epsR,2t}$ is the complete graph, and we take
    $\EBSet=\BSet$.  \Eqref{n^(1-1/t)} is equiavlent to $1\ge 1$;
    \Eqref{Gamma_V} follows since $\Gamma_V(\{u\})=V$ in complete
    graph; and \Eqref{reliably:large:subsets} follows simply because
    $|B|\le n/(1+5\epsR)\le (1-\epsR)n$.

    Assume next that $t\geq 2$.  \Eqref{n^(1-1/t)} is just a
    repetition of \clmref{n^(1-1/t)}.  To prove
    \Eqref{Gamma_V}. Observe that
    \begin{equation*}
        |\Ball_{\G\setminus \EBSet} (u,t-1)|
        \stackrel{\Eqref{n^(1-1/t)}}\geq n^{1-1/t}
        \stackrel{\Eqref{d}}\geq 2n/(\epsR d) .            
    \end{equation*}
    We conclude using Property~\itemref{p:1} of proper expanders that
    \begin{equation*}
        \Gamma_V(\Ball_{\G\setminus
           \EBSet} (u,t-1))\geq (1-\epsR)n.
    \end{equation*}
    By \clmref{cl:1},
    \[|\EBSet|\leq (1+\epsR)|\BSet| \le
        (1+\epsR)|\frac{n}{1+5\epsR}\le(1-3\epsR)n .
    \]
    So,
    \begin{equation*}
        |\Ball_{\G\setminus \EBSet} (u,t)|\geq |\Gamma_{V\setminus
           \EBSet}(\Ball_{\G\setminus \EBSet} (u,t-1))| \ge
        |\Gamma_{V}(U_{t-1})|-|\EBSet|\geq \epsR n.        
    \end{equation*}
    This complete the proof of \lemref{reliably:large:subsets}.
\end{proof}

\begin{proposition}
    \proplab{2t-spanner}%
    The graph $\GP_{n,\epsR,2t}$ is a $\epsR$-reliable $2t$-spanner
    for $n$-point uniform space with $\Of(\epsR^{-1} n^{1+1/t})$
    edges.
\end{proposition}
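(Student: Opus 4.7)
The plan is to deduce the proposition directly from Lemma reliably:large:subsets, which already captures the technical heart of the construction. For the edge count, $\G = \GP_{n,\epsR,2t}$ is $d$-regular with $d = \Of(\epsR^{-1} n^{1/t})$ (by \Eqref{d} and \Eqref{n^1/t>theta}, the first term in the $\max$ defining $d$ dominates asymptotically), so the number of edges is $nd/2 = \Of(\epsR^{-1} n^{1+1/t})$, as claimed.

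For reliability, fix an attack $\BSet \subseteq \VV$. If $(1+5\epsR)|\BSet| \ge n$, set $\EBSet = \VV$ and the statement is vacuous; otherwise take $\EBSet$ as produced by Lemma reliably:large:subsets, so $|\EBSet| \le (1+5\epsR)|\BSet|$. A one-shot rescaling of $\epsR$ by a constant factor absorbs the coefficient $5$ and yields the $(1+\epsR)|\BSet|$ bound required by \defref{reliable:spanner}, with no asymptotic effect on the edge count. The remaining task is to exhibit a path of at most $2t$ edges in $\G \setminus \EBSet$ (equivalently, in $\G \setminus \BSet$) between any two surviving vertices $u, v \in \VV \setminus \EBSet$.

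The path is constructed by a ``two half-balls meet'' argument. Let $A_u = \Ball_{\G \setminus \EBSet}(u, t-1)$ and $A_v = \Ball_{\G \setminus \EBSet}(v, t-1)$. By \Eqref{Gamma_V} of the lemma, both $|\Gamma_\VV(A_u)|$ and $|\Gamma_\VV(A_v)|$ are at least $(1-\epsR) n$, so by inclusion--exclusion
\begin{equation*}
    |\Gamma_\VV(A_u) \cap \Gamma_\VV(A_v) \setminus \EBSet|
    \ge (1 - 2\epsR) n - |\EBSet|,
\end{equation*}
which is strictly positive once $|\BSet|$ is bounded away from its maximum by a small constant factor in $\epsR$ (the residual sliver of attack sizes falls under the degenerate case handled above). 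Pick any $w$ in this intersection together with $x \in A_u$ and $y \in A_v$ adjacent to $w$ in $\G$; concatenating a shortest $u$-to-$x$ path in $\G \setminus \EBSet$, the edge $xw$, the edge $wy$, and a shortest $y$-to-$v$ path in $\G \setminus \EBSet$ yields a path of length at most $(t-1) + 1 + 1 + (t-1) = 2t$, entirely inside $\VV \setminus \EBSet$ since $x, w, y$ all lie outside $\EBSet$. The principal obstacle is just bookkeeping of the various constants ($\epsR$ vs.\ $5\epsR$, the margin in the intersection bound, etc.), which is dispatched uniformly by the rescaling described above.
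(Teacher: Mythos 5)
Your proposal is correct and follows essentially the same route as the paper: invoke \lemref{reliably:large:subsets} to get $\EBSet$, use \Eqref{Gamma_V} to show that $\Gamma_\VV(\Ball_{\G\setminus\EBSet}(u,t-1))$ and $\Gamma_\VV(\Ball_{\G\setminus\EBSet}(v,t-1))$ overlap outside $\EBSet$, and conclude that the radius-$t$ balls around $u$ and $v$ meet, yielding a $2t$-hop path. Your explicit $u\to x\to w\to y\to v$ concatenation and your remarks on the edge count and on rescaling $5\epsR$ to $\epsR$ merely spell out details the paper leaves implicit.
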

\begin{proof}
    Fix $u,v\in \VV\setminus \EBSet$.  By~\Eqref{Gamma_V},
    \begin{equation*}
        \min\Bigl\{|\Gamma_V(\Ball_{\G\setminus \EBSet}(u,t-1))|,\,|
        \Gamma_V(\Ball_{\G\setminus \EBSet}(v,t-1))| \Bigr\} 
        \geq%
        (1-\epsR) n .        
    \end{equation*}
    So,
    \begin{align*}
      |\Ball_{\G\setminus \EBSet}(u,t)\cap \Ball_{\G\setminus
      \EBSet}(v,t)|%
      &\geq%
        |\Gamma_V(\Ball_{\G\setminus
        \EBSet}(u,t-1)) \cap \Gamma_V(\Ball_{\G\setminus
        \EBSet}(v,t-1))| -
        |\EBSet|%
      \\
      &\geq (1-2\epsR)n -(1-5\epsR)n>0.
    \end{align*}
    We conclude that
    $\Ball_{\G\setminus \EBSet}(u,t)\cap \Ball_{\G\setminus
       \EBSet}(v,t)\ne \emptyset$, which means that there is a path of
    length at most $2t$ in $G\setminus \EBSet$ between $u$ and $v$.
\end{proof}

\subsubsection{Proof of \TPDF{\thmref{reliable:vertex:expander}}%
{\nthmref{reliable:vertex:expander}}}
\seclab{r:v:e:proof}

\SubmitVer{~\bigskip}

\RestatementOf{\thmref{reliable:vertex:expander}}%
{\ThmRelVertexExpanderBody{}}


\begin{proof}
    The graphs are simply $\GP_{n,\epsR,2t}$ for $t=\log_h n$. The
    claims follow immediately from \clmref{cl:2} and
    \propref{2t-spanner}.
\end{proof}

\subsection{Analysis of \TPDF{$\GP_{n,\epsR,2t-1}$}{G}}

\begin{proposition}
    The graph $\GP_{n,\epsR,2t-1}$ is $\epsR$-reliable
    $(2t-1)$-spanner for $n$-point uniform space with
    $\Of(\epsR^{-2} n^{1+1/t})$ edges.
\end{proposition}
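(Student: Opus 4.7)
The plan is to mirror the argument used for \propref{2t-spanner}, but now spliced through the cluster partition $A_1,\dots,A_{n^{1/t}}$. The damaged set will be assembled cluster by cluster using \lemref{reliably:large:subsets}, and the reliability of the bipartite expanders will be used to stitch together inter-cluster paths of length exactly $2t-1$.

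First I would define the damaged set. For each $i$, put $B_i=\BSet\cap A_i$ and apply \lemref{reliably:large:subsets} to the inner copy of $\GP_{n',\epsR,2(t-1)}$ on $A_i$ to obtain $\EBSet_i\supseteq B_i$ with $|\EBSet_i|\le (1+5\epsR)|B_i|$. Let $\EBSet=\bigcup_i \EBSet_i$; since the $A_i$ are disjoint, $|\EBSet|\le (1+5\epsR)|\BSet|$, which gives the desired $\epsR$-reliability after rescaling the constant in the choice of the inner reliability parameter.

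Next I would handle intra-cluster pairs. If $u,v\in A_i\setminus\EBSet_i$, \propref{2t-spanner} applied to the inner graph on $A_i$ already produces a $2(t-1)$-hop path inside $\GP_{n',\epsR,2(t-1)}\setminus\EBSet_i$, so a fortiori in $\G\setminus\EBSet$, and $2(t-1)\le 2t-1$.

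For inter-cluster pairs $u\in A_i\setminus\EBSet_i$ and $v\in A_j\setminus\EBSet_j$ with $i\neq j$, the key step is to combine the $(t-1)$-radius balls from both clusters with the bipartite expander. By \Eqref{reliably:large:subsets} applied to the inner graphs, the sets
\begin{equation*}
X_u=\Ball_{\GP_{n',\epsR,2(t-1)}\setminus\EBSet_i}(u,t-1)\subseteq A_i\setminus\EBSet_i,
\qquad
X_v=\Ball_{\GP_{n',\epsR,2(t-1)}\setminus\EBSet_j}(v,t-1)\subseteq A_j\setminus\EBSet_j
\end{equation*}
both have size at least $\epsR n'$. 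Since the bipartite expander from \lemref{expander} between $A_i$ and $A_j$ was built with $\epsA=\epsR$, we have $|\Gamma_{A_j}(X_u)|>(1-\epsR)n'$, whence
\begin{equation*}
|\Gamma_{A_j}(X_u)\cap X_v|\geq (1-\epsR)n'+\epsR n'-n'>0.
\end{equation*}
Any $w$ in this intersection lies in $X_v\subseteq A_j\setminus\EBSet_j$; its bipartite-expander neighbor $w'\in X_u\subseteq A_i\setminus\EBSet_i$ is likewise outside $\EBSet$. Concatenating a $(t-1)$-hop path from $u$ to $w'$ in $\GP_{n',\epsR,2(t-1)}\setminus\EBSet_i$, the cross edge $w'w$, and a $(t-1)$-hop path from $w$ to $v$ in $\GP_{n',\epsR,2(t-1)}\setminus\EBSet_j$ yields a path of length $2t-1$ in $\G\setminus\EBSet$.

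Finally, for the edge count I would sum the two contributions: the $n^{1/t}$ inner copies contribute $n^{1/t}\cdot\Of(\epsR^{-1}(n')^{1+1/(t-1)})=\Of(\epsR^{-1}n^{1+1/t})$, and the $\binom{n^{1/t}}{2}$ bipartite expanders contribute $\Of(\epsR^{-2}n')$ each, totaling $\Of(\epsR^{-2}n^{1+1/t})$. The dominant term is $\Of(\epsR^{-2}n^{1+1/t})$, as claimed. The only delicate point is the bookkeeping of the parameters: the inner reliability constant must be chosen small enough so that $|\EBSet|\le(1+\epsR)|\BSet|$, while the inner ball size $\epsR n'$ must still meet the threshold needed to invoke the bipartite expansion property; both align because the relevant constants only enter through \lemref{reliably:large:subsets} and \lemref{expander}, and can be absorbed into the $\Of(\cdot)$.
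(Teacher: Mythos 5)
Your proposal is correct and follows essentially the same route as the paper's proof: the same clusterwise damaged set via \lemref{reliably:large:subsets}, the same intra-cluster appeal to \propref{2t-spanner}, and the same inter-cluster argument connecting the two radius-$(t-1)$ balls of size $\geq\epsR n'$ through the bipartite expander of \lemref{expander}. You merely spell out two steps the paper leaves implicit (the counting argument $(1-\epsR)n'+\epsR n'-n'>0$ showing the expander edge exists, and the explicit edge count), which is fine.
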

\begin{proof}
    Recall the definition of $\GP_{n,\epsR,2t-1}$ from
    \secref{spanner:construction}.  Let $t'=t-1$, and $n'=n^{1-1/t}$.
    Given an attack set $\BSet$, construct $B^+_i$ from
    $A_i\cap \BSet$ according to \lemref{reliably:large:subsets}, and
    define $\EBSet=\cup_i B^+_i$. Fix $u,v\in \VV\setminus \EBSet$.
    If $u,v\in A_i$ then by \propref{2t-spanner}, there is a
    $2t'=2t-2$ path between them.

    If $u\in A_i$, $v\in A_j$, $i\ne j$.  Then by
    \Eqref{reliably:large:subsets},
    $|\Ball_{A_i\setminus \EBSet}(u,t')|\geq \epsR n'$, and
    $|\Ball_{A_j\setminus \EBSet}(v,t')|\geq \epsR n'$.
    By~\lemref{expander}, there is an edge in $E_{ij}$ between
    $\Ball_{A_i\setminus \EBSet}(u,t')$ and
    $\Ball_{A_j\setminus \EBSet}(v,t')$.  and hence a path of length
    $2t'+1=2t-1$ in $G\setminus \EBSet$.
\end{proof}

\section{Concluding remarks and open problems}

\paragraph*{Subsequent work}
Recently Filtser and Le~\cite{al-rslso-21} improved some of the
results here. They obtained bounds that do not depend on the spread of
the metrics in some cases, for the oblivious adversary case.  They
also obtained reliable spanners (in the oblivious adversary model) for
trees with tight stretch of $2$ and for planar graphs with tight
stretch of $2+\eps$.

\paragraph*{Tradeoffs in deterministic constructions for general spaces}
Classical spanners are known to have an approximation--size trade-off
for general $n$-point metrics: To achieve $\Theta(t)$ approximation it
is sufficient and necessary to have $n^{1+1/t}$ edges in the worst
case.  In contrast, for reliable spanners we were only able to show an
upper bound on the trade-off, with no asymptotically matching lower
bound: To achieve $\Of(t^2)$ approximation it is sufficient to have
$\widetilde{\Of}(n^{1+1/t})$ edges.  Classically, the uniform metric
is $2$-approximated by a star graph with only $n-1$ edges.  In
contrast, we have shown here reliable spanners for uniform metric have
approximation--size trade similar to the classical spanner for general
metrics. The connection between the two problems is quite intriguing,
and is worthy of further research.

\paragraph*{The dependence of the size on the spread}

The size of spanners constructed in this paper depends on the spread
of the metric space. This is because of the reduction to uniform
spaces via covers, in which the dependence on the spread is
unavoidable in general.  However, in some setting this dependence is
avoidable.  For example~\cite{bho-spda-19,bho-srsal-20} achieves this
for fixed dimensional Euclidean spaces, and~\cite{al-rslso-21}
achieves it for doubling spaces, and general finite spaces in the
oblivious adversary model. Getting spread-free bounds for the
non-oblivious adversary is an interesting problem for further
research.

\paragraph*{Explicit constructions}
To the best of our knowledge, there is no known polynomial time
deterministic algorithm for constructing expanders with
Property~\itemref{p:1} or Property~\itemref{p:2}. Getting such a
construction is an interesting open problem.


\paragraph*{Acknowledgments}
The authors thank Kevin Buchin for useful discussions and references,
and the anonymous referees for their suggestions.


   \NotSubmitVer{%
      \bibliographystyle{alpha}%
   }%
   \SubmitVer{%
      \bibliographystyle{ACM-Reference-Format}%
   }%
   \bibliography{reliable_metrics}%


\appendix
\section{Random regular graphs}
\apndlab{proper:expander}

In \thmref{proper:expander} above we prove that random construction
using a union of random permutations yields a proper expander, see
\defref{proper:expander}.  The properties are well known and are held
by random regular graphs asymptotically almost surely.  However, the
literature on random regular graphs is more concerned with the setting
of constant $d$ and $n$ tends to infinity, especially regarding
Property~\itemref{p:3}).  Here we need a slightly different range,
where $d$ is sufficiently large, and $n\geq d^2$.  For completeness,
we gather here proofs and references that prove
\thmref{proper:expander}.

\subsection{Construction}
\seclab{construction}

We use the \emphw{permutation model} $\Gnd$ for constructing random
regular graph (see \cite{w-mrrg-99} for other models).  Assuming $d$
is even, sample $d/2$ independent and identically distributed random
permutations $\pi_1,\ldots,\pi_{d/2}\in S_n$, where $S_n$ is the set
of all permutations of $\IRX{n} = \{1,\ldots, n\}$.  The resulting
graph $\GDef$, has $\VV=\IRX{n}$ and
\begin{equation*}
    \Edges
    =%
    \Set{ \{i ,\pi_j(i)\} }{ i\in\IRX{n} \text{ and }
       j\in \IRX{d/2}\bigr.}.
\end{equation*}

\subsection{Analysis}

\subsubsection{Property \TPDF{\itemref{p:3}}{\nitemref{p:3}}}

All the proofs that random $d$-regular graphs have second eigenvalue
at most $C/\sqrt{d}$ (that we are aware of) are non-trivial.  It was
first proved in~\cite{fks-serrg-89}, and by now there are many proofs
of this.  Notably, Friedman \cite{f-pasec-03} showed that random
regular graphs are ``almost Ramanujan'', i.e.,
$\lambda \leq \frac{2\sqrt{d-1}+\eps}{d}$ with probability $1-o_n(1)$,
see also~\cite{p-ergnp-15} for a recent and simpler proof.
Unfortunately, most of those papers are interested in the settings
where the degree $d$ is constant and only the number of vertices
$n \to\infty$, which is not suitable for our needs.  However, the
argument of Kahn and \Szemeredi \cite{fks-serrg-89} does work when $d$
is allowed to tend to infinity (together with $n$). Specifically,
Dumitiriu \etal \cite{djpp-fltrr-13} used it to prove the following.

\begin{theorem}[\protect{\cite[Theorem~24]{djpp-fltrr-13}}]
    \thmlab{p:3}%
    Fix $C=41000$. There exists $K>0$ such that for any even $d$ and
    $n>\max\{K,d\}$, we have
    \begin{math}
        \displaystyle \Prob{ \lambda(\G)\leq C/\sqrt{d} \, }%
        \geq%
        1-2/n^{2},
    \end{math}
    where $\Prob{}$ is the probability in the permutation model
    $\Gnd$.
\end{theorem}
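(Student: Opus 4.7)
The statement is a verbatim restatement of \cite[Theorem~24]{djpp-fltrr-13}, so the plan is to invoke that theorem directly; for the reader's benefit I sketch the Kahn--\Szemeredi style argument it employs, as adapted to the permutation model $\Gnd$ and to the high-degree regime (which is what we need here, in contrast to the constant-degree regime treated by most of the subsequent literature on random regular graphs).

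Since $\G$ is $d$-regular, the all-ones vector is an eigenvector of $M=\mathrm{Adj}(\G)$ with eigenvalue $d$, so $\lambda(\G)\cdot d = \sup\{\,|x^T M x| : x\in\mathbf{1}^\perp,\ \|x\|_2=1\,\}$. First I would discretize: fix a small $\epsilon>0$ and pick an $\epsilon$-net $\mathcal{N}$ on the unit sphere in $\mathbf{1}^\perp$; a volume argument gives $|\mathcal{N}|\le (1+2/\epsilon)^n$, and a standard net estimate reduces the problem to bounding $|x^T M x|$ uniformly for $x\in\mathcal{N}$ up to an $O(\epsilon)$ multiplicative slack. For a fixed $x\in\mathcal{N}$, split the quadratic form $x^T M x = 2\sum_{\{i,j\}\in \Edges} x_i x_j$ into a \emph{light} part, consisting of edge contributions with $|x_i x_j|\le \sqrt{d}/n$, and a \emph{heavy} part containing the remaining terms. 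Under the permutation model $M$ is a sum of $d/2$ independent symmetrized permutation matrices, so the light part is a sum of bounded independent contributions and can be controlled by an Azuma--Hoeffding/martingale inequality, yielding a contribution of order $\sqrt{d}$ with failure probability $\exp(-\Omega(n))$. The heavy part is dispatched by a purely combinatorial counting argument bounding the number of pairs $\{i,j\}$ for which $|x_i x_j|$ is large, using that $\|x\|_2=1$.

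The hard part is quantitative: one must make every constant explicit in order to extract $C=41000$ and the concrete failure bound $2/n^2$, and one must check that the martingale increments really are bounded in the permutation model $\Gnd$ (rather than in the configuration model in which \cite{fks-serrg-89} originally worked). These calculations are carried out carefully in \cite{djpp-fltrr-13}. Once a per-net-point deviation estimate of the form $\exp(-\Omega(n))$ is established, a union bound over $|\mathcal{N}|\le (1+2/\epsilon)^n$ costs only $e^{O(n)}$, which is comfortably absorbed by the per-point failure probability as soon as $n$ exceeds a sufficiently large absolute constant $K$, yielding the claimed bound $1-2/n^2$ uniformly over even $d$ with $d<n$.
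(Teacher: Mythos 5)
Your proposal is correct and matches the paper's treatment exactly: the paper does not prove this theorem but simply cites it as \cite[Theorem~24]{djpp-fltrr-13}, noting that the Kahn--\Szemeredi argument (unlike most of the constant-degree literature) survives in the high-degree permutation-model setting. Your sketch of the net/light-couples/heavy-couples decomposition is a faithful summary of how that reference establishes the bound, so nothing further is needed.
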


This implies \itemref{p:3} holds a.a.s. in the permutation model.
Properties~\itemref{p:1} and~\itemref{p:2} are much easier to prove
and are considered folklore.  They are usually proved in different,
more convenient, models of random regular graphs.  However, contrary
to the case when the degree is fixed, in the high-degree regime the
different random models are not necessarily contiguous, i.e.,
asymptotically almost surely properties are not necessarily equivalent
among the different models (for more on this, see~\cite{w-mrrg-99}).
Therefore, for completeness, we next provide proofs that
Properties~\itemref{p:1} and~\itemref{p:2} hold asymptotically almost
surely in the permutation model.

\subsubsection{Property \TPDF{\itemref{p:1}}{\nitemref{p:1}}}

\begin{lemma}
    \lemlab{binom:games}%
    For integers $s, t,n \geq 0$, such that $s \leq t \leq n$, we have
    \begin{math}
        \binom{t}{s} /\binom{n}{s} \leq \pth{\frac{t}{n}}^s.
    \end{math}

    Similarly, if $s \leq t\le n$ and $n \geq 3s$, we have
    \begin{math}
        \binom{t}{s} /\binom{n}{2s} \leq \pth{\frac{t}{n}}^s
        \bigl({\frac{2s}{n-s}}\bigr)^s.\Bigr.%
    \end{math}
\end{lemma}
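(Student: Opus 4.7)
The plan is to prove both inequalities by expanding the binomial ratios as products of $s$ simple fractions and bounding each factor individually. The first inequality is essentially elementary and the second reduces to it after one more bookkeeping step; neither should require anything beyond arithmetic manipulation.

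For the first inequality, I would write
\[
\frac{\binom{t}{s}}{\binom{n}{s}} = \frac{t!\,(n-s)!}{n!\,(t-s)!} = \prod_{i=0}^{s-1}\frac{t-i}{n-i}.
\]
Since $t \leq n$, the function $i \mapsto (t-i)/(n-i)$ is non-increasing on $[0,t]$ (which can be checked either by differentiation, or by cross-multiplying $(t-i)(n) \leq t(n-i)$, which amounts to $-in \leq -it$, equivalent to $t \leq n$). Consequently $(t-i)/(n-i) \leq t/n$ for every $i \in \{0,\dots,s-1\}$, and taking the product of $s$ such bounds yields $(t/n)^s$.

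For the second inequality, I would factor the ratio as
\[
\frac{\binom{t}{s}}{\binom{n}{2s}} = \frac{\binom{t}{s}}{\binom{n}{s}} \cdot \frac{\binom{n}{s}}{\binom{n}{2s}},
\]
so that the first factor is already bounded by $(t/n)^s$ by the previous step. For the second factor, observe that
\[
\frac{\binom{n}{s}}{\binom{n}{2s}} = \frac{(2s)!/s!}{(n-s)!/(n-2s)!} = \prod_{k=1}^{s}\frac{s+k}{n-2s+k}.
\]
I claim that each factor satisfies $(s+k)/(n-2s+k) \leq 2s/(n-s)$ when $n \geq 3s$. Cross-multiplying, this is equivalent to $(s+k)(n-s) \leq 2s(n-2s+k)$, and a direct expansion shows the difference equals $(s-k)(n-3s)$, which is non-negative under the hypotheses $k \leq s$ and $n \geq 3s$. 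Taking the product of $s$ factors then gives the bound $(2s/(n-s))^s$ on the second ratio, and multiplying the two bounds yields the claim.

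The main obstacle, if any, is simply to identify the right pairing in the second factor so that the bound $(s-k)(n-3s) \geq 0$ emerges cleanly; everything else is straightforward algebra. I do not anticipate needing any additional tools beyond the definition of binomial coefficients and the two hypotheses $t \leq n$ and $n \geq 3s$.
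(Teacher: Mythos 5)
Your proof is correct and is essentially the paper's argument: both parts expand the binomial ratios into products of $s$ fractions and bound each factor termwise, and your second product $\prod_{k=1}^{s}\frac{s+k}{n-2s+k}$ is, after the reindexing $i=s-k$, exactly the paper's $\binom{2s}{s}/\binom{n-s}{s}=\prod_{i=0}^{s-1}\frac{2s-i}{n-s-i}$, obtained there via the identity $\binom{n}{2s}=\binom{n}{s}\binom{n-s}{s}/\binom{2s}{s}$. The only cosmetic difference is that you verify the factor bound $(s+k)(n-s)\leq 2s(n-2s+k)$ by direct expansion, whereas the paper reuses its first inequality as a black box (which needs the same hypothesis $2s\leq n-s$, i.e., $n\geq 3s$).
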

\begin{proof}
    Observe that for $k \leq m' \leq m$, we have
    \begin{equation*}
        \frac{\binom{m'}{k}}{\binom{m}{k}}%
        =%
        \frac{m'}{m} \cdot \frac{ m'-1}{m-1} \cdots \frac{m'-k+1}{m-k+1} 
        \leq%
        \pth{\frac{m'}{m}}^k,
    \end{equation*}
    as $(m'-i)/(m-i) \leq m'/m$.  As such, using the (easy to verify)
    identity
    \begin{math}
        \binom{n}{2s}%
        = %
        \binom{n}{s} \binom{n-s}{s}/ \binom{2s}{s},
    \end{math}
    we have
    \begin{equation*}
        \frac{\binom{t}{s}}{\binom{n}{2s}}%
        =%
        \frac{\binom{t}{s}\binom{2s}{s}}{\binom{n}{s} \binom{n-s}{s}}%
        \leq%
        \pth{\frac{t}{n}}^s \pth{\frac{2s}{n-s}}^s.
        \SubmitVer{ \qedhere}        
    \end{equation*}
\end{proof}


\begin{lemma}
    \lemlab{Gamma:S:subset:T}
    Fix sets $S,T\subset \VV$, with $s=|S|$ and $t=|T|$, such that
    $|S| \leq |T|$. Then
    \begin{math}
        \Prob{\bigl.\Gamma(S)\subseteq T}%
        \leq%
        \pth{\frac{t}{n}}^{sd/2}.
    \end{math}
\end{lemma}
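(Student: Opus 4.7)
The plan is to exploit independence of the permutations $\pi_1,\ldots,\pi_{d/2}$ and a simple counting argument to bound the probability that each individual permutation maps $S$ into $T$.

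First I would observe that the event $\Gamma(S)\subseteq T$ is contained in the intersection $\bigcap_{j=1}^{d/2} \{\pi_j(S)\subseteq T\}$, since if all neighbors of $S$ lie in $T$, then in particular for every $i\in S$ and every $j\in[d/2]$, the forward-neighbor $\pi_j(i)$ lies in $T$. This reduction from a two-sided statement (about both $\pi_j(i)$ and $\pi_j^{-1}(i)$) to a one-sided statement only weakens the event, which is exactly what we want for an upper bound.

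Next, since $\pi_1,\ldots,\pi_{d/2}$ are i.i.d., the events $\{\pi_j(S)\subseteq T\}$ are mutually independent, giving
\begin{equation*}
    \Prob{\Gamma(S)\subseteq T}
    \leq
    \prod_{j=1}^{d/2} \Prob{\pi_j(S)\subseteq T}.
\end{equation*}
For a single uniform permutation $\pi$ on $[n]$, the number of permutations satisfying $\pi(S)\subseteq T$ is obtained by choosing an injection $S\hookrightarrow T$ (there are $t!/(t-s)!$ such injections, which is valid since $s\le t$) and then extending arbitrarily on $V\setminus S$ ($(n-s)!$ ways), yielding probability $\binom{t}{s}/\binom{n}{s}$. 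Applying the first inequality of \lemref{binom:games} gives $\Prob{\pi_j(S)\subseteq T}\leq (t/n)^{s}$.

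Multiplying over the $d/2$ permutations yields the claimed bound $(t/n)^{sd/2}$. There is no real obstacle here; the only point to be careful about is the reduction in the first step (discarding the backward-direction constraints $\pi_j^{-1}(S)\subseteq T$) and verifying that $s\le t$ ensures the injective counting is well-defined, both of which are straightforward given the hypotheses.
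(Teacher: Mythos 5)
Your proof is correct and follows essentially the same route as the paper: bound the event $\Gamma(S)\subseteq T$ by the one-sided event $\bigwedge_j \pi_j(S)\subseteq T$, use independence of the permutations, compute $\Prob{\pi(S)\subseteq T}=\binom{t}{s}/\binom{n}{s}$, and apply the first inequality of \lemref{binom:games}. The only cosmetic difference is that you count injections directly while the paper sums $\Prob{\pi(S)=T'}$ over subsets $T'\subseteq T$ of size $s$; both yield the same ratio of binomial coefficients.
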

\begin{proof}
    Fix $T'\subseteq T$ of cardinality $|T'|=s$.  Then
    \begin{math}
        \Prob{\pi(S)= T'} = 1/ \binom{n}{s}.
    \end{math}
    By the union bound, and \lemref{binom:games}, we have
    \begin{equation*}
        \Prob{\bigl. \pi(S)\subseteq T}
        =
        \sum_{ {T' \in \binom{T}{s}}} 
        \Prob{ \pi( S)=T' \bigr.}
        =\frac{\binom{t}{s}}{\binom{n}{s}}  \leq \left(\frac tn\right)^s.
    \end{equation*}

    Let $\pi_1,\ldots,\pi_{d/2}$ be the permutations used in the
    construction. We have that
    \begin{equation*}
        \Prob{\bigl. \Gamma(S)\subseteq T}%
        \leq%
        \ProbC \Bigl[\,  \bigwedge_{k=1}^{d/2} {\pi_k}(S)\subseteq T \Bigr]
        =%
        \pth{\Prob{  {\pi}(S)\subseteq T} \Bigr.}^{d/2}
        \leq%
        \left( \frac{t}{n}\right)^{sd/2}. 
        \SubmitVer{ \qedhere}        
    \end{equation*}
\end{proof}

\begin{lemma}
    \lemlab{p:1}%
    Fix $\eps,\delta\in(0,1/2)$.  If
    $\sqrt{n} \geq d\geq 12 /(\eps \delta)$ is an integer.  Then,
    asymptotically almost surely over the random $(n,d)$-graph
    $G=(V,E)$, for any $S\subset \VV$ with $|S|\ge \eps n$, we have
    $|\Gamma(S)|\ge (1-\delta)n$. That is, property \itemref{p:1} of
    \defref{proper:expander} holds.
\end{lemma}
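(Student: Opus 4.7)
The plan is to prove the lemma by a union bound over pairs $(S,T)$, combined with \lemref{Gamma:S:subset:T}. First, it suffices to consider $S$ of size exactly $s := \lceil \eps n\rceil$: if some $S'$ with $|S'|\geq \eps n$ satisfies $|\Gamma(S')| < (1-\delta)n$, then any subset $S\subseteq S'$ of size $s$ inherits the bad property, since $|\Gamma(S)| \leq |\Gamma(S')|$. Whenever the bad event occurs, there exists a set $T\supseteq \Gamma(S)$ with $|T|=t := \lfloor (1-\delta)n\rfloor$, and the hypothesis $s\leq t$ required by \lemref{Gamma:S:subset:T} holds because $\eps,\delta<1/2$.

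Next, I would apply a union bound over the $\binom{n}{s}\binom{n}{t}$ such pairs. Using $\binom{n}{k}\leq (en/k)^k$, the symmetry $\binom{n}{t}=\binom{n}{n-t}$ with $n-t \leq \delta n + 1$, and the tail estimate $(t/n)^{sd/2} \leq (1-\delta)^{sd/2} \leq e^{-\delta s d/2}$ coming from \lemref{Gamma:S:subset:T}, the failure probability is at most
\begin{equation*}
   \bigl(e/\eps\bigr)^{\eps n + 1}\,\bigl(e/\delta\bigr)^{\delta n + 1}\, \exp(-\eps \delta d n /2).
\end{equation*}

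The decisive calculation is that the exponent on the right is $-\Omega(n)$. Taking logarithms yields approximately
\begin{math}
   \eps n(1+\log(1/\eps)) + \delta n(1+\log(1/\delta)) - \eps \delta d n/2 + O(\log(1/(\eps\delta))).
\end{math}
The function $x\mapsto x(1+\log(1/x))$ has derivative $\log(1/x)>0$ on $(0,1)$, so it is increasing there; its value at $x=1/2$ is $(1+\log 2)/2 < 1$. Hence each of the first two terms is at most $n$. Meanwhile, the hypothesis $d\geq 12/(\eps\delta)$ makes the subtracted term at least $6n$, so the log-probability is bounded above by $-4n + o(n)$, and the failure probability is $e^{-\Omega(n)}$. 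This is more than enough to conclude a.a.s., and in fact suffices to combine this with properties \itemref{p:2} and \itemref{p:3} by a further union bound to obtain the $1-n^{-O(1)}$ bound asserted in \thmref{proper:expander}.

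I do not anticipate a major obstacle; the argument is a standard entropy-versus-concentration balance. The only slightly subtle point is verifying that the constant $12$ in the hypothesis $d\geq 12/(\eps\delta)$ is large enough to absorb the entropy of both $S$ and $T$, which it is with substantial margin since each entropy contribution is bounded by $n$ while $\eps\delta d/2 \geq 6n$. The remaining careful bookkeeping amounts to checking that rounding $s$ and $t$ to integers contributes only lower-order terms, which it does since the additive $+1$ in the exponents of $(e/\eps)$ and $(e/\delta)$ only adds $O(\log(1/(\eps\delta)))$ to the log-probability.
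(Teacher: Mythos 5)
Your proposal is correct and takes essentially the same route as the paper's own proof: a union bound over pairs $(S,T)$ combined with \lemref{Gamma:S:subset:T}, followed by the standard entropy-versus-concentration calculation showing that $\eps\delta d/2 \geq 6$ dominates the two entropy terms. The only cosmetic difference is that you bound $\binom{n}{t}$ by $(e/\delta)^{\delta n+1}$ while the paper uses the cruder bound $2^n$; both give failure probability $e^{-\Omega(n)}$.
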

\begin{proof}
    Using the union bound on the bound established in
    \lemref{Gamma:S:subset:T} we conclude that the probability that
    there exists a subset $S$ of size $s=\eps n$, such that
    $|\Gamma(S)| \leq (1-\delta) n=t$, is at most
    \begin{align*}
      \binom ns \binom nt \left(\frac tn\right)^{sd/2} 
      &
        \leq%
        \pth{\frac{en}{s}}^s 2^n \left(\frac tn\right)^{sd/2} 
        \leq%
        \pth{\frac{e}{\eps}}^{\eps n} 2^n (1-\delta)^{\eps n d/2}
        \leq%
        \pth{\frac{e}{\eps}}^{\eps n} 2^n \exp\pth{-\delta \eps n d/2}
      \\&
      \leq%
      2^n \exp\pth{\eps n + \eps n \ln
      \frac{1}{\eps} -6 n }
      \leq%
      \exp\pth{ 2 n +  n/e -6 n }
      \leq \exp(-3n),
    \end{align*}
    since $\max_{x >0 } x \ln (1/x) = 1/e$, as easy calculation
    shows\footnote{Indeed, for $f(x) = x \ln (1/x)$, we have
       $f'(x)= -1 + \ln x^{-1}$. Setting $f'(x) = 0$, implies
       $1 =\ln x^{-1}$. Namely, the maximum of $f$ is achieved at
       $x= 1/e$, where $f(1/e) = 1/e$.}.
\end{proof}

\subsubsection{Property \TPDF{\itemref{p:2}}{\nitemref{p:2}}}

The argument above used only the forward edges associated with the
$d/2$ random permutations.  Since there are only $d/2$ such edges
associated with every vertex, that argument can not prove
vertex-expansion close to $d$ as is stated in \itemref{p:2}.  For this
we have to also use the backward edges associated with the permutation
as well.  Since the backward edges and the forward edges associated
with the same random permutation are not stochastically independent,
more care is needed to prove this property, as we shall now see.

\begin{claim}%
    \clmlab{in-edges}%
    Fix $\eps,\eta\in (0,1)$ such that
    $0<\eps<\frac{1}{3d}<\frac{3}{d}<\eta<1$. Then
    \begin{equation*}
        \Prob{\Bigl. \exists S\subset \VV, |S|\leq
           \eps n, |\EdgesY{S}{S}| >\eta d |S|}
        =%
        O(n^{-0.4}).        
    \end{equation*}
\end{claim}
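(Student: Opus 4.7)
The plan is a first-moment (union-bound) argument. For each size $s\in\{1,\ldots,\lfloor\eps n\rfloor\}$ and each $S\subseteq \VV$ with $|S|=s$, I will bound $\Prob{|\EdgesY{S}{S}| > \eta d s}$, then union-bound over $S$ and sum over $s$.

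First, for a fixed $S$ of size $s$, note that $|\EdgesY{S}{S}| \leq Y(S) := |\{(i,j)\in S\times[d/2] : \pi_j(i)\in S\}|$. Since $\pi_j$ is a uniformly random permutation, for any $T\subseteq \VV$ of size $t$ the image $\pi_j(T)$ is a uniformly random $t$-subset of $\VV$, so $\Prob{\pi_j(T)\subseteq S} = \binom{s}{t}/\binom{n}{t} \leq (s/n)^{t}$. Combining independence across $\pi_1,\ldots,\pi_{d/2}$ with a union bound over $K$-element subsets of $S\times[d/2]$, for $K:=\lceil\eta d s\rceil$, yields
\[
\Prob{Y(S)\geq K} \;\leq\; \binom{sd/2}{K}(s/n)^{K} \;\leq\; \pth{\frac{es}{2\eta n}}^{\eta d s},
\]
after using $\binom{sd/2}{K}\leq (esd/(2K))^{K}=(e/(2\eta))^{\eta d s}$. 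Union-bounding next over the $\binom{n}{s}\leq(en/s)^{s}$ choices of $S$,
\[
P_s \;:=\; \Prob{\exists\, S,\ |S|=s,\ |\EdgesY{S}{S}|>\eta d s} \;\leq\; \left[\frac{e^{2}}{2\eta}\cdot\pth{\frac{es}{2\eta n}}^{\eta d - 1}\right]^{s}.
\]

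To finish, I will sum $P_s$ over $s\in\{1,\ldots,\lfloor\eps n\rfloor\}$. The hypotheses $\eps<1/(3d)$ and $\eta>3/d$ give $s/n\leq 1/(3d)$ and $\eta d - 1 > 2$, so each factor in the bracket is small; combined with the standing assumption $n\geq d^{2}$ (so $d\leq \sqrt{n}$), the $s=1$ term dominates and evaluates to $P_1 = O(d^{\eta d}/n^{\eta d - 1}) = O(n^{-1/2})$ in the worst case $\eta d \to 3^{+}$. The remaining terms contribute lower-order corrections, so $\sum_s P_s = O(n^{-1/2}) = O(n^{-0.4})$.

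The main obstacle will be getting uniform control on $P_s$ across the full range of $s$: for $s$ close to $\eps n$ the bracketed factor can approach or exceed $1$, so the analysis benefits from either splitting the sum into small-$s$ and large-$s$ regimes, or strengthening the Chernoff step with the sharper Poisson-type bound $\Prob{Y(S)\geq K}\leq e^{-\mu}(e\mu/K)^{K}$ (where $\mu=\Ex{Y(S)}\approx ds^{2}/(2n)$) in order to absorb the entropy cost $\binom{n}{s}$ in the $s\approx\eps n$ regime.
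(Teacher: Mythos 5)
Your proposal follows essentially the same route as the paper: a first-moment union bound over sets $S$, with the tail of $|\EdgesY{S}{S}|$ controlled by union-bounding over $K$-element subsets of $S\times\IRX{d/2}$ and exploiting non-positive correlation of the events $\{\pi_j(i)\in S\}$. Your way of getting the correlation bound -- observing that $\pi_j(T)$ is a uniformly random $t$-subset, so $\Prob{\pi_j(T)\subseteq S}=\binom{s}{t}/\binom{n}{t}\le (s/n)^t$, then multiplying over the independent permutations -- is a cleaner packaging of the paper's sequential-conditioning argument, and taking the single threshold $K=\lceil\eta d s\rceil$ instead of summing over all $r\ge \eta d s$ is a harmless simplification. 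The one step you have not actually carried out is the final summation over $s$: the assertion that ``the remaining terms contribute lower-order corrections'' is exactly the delicate point, since (as you note) the bracketed base need not be small uniformly in $s$, and for $s$ near $\eps n$ the entropy term $\binom{n}{s}$ must be beaten using the hypotheses $\eps<1/(3d)$, $\eta>3/d$. The paper closes this precisely the way you anticipate, by splitting the sum at $s=2\log n$: the small-$s$ range is bounded by $O(\log^{3/2}n/\sqrt{n})$ (which is why the stated rate is $O(n^{-0.4})$ rather than the $O(n^{-1/2})$ you claim from the $s=1$ term alone), and the range $2\log n\le s\le \eps n$ is bounded by a geometric series giving $O((\eps/\eta)^{2\log n})$. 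So the proposal is correct in approach and in its main estimates, but the finishing summation needs to be executed along the lines you sketch rather than merely asserted.
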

\begin{proof}
    Fix $S\subset \VV$, of cardinality $s \leq \eps n$.  Observe that
    \begin{equation*}
        E(S,S)%
        =%
        \Set{ \{u, \pi_i(u)\}}{u \in S, \pi_i(u) \in S, \text{ and }
           i \in \IRX{d/2} \bigr.}.
    \end{equation*}
    Fix $R\subseteq S\times \IRX{ d/2}$.  For any $(u,i)\in R$, let
    $\Event_{u,i}$ be the event that $\pi_i(u)\in S$. The events
    induced by $R$, that is $\Set{\Event_{u,i}}{(u,i) \in R}$ are not
    independent, but they are non-positively correlated.  Indeed, for
    a set $R\subset S\times \IRX{d/2}$ consider the event
    $\Event_R = \bigwedge_{(z,j) \in R} \Event_{z,j}$.  For
    $(u,i)\notin R$ we have,
    \begin{equation*}
        \Prob{\pi_i(u)\in S \;\big |\; \Event_R}
        \leq%
        \Prob{\pi_i(u)\in S \bigr.}
        =%
        \frac{s}{n}.
    \end{equation*}
    Indeed, observe that all the sub-events in $\Event_R$ involving
    $j \neq i$, are irrelevant (that is, stochastically independent of
    the events $\Event_{x,i}$).  As such, let
    $X = \Set{ x }{(x,i) \in R}$. Imagine now picking the permutation
    $\pi_i$ by first picking the locations of the elements of $X$, and
    then choosing the location of $u$. Clearly, if the event
    $\Event_R$ happened, then there are only $s - |X|$ empty slots in
    $S$, and $n-|X|$ slots available overall. As such, we have that
    \begin{equation*}
        \ProbCond{\pi_i(u)\in S}{ \Event_R}
        \leq%
        \frac{s-|X|}{n - |X| } \leq \frac{s}{n}
        =
        \Prob{\pi_i(u)\in S \bigr.}.
    \end{equation*}

    Now, order the elements of $R$ in arbitrary order, and let $R(i)$
    be the subset with the first $i$ elements in that order.  We have
    \begin{equation*}
        \Prob{\Event_R\bigl.} %
        =%
        \prod_{i=1}^{|R|} \ProbCond{\Event_{R(i)} }{ \Event_{R(i-1)}}
        \leq%
        (s/n)^{|R|}.        
    \end{equation*}
    Therefore,
    \begin{align*}
      \Prob{\bigl. |E(S,S)|>\eta d s}        
      &\leq%
        \sum_{r=\eta d s +1}^{ds/2}
        \sum_{ R \in \binom{S\times \IRX{ d/2}}{r}}
        \Prob{\Event_R\bigr.}
        \leq%
        \sum_{r=\eta d s}^{ds/2} \binom {ds/2}r \cdot
        \pth{ \frac{s}{n}}^r 
        \leq%
        \sum_{r=\eta d s}^{ds/2} \pth{\frac{eds/2}{r} \cdot
        \frac{s}{n}}^r 
      \\&%
      \leq%
      \sum_{r=\eta d s}^{ds/2} \pth{\frac{eds^2 }{2r n} }^r 
      \leq%
      \left(\frac{es}{\eta n}\right)^{\eta ds},
    \end{align*}
    since the summations behaves like a geometric series, using
    \begin{math}
        \frac{eds^2}{2rn} \leq%
        \frac{eds^2}{2\eta d sn} =%
        \frac{es}{2\eta n} \leq%
        \frac{1}{2}.
    \end{math}
    Therefore
    \begin{align*}
      &\hspace{-1cm}%
        \Pr\bigl[\exists S\subset \VV, |S|\leq \eps n, |E(S,S)| >\eta d s \bigr]
        \leq%
        \sum_{s=1}^{\eps n} \binom{n}{s}
        \left(\frac{es}{\eta n}\right)^{\eta ds} 
        \leq%
        \sum_{s=1}^{\eps n}
        \left ( e^2 \left ( \frac sn\right )^{\eta d-1} \eta^{-\eta d}\right)^s
        =%
        O(n^{-0.4}).
    \end{align*}
    The last inequality is derived by partitioning the sum on $s$
    up-to $2\log n$ -- this part is bounded by
    \begin{math}
        O\bigl({\frac{\log^{3/2} n }{ \sqrt{n}}}\bigr).
    \end{math}
    The second part of the summation from $2\log n$ to $\eps n$ be
    bounded by $O\bigl((\eps/\eta)^{2\log n}\bigr)$.
\end{proof}

\begin{lemma}
    \lemlab{p:2}%
    For every $\delta\in (0,1/4)$, there exists $c_\delta>0$, such
    that for any sufficiently large integer $d\ge e^{32/\delta}$, and
    any $n\ge d^2$, asymptotically (in $n$) almost surely, a random
    $(n,d)$ graph $G=(V,E)\sim \Gnd$ have the following vertex
    expansion property: For any $S\subset \VV$ with
    $|S|\leq e^{-2.2/\delta} n/d$, we have
    $|\Gamma(S)|\ge (1-\delta)d |S|$.

    That is, Property \itemref{p:2} of \defref{proper:expander} holds.
\end{lemma}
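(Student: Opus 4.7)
I would reduce Property~\itemref{p:2} to two defect estimates. For any $S \subset \VV$ and $v \in \VV \setminus S$, write $\deg_S(v) := |\NbrX{v}\cap S|$. Using the identity $|E(S, \VV \setminus S)| = d|S| - 2|E(S,S)|$ and $|\Gamma(S) \cap (\VV \setminus S)| = |E(S, \VV \setminus S)| - \sum_v (\deg_S(v) - 1)_+$, we get
\[
  |\Gamma(S)|
  \;\ge\;
  |\Gamma(S) \cap (\VV \setminus S)|
  \;=\;
  d|S| - 2|E(S,S)| - \mathrm{Coll}(S),
\]
where $\mathrm{Coll}(S) := \sum_{v \in \VV \setminus S}(\deg_S(v) - 1)_+$ is the excess-collision count. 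It therefore suffices to show that, a.a.s., \emph{simultaneously} for every $S$ with $|S| \le c_\delta n/d$, both $|E(S,S)| \le (\delta/4)d|S|$ and $\mathrm{Coll}(S) \le (\delta/2) d|S|$.

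The first bound follows directly from \clmref{in-edges} applied with $\eta = \delta/4$ and $\eps = c_\delta/d$: the hypotheses $\eps < 1/(3d) < 3/d < \eta$ are satisfied once $d \ge e^{\Omega(1/\delta)}$, so the claim gives the internal-edge bound uniformly with probability $1 - O(n^{-0.4})$. For the collision bound, note that $(k-1)_+ \le \binom{k}{2}$ for integer $k \ge 0$, so $\mathrm{Coll}(S) \le W(S) := \sum_v \binom{\deg_S(v)}{2}$, the number of ``wedges'' with center in $\VV\setminus S$ and both endpoints in $S$. I would mimic the proof of \clmref{in-edges}: each wedge corresponds to a choice of distinct $u_1, u_2 \in S$ together with two (ordered) edge-slots landing on a common external vertex, the two slots involving either two different permutations $\pi_i \ne \pi_j$ or opposite orientations of the same permutation. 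For each such tuple the probability is $O(1/n)$ by the same negative-correlation argument used in \clmref{in-edges}, so $\ExSym[W(S)] = O(s^2 d^2/n)$. Summing the exponential tail bound over all $S$ and over the number of excess wedges yields, by the same two-scale geometric-series calculation as in \clmref{in-edges}, the inequality $W(S) \le (\delta/2) d|S|$ uniformly over $|S| \le c_\delta n/d$, with probability $1-o(1)$. The constant $e^{-2.2/\delta}$ emerges from balancing the entropy factor $(en/s)^s$ against the exponential decay rate in the wedge tail.

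The main obstacle is the tail estimate for $W(S)$. Unlike internal edges, wedges involve two independent permutation choices per pair, so the combinatorial sum picks up an extra factor of $d$ that must be absorbed by the $1/n$ per-wedge probability. The cleanest route I see is to let an ``excess-wedge configuration'' $R$ index tuples $\bigl(\{u_1,u_2\},\{i,j\},\ldots\bigr)$ and then apply an inclusion bound of the form $\Prob{\Event_R} \le (s/n)^{|R|}$ paralleling the key step in \clmref{in-edges}, followed by the same two-regime summation (small $|R|$ handled by concentration, large $|R|$ by a crude $(\eps/\eta)^{|R|}$ bound) to obtain an $O(n^{-0.4})$ tail. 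Pinning down the precise threshold $c_\delta = e^{-2.2/\delta}$ demands careful bookkeeping of constants in this final summation, and verifying that the hypothesis $d \ge e^{32/\delta}$ gives enough exponential decay to dominate both the $\log(en/s)$ entropy term and the extra $d$-factor coming from the wedge structure.
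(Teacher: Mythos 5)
Your decomposition $|\Gamma(S)| \ge d|S| - 2|E(S,S)| - \mathrm{Coll}(S)$, with $\mathrm{Coll}(S)=\sum_{v\notin S}(\deg_S(v)-1)_+$, is correct, and it is a genuinely different route from the paper's. The paper conditions on the event of \clmref{in-edges} and then, for each fixed $S$, couples $\G$ with an auxiliary graph $\GS$ in which the internal edges of $S$ are re-routed to fresh uniform targets outside $S$, so that every vertex of $S$ has $d$ essentially fresh images; it then bounds $\Prob{\Gamma_{\GS}(S)\subseteq T}$ by $\bigl((t+s)/n\bigr)^{ds}$ and union-bounds over $S$ and over target sets $T$ of size $(1-\eta)ds$. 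That union bound over $T$ absorbs all external collisions automatically, which is precisely the part your argument must handle by hand.

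And that is where your proposal has a genuine gap. The proposed inequality $\Prob{\Event_R}\le (s/n)^{|R|}$ for a set $R$ of wedges is false, because overlapping wedges are strongly positively correlated: if $k$ edge-slots of $S$ all land on a single external vertex $v$, this event has probability of order $n^{-(k-1)}$ yet certifies $\binom{k}{2}$ wedges, so for such an $R$ one only has $\Prob{\Event_R}\approx n^{-\sqrt{2|R|}}$ rather than $n^{-|R|}$, and the union bound $\binom{N}{w}(C/n)^{w}$ over $w$-subsets of wedges collapses. Relatedly, replacing $\mathrm{Coll}(S)$ by $W(S)=\sum_v\binom{\deg_S(v)}{2}$ is lossy in exactly the bad regime, since $W(S)$ can exceed $\mathrm{Coll}(S)$ quadratically and has a much heavier upper tail. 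The fix is to bound $\mathrm{Coll}(S)$ directly: reveal the $ds$ edge-slots of $S$ one at a time; conditioned on the past, each slot is uniform over at least $n-ds$ admissible values, hence lands on an already-hit vertex with probability at most $ds/(n-ds)\le 2ds/n$, so $\mathrm{Coll}(S)$ is stochastically dominated by $\mathrm{Bin}(ds,\,2ds/n)$ and
\begin{equation*}
    \Prob{\mathrm{Coll}(S)\ge (\delta/2)\ts ds}
    \;\le\;
    \binom{ds}{(\delta/2)ds}\pth{\frac{2ds}{n}}^{(\delta/2)ds}
    \;\le\;
    \pth{\frac{4e\ts ds}{\delta n}}^{(\delta/2)ds}.
\end{equation*}
This has the same shape $\bigl(\text{const}\cdot ds/n\bigr)^{\Theta(\delta d)\ts s}$ as the summand in the paper's proof, so the paper's two-regime summation (small $s\le 3\log_2 n$ using $n\ge d^2$, and $3\log_2 n\le s\le c_\delta n/d$) goes through essentially verbatim and produces constants of the claimed form. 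With that replacement your argument is complete; the remaining pieces --- the slot-counting identity, the application of \clmref{in-edges} with $\eta=\delta/4$ and $\eps=c_\delta/d$ (whose hypotheses indeed hold for $d\ge e^{32/\delta}$), and the budget $2\cdot(\delta/4)+\delta/2=\delta$ --- all check out.
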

\begin{proof}
    Fix $\eta=\delta/2$.  Define $\EventB$ to be the event, that for
    all subsets $S \subset \VV$ of cardinality at most $\eps n$, we
    have that
    \begin{equation*}
        |E(S,S)|\leq \eta d
        |S|.
    \end{equation*}
    By \clmref{in-edges}, $\Pr[\EventB]\ge 1-O(n^{-0.4})$.  So, fix
    $S\subset \VV$ of cardinality $s\leq \eps n$.  The following
    stochastic process samples a random bijection $\pi$ on $V$
    uniformly.
    \begin{compactenumi}[itemsep=0pt,ref=\roman*]
        \NotSubmitVer{\smallskip}%
        \item Order the vertices of $S$ arbitrarily $v_1,\ldots, v_s$.
        
        \item For $i=1, \ldots, s$, pick $\pi(v_i)$ uniformly at
        random from $V\setminus \{\pi(v_1),\ldots, \pi(v_{i-1})\}$.

        \item Let $U = S\cap \pi(S)$. Reorder the elements of $S$ as
        $u_1,\ldots,u_s$, such that $U =\{u_{\ell+1},\ldots, u_s\}$.
        
        \item For $i =1,\ldots, \ell$, pick $\pi^{-1}(u_i)$ uniformly
        at random from
        $V\setminus (S \cup \{\pi^{-1}(u_1),\ldots,
        \pi^{-1}(u_{i-1})\})$.

        \item \itemlab{so:far} Denote by $\pi'$ the partial injection
        constructed so far.
        
        \item \itemlab{nothing} Note that, for $i=\ell+1, \ldots, s$,
        the element $\pi^{-1}(u_i)$ is already fixed, and it is in
        $S$, as such we do nothing.

        \item Complete $\pi'$ to a full permutation by sampling a
        random bijection
        $(V\setminus (S\cup \pi'(S))) \mapsto (V\setminus (S\cup
        \pi'(S)))$
    \end{compactenumi}
    \NotSubmitVer{\medskip}%
    One can sample random regular graph $\G$ from $\Gnd$ by applying
    the above independently $d/2$ times, and constructing $d/2$ random
    permutations $\pi_1, \ldots \pi_{d/2}$.

    We next do for a fixed $S\subset \VV$ of cardinality
    $s\leq \eps n$ the following thought experiment.  In the
    construction of each of the $d/2$ permutation, we replace step
    \itempref{nothing} above with the following step generating a new
    partial injection $\tau$ into $V$, as follows:
    \begin{compactenumi}
        \NotSubmitVer{\smallskip}%
        \item[\quad\quad(\itemref{nothing}$'$)] \itemlab{nothing:2} %
        For $i = \ell+1, \ldots, s$, pick $\tau^{-1}(u_i)$ randomly
        and uniformly from %
        \begin{equation*}
            \VV\setminus (S \cup
            \{\pi^{-1}(u_1),\ldots, \pi^{-1}(u_{\ell}) ,
            \tau^{-1}(u_{\ell+1}),\ldots, \tau^{-1}(u_{i-1})\}).            
        \end{equation*}
    \end{compactenumi}
    \NotSubmitVer{\medskip}%
    Denote by $\tau_i=\tau_i^S$ that random partial injection.  Denote
    \begin{equation*}
        F_S=\Set{\bigr.\smash{\{u,{\tau^{-1}_i(u)}\}}}{ i\in\IRX{d/2}, u\in S\cap
           \pi_i(S)}
    \end{equation*}
    the set of auxiliary edges associated with the above process for
    $S$.  Denote $\GS=(V,E\cup F_S)$ -- it is the result of
    redirecting all the edges of $E(S,S)$ to be outside the
    $S\times S$ biclique.  The graph $\GS$ is randomly generated, but
    the content of $F_S$ is not measurable in $\Gnd$ -- meaning that
    the underline probability spaces is more refined than $\Gnd$.  We
    denote by $\wGnd$ a probability space from which one can sample
    all of $\GS$. I.e., it contains $d/2$ random permutations
    $\pi_1,\ldots,\pi_{d/2}$ that constitute $\Gnd$ and all the
    relevant partial injections $\tau^S_i$.

    Note that $|F_S|=|E(S,S)|$, and therefore depends only on $\Gnd$.
    Furthermore, If $\G \in \EventB$, then for every $S\subset \VV$,
    with cardinality at most $\eps n$, we have that
    $|F_S| \leq \eta d |S|$.  Hence, conditioned on $\EventB$, for any
    such $S$,
    \begin{equation}
        \eqlab{|G|<|GS|}%
        |\Gamma_G(S)|
        \geq%
        |\Gamma_{\GS}(S)| -\eta d |S|.
    \end{equation}

    We next bound for a fixed $S\subset \VV$ of cardinality
    $s\leq \eps n$, the probability that there exists a subset $T$ of
    cardinality $t$ for which $\Gamma_G(S)\subseteq T$.  Begin by
    fixing $S\subseteq T\subseteq \VV$ of cardinality $|T|=t$.  We
    claim that
    \begin{equation}
        \eqlab{pi,pi^-1}
        \Prob{\Bigl. \pi(S)\cup\pi{}^{-1}(S)\cup
           \tau^{-1}(S\cap\pi(S))\subseteq T}%
        \leq %
        \pth{\frac{t}{n}}^{2s}.
    \end{equation}

    Indeed, fix $T'\subset T$ of cardinality $s$.  Then
    \begin{equation*}
        \Prob{\Bigl.\pi(S)=T'}\leq \binom{n}{s}^{-1}.        
    \end{equation*}
    Fix $T''\subset \VV \setminus S$ of cardinality $s$.  Recall that
    we denoted by $\pi'\subset \pi$ the partial injection sampled at
    the end of step \itempref{so:far} in the sampling process
    described above. Observe that conditioned on $\pi(S)=T'$ the
    partial injection $\pi'^{-1}\cup \tau^{-1}:S\to \VV \setminus S$
    is also uniformly random, and therefore
    \begin{equation*}
        \ProbCond{ (\pi'{}^{-1} \cup \tau^{-1})(S)=T'' 
        }{\pi(S)=T'}%
        \leq%
        \binom{n-s}{s}^{-1}.            
    \end{equation*}
    Hence,
    \begin{align*}
      &\hspace{-0.3cm}%
        \Prob{\pi(S)\cup\pi^{-1}(S)\cup \tau^{-1}(S\cap\pi(S))\subseteq T}
      \\%
      &
        =%
        \Prob{\Bigl.
        \smash{\bigvee\nolimits_{{T'\subset T,\,|T'|=s}}\;
        \bigvee\nolimits_{{T''\subset T\setminus T',\, |T''|=s}}}
        \pi(S)=T' \wedge (\pi'{}^{-1}\cup\tau^{-1})(S)=T''
        }
      \\[0.1cm]
      &%
        \leq 
        \sum\nolimits_{{T'\subset T,\,|T'|=s}}\;
        \sum\nolimits_{T''\subset T\setminus T',\,|T''|=s}
        \Pr \bigl[\pi(S)=T'\bigr] \cdot  \Pr \bigl[(\pi'{}^{-1}\cup\tau^{-1})(S)=T''\,\big|\,
        \pi(S)=T'\bigr]
      \\[0.1cm]
      &%
        =
        \frac{\binom{t}{s}\binom{t-s}{s}}{\binom{n}{s}\binom{n-s}{s}}
        \leq%
        \left(\frac{t}{n}\right)^{s}
        \left(\frac{t-s}{n-s}\right)^{s}
        \leq%
        \left(\frac{t}{n}\right)^{2s},
    \end{align*}
    which complete the proof of \Eqref{pi,pi^-1}.

    The probability on $\{\GS\}_{S\subset \VV}\sim \wGnd$ that there
    exists $S\subset \VV$ of cardinality at most $\eps n$ for which
    $|\Gamma_{\GS}(S)|\leq t$, where $t=(1-\eta)ds$ is at most the
    probability that there exists $\tT\subset \VV \setminus S$ of
    cardinality $t$ for which $\Gamma_{\GS}(S)\subseteq \tT\cup S$.
    By the union bound and~\Eqref{pi,pi^-1} this is at most
    \begin{equation*}
        \sum_{s=1}^{\eps n} \binom ns \binom {n-s}t
        \pth{\pth{\frac
              {t+s}n}^{2d}}^{d/2}.  
    \end{equation*}
    We bound the summand above by
    \begin{align*}
      \binom ns \binom {n-s}t \left(\frac {t+s}n\right)^{sd}
      & \leq \binom ns \binom {n}t \left(\frac {t+s}n\right)^{sd} 
      \\ & \leq \left(\frac{en}s\right)^{s} 
           \left( \frac{en}{(1-\eta)ds}\right)^{(1-\eta)ds}
           \left(\frac{(1-\eta)ds+s}{n}\right)^{ds}
      \\ & \leq \left( e^{(1-\eta)d+1} (1-\eta)d 
           \left( \frac{(1-\eta)d+1}{(1-\eta)d}\right)^d
           \left( \frac{(1-\eta)ds}{n}\right)^{\eta d-1} \right)^s
      \\ & \leq \left( e^d {}d  e^{1/(1-\eta)}
           \left( \frac{ds}{n}\right)^{\eta d-1} \right)^s
      \\ & \leq \left(4 e^d {}d  
           \left( \frac{ds}{n}\right)^{\eta d-1} \right)^s.
    \end{align*}
    
    To bound the above we let $\eps= e^{-1.1/\eta}/d$, and we do a
    case analysis according to the value of $s$.
    
    For $3\log_2 n\leq s\leq e^{-1.1/\eta} n/d$, we have
    \begin{align*}
      \left(4 e^d {}d 
      \left( \frac{ds}{n}\right)^{\eta d-1} \right)^s
      \leq \left( 4e^d {}d  e^{-1.1d}e^{1.1/\eta}\right)^s
      \leq 2^{-s}\leq n^{-2}.
    \end{align*}
    For $1\leq s \leq 3 \log _2 n$, we have
    \begin{align*}
      \left(4 e^d {}d \left( \frac{ds}{n}\right)^{\eta d-1} \right)^s
      &\leq%
        \left(4 e^d {}d \left( \frac{3\log_2 n}{\sqrt{n}}\right)^{\eta
        d-1} \right)^s%
      \\%
      &\leq%
        \left( 4d \left(
        \frac{e^{1/\eta}3\log_2 n}{\sqrt{n}}\right)^{\eta d-1}
        \right)^s \leq \left( d \left(
        \frac{1}{{n}^{6/16}}\right)^{\eta d-1} \right)^s \le
        n^{-2}.
    \end{align*}
    Summing the above over $s=1,\ldots,e^{-1.1/\eta} n/d$, we have
    \begin{equation*}
        \ProbC_{\{\GS\}_{S\subset \VV}\sim  \wGnd} \bigl[
        \forall S\subset \VV, |S|\leq e^{-1.1/\eta} n/d  \implies
        |\Gamma_{\GS}(S)|\ge (1-\eta)d|S| \bigr ]\ge 1-O(n^{-1}).    
    \end{equation*}
    So, conditioned on $\EventB$, by~\Eqref{|G|<|GS|},
    $\forall S\subset \VV, |S|\leq e^{-1.1/\eta} n/d$ we have
    \begin{equation*}
        |\Gamma_G(S)|\ge |\Gamma_{\GS}(S)|-\eta d |S|.    
    \end{equation*}
    
    Since $\Pr_{\Gnd}[\EventB]\ge 1-O(n^{-0.4})$, we conclude
    \begin{equation*}
        \ProbC_{\G\sim \Gnd}
        \Bigl[{
           \forall S\subset \VV, |S|\leq
           e^{-1.1/\eta} n/d%
           \implies%
           |\Gamma_\G(S)| \geq(1-2\eta)d|S|
        }\Bigr]%
        \geq%
        1-O(n^{-0.4}).    
        \SubmitVer{\qedhere}%
    \end{equation*}
\end{proof}

\subsection{Summary}

\RestatementOf{\thmref{proper:expander}}{\ThmProperExapdnerBody}

\begin{proof}
    Property \itemref{p:1} is proved in \lemref{p:1}.  Property
    \itemref{p:2} is proved in \lemref{p:2}.  Property \itemref{p:3}
    follows from \thmref{p:3}.
\end{proof}

\end{document}